\documentclass[letterpaper,journal]{IEEEtran}

\usepackage[utf8]{inputenc}
\usepackage{cite}
\usepackage{url}
\usepackage{graphicx}
\usepackage{tikz}
\usepackage{pgf}
\usepackage{xcolor}

\usepackage{amsmath}
\usepackage{amsfonts}
\usepackage{amssymb}
\usepackage{amsthm}
\usepackage{mathtools}
\usepackage{bm}
\usepackage{mathrsfs}
\usepackage{wasysym}
\usepackage{stmaryrd}
\usepackage{bbm}
\usepackage{upgreek}
\usepackage{braket}

\newtheorem{lemma}{Lemma}
\newtheorem{theorem}{Theorem}
\newtheorem{corollary}{Corollary}

\newtheorem{prop}{Proposition}
\newtheorem{defn}{Definition}

\usepackage{multirow}
\usepackage{tabularx}
\usepackage{multicol}
\usepackage{enumerate}
\usepackage{paralist}
\usepackage{subfigure}
\usepackage{wrapfig}
\usepackage{sidecap}
\usepackage{setspace}
\usepackage{lineno}
\usepackage{comment}
\usepackage{soul}
\usepackage{eurosym}

\usepackage{algorithmicx}
\usepackage{algpseudocode}
\usepackage{algorithm}

\usepackage{bibunits}
\usepackage[hidelinks]{hyperref}
\usepackage{cleveref}

\usetikzlibrary{arrows}

\AtBeginDocument{%
    \definecolor{figred}{RGB}{162,0,37}%
    \definecolor{figblue}{RGB}{0,80,239}%
    \definecolor{figyellow}{RGB}{240,163,10}%
    \definecolor{figgray}{RGB}{179,179,179}%
    \definecolor{figslate}{RGB}{100,118,135}%
    \definecolor{figgreen}{RGB}{0,153,0}%
    \definecolor{figorange}{RGB}{255,128,0}%
    \definecolor{figmagenta}{RGB}{255,0,255}%
    \definecolor{RED}{rgb}{1,0,0}%
}

\newcommand{\red}[1]{\textcolor{red}{#1}}

\DeclareRobustCommand{\vnode}[1]{%
    \tikz[baseline=-0.6ex]
    \draw[black,fill=#1,line width=0.4pt]
    (0,0) circle (0.65ex);
}

\DeclareRobustCommand{\cnode}[1]{%
    \tikz[baseline=-0.6ex]
    \draw[black,fill=#1,line width=0.4pt]
    (-0.6ex,-0.6ex) rectangle (0.6ex,0.6ex);
}

\title{
Temporal Degeneracy Limits Message Passing
Decoding of QLDPC Codes Under Repeated Noisy Syndrome
Measurements
}
\author{Dimitris~Chytas,~\IEEEmembership{Member,~IEEE,} Nithin Raveendran,~\IEEEmembership{Member,~IEEE,} and Bane~Vasi\'{c},~\IEEEmembership{Fellow,~IEEE}\\
\IEEEauthorblockA{Department of Electrical and Computer Engineering The University of Arizona, Tucson, AZ 85721, USA}
Email: dchytas@arizona.edu, nithin@arizona.edu, vasic@ece.arizona.edu
\thanks{}
}
\date{June 2025}

\begin{document}

\maketitle
\begin{abstract}
Practical quantum computations can require logical error rates of $10^{-12}$ or lower. Quantum low-density parity-check (QLDPC) codes
are promising candidates for reaching such rates with low qubit overhead, provided that their syndromes can be decoded efficiently. Repeated noisy syndrome measurements introduce temporal degeneracy into the space-time decoding graph. We characterize the temporal kernel, whose vectors have zero total data parity, and show that it is generated by elementary temporal codewords of weight $d_v+2$, where $d_v$ is the variable-node degree. For $(d_v,d_c)$-regular
parity-check matrices with fixed degrees $d_v,d_c>2$ and girth greater
than four, we derive counting bounds and construct degenerate error
patterns of weight $(d_v+2)/2$ for even $d_v$ and $d_v+1$ for odd
$d_v$, with multiplicity $\Theta(nr)$ in the blocklength $n$ and
number of measurement rounds $r$. Under the isolation assumption, these patterns cause local failures
of deterministic isotropic message-passing decoders under a flooding
schedule. These failures depend on the node degrees rather than
the minimum distance; for the bivariate bicycle (BB) codes considered, their weight is equal to four. Monte carlo simulations and fixed-weight importance sampling quantify harmful low-weight degenerate error patterns for the min-sum (MS) decoder beyond the isolation assumption. Motivated by the identified
symmetries, we propose space-time anisotropic min-sum decoding with
syndrome-guided realizations. Its complexity is linear in the number
of Tanner graph edges for fixed maximum number of iterations and  number of decoders.
Numerical results on BB codes show threshold-like behavior across
the tested code sizes and improved failure probabilities relative
to MS decoding with ordered statistics decoding post-processing.
\end{abstract}
\begin{IEEEkeywords}
stabilizer codes, iterative decoding, degeneracy, symmetric stabilizers, noisy syndrome measurements, phenomenological noise model, graph automorphisms, decoding limitations.
\end{IEEEkeywords}
\section{Introduction}
\IEEEPARstart{L}arge-scale quantum computations can require logical error rates of $10^{-12}$ or lower. A resource estimation study ~\cite{beverland2022assessing} of three applications, namely quantum dynamics, quantum chemistry and factoring, found that each logical qubit must fail with probability below about $10^{-11}$, $3\times10^{-17}$ and $4\times10^{-16}$, respectively~\cite{beverland2022assessing}, and a logical error rate of $10^{-12}$ per logical qubit per round, the so-called \emph{teraquop} regime, is commonly adopted as a benchmark target~\cite{gidney2021honeycomb}. Superconducting processors have demonstrated surface-code memories operating below threshold~\cite{google2025below}.
Other platforms have also demonstrated high-fidelity entangling operations, with reported two-qubit gate errors of approximately
$8\times10^{-5}$ in trapped ions~\cite{hughes2025trapped} and $1.5\times10^{-3}$ in neutral atoms~\cite{evered2026entangling}. Bridging this gap is the task of quantum error correction (QEC)~\cite{quantumGF4,shor1997faulttolerantquantumcomputation,Gottesman97}, and it must be accomplished by a classical decoder that keeps pace with the hardware, since universal quantum computation demands real-time decoding for non-Clifford gates~\cite{PhysRevX.11.041058}. Errors in quantum systems can be modeled at different levels of realism: the code-capacity model considers only errors on data qubits, the phenomenological model additionally accounts for faulty measurements, and the circuit-level model incorporates errors on data qubits, ancilla qubits, and gates~\cite{vasic2025quantumlowdensityparitycheckcodes}.

Quantum low-density parity-check (QLDPC)  codes have gained attention because they can simultaneously achieve constant rate and linear minimum distance~\cite{panteleev2021quantumLinearMinDLocalTestable, QuantumTannerCodeszemor,vasic2025quantumlowdensityparitycheckcodes}. Families of non-topological QLDPC codes, such as bivariate bicycle (BB) codes, have been shown to outperform the surface code in terms of qubit overhead~\cite{bravyi2024high}, albeit at the cost of increased qubit connectivity. However, decoding remains challenging for QLDPC codes. Unlike classical error correction, where the error estimate must match the induced error exactly, in QEC it suffices to match the induced error up to a stabilizer, so multiple error patterns that differ by a stabilizer are equally valid corrections for a given syndrome~\cite{Gottesman97}. This phenomenon, known as \textit{degeneracy}, 
becomes a bottleneck under iterative decoding~\cite{Poulin_2008,quantumTS} when the minimum distance increases relative to stabilizer weight. 
In the code-capacity model, degenerate error patterns are supported on stabilizers, and their effect can be mitigated by iterative decoders, such as belief propagation (BP) or min-sum (MS), followed by post-processing, e.g., ordered statistics decoding (OSD)~\cite{osd,roffe_decoding_2020}, localized statistics decoding~\cite{Hillmann_2025}, guided decimation~\cite{decim}, or beam search-based decoding~\cite{ye2025beamsearchdecoderquantum,taghipour2026multistagerewindingdecoderqldpc}.
More recently, techniques that introduce anisotropic update rules have been proposed to overcome degeneracy, achieving performance close to that of OSD in linear time~\cite{chytasminsum,chytas2026edgebasedanisotropicdecodinggeneralized}. Related low-complexity decoding approaches have also been demonstrated with more realistic noise models; for example, the Relay-BP decoder introduces disordered memory strengths to break symmetries under circuit-level noise~\cite{müller2025improvedbeliefpropagationsufficient}. Finally, decoders that exploit code symmetries to mitigate degeneracy have also been proposed~\cite{symbreak,koutsioumpas2025,pradhan2026fastfaulttolerantdecodershypergraph}.

Decoding performance also depends on the syndrome information available to the decoder. In the code-capacity model, syndrome measurements are assumed to be perfect. In realistic systems, however, syndrome measurements can be faulty, so decoding must account for measurement errors in addition to errors on the data qubits. Two approaches are commonly used to address measurement errors. One possibility is \emph{single-shot decoding}, where redundancy in the measured syndrome allows data and measurement errors to be corrected using a single noisy measurement round~\cite{bombin,kubica,pryadko_phenom,jacob2026singleshotdecodingfaulttolerantgates}. Achieving this property generally requires sufficient stabilizer redundancy, syndrome distance, or confinement. Such properties can be engineered in several QLDPC constructions, including recent generalized and BB codes with redundant stabilizer measurements~\cite{pryadko_phenom,rowshan2026singleshotfewshotdecodingstabilizer}. However, they are not generic properties of QLDPC codes.

A more general alternative is to perform \emph{repeated syndrome measurements}~\cite{shor1997faulttolerantquantumcomputation,Dennis_2002}. In this setting, multiple noisy syndrome rounds are performed to suppress measurement errors and infer a consistent space-time error configuration. Multi-round processing can also improve the effective distance and reduce the logical error rate per round relative to single-shot decoding, and has motivated sliding-window and temporally parallel decoding strategies aimed at limiting decoding latency~\cite{PhysRevA.110.012453,Skoric_2023}.
Formally, the phenomenological noise model treats data and syndrome errors as independent stochastic processes while idealizing the recovery operation itself. 
This abstraction captures important features of fault-tolerant operation without requiring a full
circuit-level description of gate and ancilla faults~\cite{vasic2025quantumlowdensityparitycheckcodes,ImprovedDecodingSoftInfo_2021}.
Decoding in this model amounts to estimating a consistent space-time error configuration from a sequence of noisy syndrome measurements and applying a correction that does not induce a logical error~\cite{kuophenom}. Similar decoding formulations also arise when stabilizer generators are measured according to nonuniform or partial schedules, where only subsets of checks are available in each round~\cite{Berthusen_2024}.

For a quantum memory protected by repeated syndrome measurements, the relevant figure of merit is the logical error rate, which determines the expected number of rounds before a logical failure occurs~\cite{kuophenom}. Crucially, under iterative decoding, for low error rate regimes, the logical error rate is dominated by specific low-weight error patterns. In classical error correction, such errors are supported on locally dense subgraphs called trapping sets~\cite{ontology}. Trapping-set analysis has also been extended to circuit-level detector error models~\cite{pacenti2026trappingsetsdetectorerror}. In QEC, constant-weight degenerate error patterns  stem from the orthogonality constraints  of the code and persist independently of its distance~\cite{quantumTS}; in this work, we refer to them as spatial degenerate error patterns. Under the code-capacity model, BP-based decoders can achieve good performance for QLDPC codes, often in linear time when appropriately
modified to account for spatial degeneracy~\cite{chytasminsum}. However, optimizing a decoder for the code-capacity model alone does not guarantee comparable performance under phenomenological noise.

The temporal configurations considered here are imposed by repeated noisy syndrome measurements and arise independently of the specific quantum code construction. Sparsity makes message-passing decoding attractive for QLDPC codes, but the resulting space-time graph contains local symmetries that deterministic isotropic decoders preserve under flooding, symmetric initialization, and the isolation assumption. Under the degree and girth assumptions considered here, these symmetries create local decoding failures that cannot be removed by increasing the code distance. This motivates anisotropic update rules that break the relevant symmetries while retaining low-complexity message passing.

Three features make this limitation severe. First, the weight of the failing patterns is fixed by the variable-node degree $d_v$, which is a small constant for any LDPC code, and not by the minimum distance $d$. For the $[[144,12,12]]$ and $[[288,12,18]]$ BB codes, $d_v=3$ and isotropic decoders fail on weight-$4$ patterns, whereas the distances of these codes would allow up to $5$ and $8$ errors, respectively, to be corrected. Second, the number of such patterns grows linearly with both the blocklength $n$ and the number of measurement rounds $r$. Longer codes and more measurement rounds, which is precisely the regime needed to reach low logical error rates since the number of rounds typically grows with the distance, therefore increase the number of harmful patterns instead of suppressing them. Third, the phenomenon depends only on the degrees and the girth of the Tanner graph and not on the algebraic construction of the code, so it is shared by all $(d_v,d_c)$-regular QLDPC codes of girth larger than four. In Section~\ref{sec:secvi}, full-monte carlo simulations and fixed-weight importance sampling
tests confirm that these failures are relevant in practice and examine  remaining low-weight failures of MS-based decoders.

Related spatial-temporal  structures have appeared in previous phenomenological and fault-tolerant decoding formulations~\cite{Dumer2014,pryadko_2020,pryadko_phenom,tsubouchi2026degeneracycuttinglocalefficient}. In particular, the detector-degeneracy formulation of~\cite{tsubouchi2026degeneracycuttinglocalefficient} extends stabilizer-induced degeneracy to phenomenological and circuit-level decoding graphs. The repeated degeneracy-generating matrices in~\cite{pryadko_phenom} and~\cite{tsubouchi2026degeneracycuttinglocalefficient} contain the elementary temporal generators considered here. Building on these representations, we characterize the combinatorial structure of the temporal kernel, derive counting bounds for configurations supported on minimum even-weight temporal codewords, and identify symmetry-induced failures of isotropic flooding decoders under the stated assumptions.

Our analysis starts from the base matrix of the phenomenological noise model. Using the notion of minimal vectors~\cite{705584}, we characterize its kernel and extend the analysis to a single data variable and its neighboring checks across repeated measurement rounds. The resulting elementary temporal codeword generators have weight $(d_v+2)$, and their embedded copies generate the full temporal kernel. Temporal codewords have zero total data parity across rounds and therefore relate error patterns with the same overall data error.

For $(d_v,d_c)$-regular parity-check matrices with fixed degrees $d_v,d_c>2$ and girth greater than four, we derive decoder-agnostic upper bounds on the number of distinct locally minimum half-weight error patterns supported on minimum even-weight temporal codewords with nonzero measurement parts. Under symmetric initialization and the isolation assumption, we identify a subset of these constant-weight patterns whose multiplicity grows linearly with the blocklength and the number of measurement rounds, and which cause local decoding failures for deterministic syndrome-based isotropic decoders under a flooding schedule. Motivated by this analysis, we propose space-time anisotropic min-sum decoding augmented with syndrome-guided realizations. For fixed iteration and ensemble budgets, its complexity is linear in the number of Tanner graph edges. Numerical results show threshold-like behavior for the BB codes considered under one noisy measurement round followed by an ideal one, together with improved performance relative to MS followed by OSD.

The rest of the paper is organized as follows. In Section \ref{sec:pre}, we review QEC preliminaries and the phenomenological noise model, focusing on the concept of temporal degeneracy. In Section \ref{sec:sectemporal}, we analyze the base matrix of the phenomenological noise model and extend the analysis to the case in which a single variable node and its neighboring checks are measured repeatedly.  In the final part of the same section, we derive upper bounds on the number of locally minimum half-weight temporal degenerate error patterns. In Section~\ref{sec:secv}, we provide lower bounds on the number of minimum-weight harmful temporal degenerate error patterns for isotropic decoders and we propose the idea of anisotropic decoding. Finally, in Section~\ref{sec:secvi}, we provide numerical results for isotropic and anisotropic decoders on a family of QLDPC codes under phenomenological noise.

\section{Preliminaries}
\label{sec:pre}
\subsection{Stabilizer formalism}
Let $\mathbb{F}_2^n$ denote the set of binary vectors of length $n$. The \textit{Hamming weight} of an element in $\mathbb{F}_2^n$ is defined as the number of its non-zero entries. An $[n, k, d]$ linear code $C \subset \mathbb{F}_2^n$ is a linear subspace of $\mathbb{F}_2^n$ spanned by $k$ linearly independent vectors, whose nonzero codewords have minimum Hamming weight $d$. A code $C$ can also be represented by a parity-check matrix $\mathbf{H}$, where $C = \ker(\mathbf{H})$. For simplicity, we consider regular parity-check matrices, with column weight $d_v$ and row weight $d_c$.

Let $P_n$ be the $n$-qubit Pauli group. A \textit{stabilizer} group is an Abelian subgroup $S \subset P_n$ not containing $-I$. An $\llbracket n,k,d \rrbracket$ stabilizer code is a $2^k$-dimensional subspace $\mathcal{C}$ of the Hilbert space $(\mathbb{C}^2)^{\otimes n}$ such that 
\[
s_i\ket{\Psi} = \ket{\Psi}, \quad \forall\ s_i \in S, \ \ket{\Psi} \in \mathcal{C}.
\]
A $\llbracket n, k_X - k_Z^{\perp}, d \rrbracket$ CSS code is a stabilizer code constructed from two classical codes $C_X = \ker(\mathbf{H}_X)$ and $C_Z = \ker(\mathbf{H}_Z)$, satisfying $C_Z^{\perp} \subset C_X$ and $C_X^{\perp} \subset C_Z$~\cite{calderbank1996quantum_exists}. The minimum distance $d = \min\{d_X, d_Z\}$ is determined by the minimum Hamming weights of codewords in $C_X \setminus C_Z^{\perp}$ and $C_Z \setminus C_X^{\perp}$. For more details, see~\cite{Gottesman97}.

CSS codes allow $X$ and $Z$ errors, when assumed independent, to be decoded separately as binary problems: $X$ errors are decoded using $\mathbf{H}_Z$, and $Z$ errors are decoded using $\mathbf{H}_X$. Thus, the channel can be modeled as two independent binary symmetric channels (BSCs)~\cite{mackay_quantum}. Let $\mathbf{e}=[\mathbf{e}_X, \mathbf{e}_Z]$ denote the binary representation
of a Pauli error acting on the $n$ qubits. The corresponding input syndromes are obtained as $\mathbf{s}_Z=   \mathbf{H}_Z\mathbf{e}_X$
and $\mathbf{s}_X=   \mathbf H_X\mathbf{e}_Z$, or, in  compact form, $\mathbf{s}=[\mathbf{s}_X, \mathbf{s}_Z]$. This model, in which errors occur only on data qubits and syndrome measurements are noiseless, is referred to as the code-capacity model~\cite{vasic2025quantumlowdensityparitycheckcodes}. 

A zero syndrome indicates that no error has been detected, whereas a nonzero syndrome indicates the presence of errors. Syndrome-based decoders estimate $\hat{\mathbf{e}}$ whose syndrome matches the observed one, i.e., $\mathbf{\hat{s}} = \mathbf{s}$. Considering $X$ errors  without loss of generality, decoding succeeds if the error is recovered up to a stabilizer, i.e., 
\begin{equation}
    \hat{\mathbf{e}}_X= \mathbf{e}_X \oplus \mathbf{h}, \mathbf{h} \in \text{rowspace}(\mathbf{H}_X),
    \label{eq:spatialDegeneracy}
\end{equation}
whereas failure occurs if the syndrome does not match or if a logical error occurs, i.e., $\mathbf{e}_X \oplus \hat{\mathbf{e}}_X$ commutes with all stabilizers but is not in the rowspace of $\mathbf{H}_X$. 

Eq.~(\ref{eq:spatialDegeneracy}) gives rise to the phenomenon of \emph{degeneracy}, in which two error patterns that differ by an element of the rowspace of $\mathbf{H}_X$ produce the same syndrome. Degeneracy is particularly harmful for iterative decoding when the two equivalent error patterns have the same weight and are supported on symmetric stabilizers. A stabilizer is symmetric when its set of variable  nodes forms an induced
subgraph with no odd-degree check nodes and can be partitioned into an even number of disjoint subsets such
that: (i) the subgraphs induced by each subset of variable nodes are isomorphic, and (ii) each subset shares the
same set of neighboring odd-degree check nodes~\cite{quantumTS,vasic2025quantumlowdensityparitycheckcodes}. A typical example of a symmetric stabilizer~\cite{10857350,chytasminsum} is illustrated in Fig.~\ref{fig:symmetricStab}. A quantum code is called degenerate if it has stabilizers whose weight is less than its minimum distance. For QLDPC codes, as the minimum distance grows, the impact of degeneracy becomes more prominent~\cite{mackay_quantum,bravyi2024high,PhysRevA.109.022407}.

The $\mathbf{H}_X$ and $\mathbf{H}_Z$ stabilizer matrices can each be represented as a \textit{Tanner graph}, a bipartite graph whose two node sets are the set of $n$ variable nodes $\mathcal{V}$ and the set of $m$ check nodes $\mathcal{C}$. 
For $v \in \mathcal V, c \in \mathcal C$, if the $(c,v)$-th entry of the binary parity-check matrix is nonzero, variable node $v$ and check node $c$ are connected by an edge and are called neighbors.
We denote by $\mathcal{M}(v)$ the set of indices of the neighboring check nodes of the variable node $v$, and by $\mathcal{N}(c)$ the set of indices of the neighboring variable nodes of the check node $c$. Throughout the paper, we use the shorthand notation $[n]$ for the set $\{1,\dots,n\}$.

\begin{figure}[htbp]
    \centering
\includegraphics[width=0.5\linewidth]{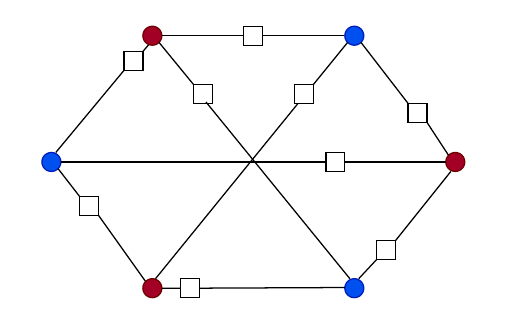}
    \caption[A symmetric stabilizer formed by two sets of variable nodes colored red and blue, respectively.]{A symmetric stabilizer formed by two sets of variable nodes , \vnode{figred} and \vnode{figblue}. Since its induced subgraph is isomorphic to a subdivided $K_{3,3}$ complete bipartite graph, each variable node in one set shares a check node with every variable node in the other set.}
    \label{fig:symmetricStab}
\end{figure}

\subsection{Phenomenological Noise Model}
\label{sec:phenom}
We now describe the phenomenological noise model, in which, in each round of syndrome extraction, an $X$ (or equivalently a $Z$) error is applied to each qubit independently with probability $\alpha$ before the measurements, and each measurement outcome is flipped independently with the same probability~\cite{PhysRevA.110.012453,jacob2026singleshotdecodingfaulttolerantgates}.

We consider $r$ measurement rounds. For each  measurement round $j \in [r]$, the syndrome is given by
\begin{equation}
\mathbf{s}_j=\mathbf{H}_Z(\mathbf{e}_1\oplus\mathbf{e}_2\oplus\dots\oplus\mathbf{e}_j)\oplus\boldsymbol{\epsilon}_j, \boldsymbol{\epsilon}_r=0
\label{eq:naivePhenomeno}
\end{equation}
where $\mathbf{e}_i \in \mathbb{F}_2^{n}, i\in[j]$, are binary vectors representing the data  errors
 on the $n$ qubits 
 at the $i$-th measurement round, and  $\boldsymbol{\epsilon}_j$ is a binary vector representing the measurement errors, modeled as syndrome flips, in measurement round $j$. Conventionally, the last measurement round is assumed to be perfect~\cite{pryadko_phenom}. The parity-check matrix described by Eq.~(\ref{eq:naivePhenomeno}) contains degree-$1$ variable nodes and $4$-cycles and therefore is not well suited for iterative decoding~\cite{kuophenom}. 
By taking the differences between the syndromes of consecutive measurement rounds, errors are detected locally:
\begin{equation}
\boldsymbol{\sigma}_j = \mathbf{s}_j \oplus \mathbf{s}_{j-1} =  \mathbf{H}_Z\mathbf{e}_j \oplus (\boldsymbol{\epsilon}_j \oplus \boldsymbol{\epsilon}_{j-1}) =  \mathbf{H}_Z\mathbf{e}_j \oplus \tilde{\boldsymbol{\epsilon}}_j,
\end{equation}
where $\boldsymbol{\epsilon}_0=\mathbf0$, $\mathbf{s}_0=\mathbf 0$  $j \in [r]$ and $\tilde{\boldsymbol{\epsilon}}_j$ corresponds to the difference between measurement errors in two consecutive rounds. The detectors $\boldsymbol{\sigma}_j$ give rise to an equivalent parity-check matrix given by~\cite{pryadko_phenom,kuophenom}:
\begin{equation}
\setlength{\arraycolsep}{2pt}
{
\mathbf{H}' = 
\left[
\begin{array}{c|c}
\begin{array}{cccc}
\mathbf{I}_m \\
\mathbf{I}_m & \mathbf{I}_m \\
& \ddots & \ddots \\
& & \mathbf{I}_m & \mathbf{I}_m\\
& & & \mathbf{I}_m
\end{array}
&
\begin{array}{ccccc}
\mathbf{H}_Z \\
& \mathbf{H}_Z \\
& & \ddots \\
& & & \mathbf{H}_Z \\
& & & & \mathbf{H}_Z
\end{array}
\end{array}
\right],}
\label{eq:phenomPCM}
\end{equation}
with $\mathbf{H}' \in \mathbb{F}_2^{mr \times [m(r-1)+nr]}$, and $M=mr$, $N=m(r-1)+nr$. After exchanging $X$ and $Z$ and consistently reordering the columns, $\mathbf H'$ coincides with the phenomenological detector
matrix of~\cite{pryadko_phenom}. The corresponding degeneracy-generating matrix is orthogonal to $\mathbf H'$; its rows represent the elementary
temporal codewords analyzed below and spatial stabilizers embedded in individual rounds. We refer to the Tanner graph corresponding to $\mathbf{H}'$ as the \emph{space-time Tanner graph}. We also refer to the left and right blocks of $\mathbf{H}'$ as the measurement part and the data part, respectively. The variable nodes of each part are referred to as measurement and data variable nodes, respectively. 

Now consider $\mathbf{D}\in\ker(\mathbf{H}')$, i.e.,
\[
\mathbf{H}'\mathbf{D}=\mathbf{0}.
\]
We call $\mathbf{D}$ a \emph{spatial-temporal vector} and  represent it as \[\mathbf{D} = (\mathbf{D}_{1},\dots,\mathbf{D}_{r-1}\;|\;\mathbf{D}_{r},\dots,\mathbf{D}_{2r-1} ) \in \mathbb{F}_2^{N},\]
where the left part corresponds to the measurement part and the right part to the data part of the vector. Each block of the measurement part has length $m$, and each block of the data part has length $n$.

\begin{defn}
For $r$ measurement rounds, we define a spatial-temporal codeword as $\mathbf{D} \in \operatorname{ker}(\mathbf{H}')$ that satisfies $\sum_{i=r}^{i=2r-1}\mathbf{D}_i=\mathbf{h} \pmod{2}, \mathbf{h} \in \text{rowspace}(\mathbf{H}_X)$. 
If $\mathbf{h}=\mathbf{0}$, we call $\mathbf{D}$ a temporal codeword; if $\mathbf{h}\neq \mathbf{0}$ and $\mathbf{D}$ has a nonzero measurement part, we call it a mixed spatial-temporal codeword.
\label{definition:definitionSpatialTemporal}
\end{defn}

Let us decompose a spatial-temporal codeword $\mathbf{D}$ into two vectors with disjoint supports:
\begin{equation}
    \mathbf{D}=\mathbf{E}\oplus\mathbf{E}' ,
\end{equation}
where 
\begin{equation}
\begin{split}
\mathbf{E} = (\mathbf{E}_1, \dots, \mathbf{E}_{r-1} \;|\; \mathbf{E}_{r}, \dots, \mathbf{E}_{2r-1}) \in \mathbb{F}_2^{N} \\
\mathbf{E}' = (\mathbf{E}'_1, \dots, \mathbf{E}'_{r-1} \;|\; \mathbf{E}'_{r}, \dots, \mathbf{E}'_{2r-1}) \in \mathbb{F}_2^{N}.
\end{split}
\end{equation}
Since $\mathbf{H}'\mathbf{D}=\mathbf{0}$, $\mathbf{E}$ and $\mathbf{E}'$ produce the same syndrome:
\begin{equation}
\mathbf{H}'\mathbf{E}=\mathbf{H}'\mathbf{E'}.
\end{equation}
The goal of the decoder is to estimate the total data error up to a stabilizer:
\begin{equation}
\mathbf{e}'_{\rm tot} = \mathbf{e}_{\rm tot} \oplus h, \qquad h \in \operatorname{rowspace}(\mathbf{H}_X),
\label{eq:codespace}
\end{equation}
where $\mathbf{e}_{\rm tot},\mathbf{e}'_{\rm tot} \in \mathbb{F}_2^{n}$ denote the overall data error and its estimate~\cite{PhysRevA.110.012453}, obtained as:
\begin{equation}
\mathbf{e}_{\rm tot} := \sum_{i=r}^{2r-1} \mathbf{E}_i, \pmod{2}, \qquad
\mathbf{e}'_{\rm tot} := \sum_{i=r}^{2r-1} \mathbf{E}'_i\pmod{2},
\end{equation}
with difference
\begin{equation}
\Delta \mathbf{e}_{\rm tot} := \mathbf{e}_{\rm tot} \oplus \mathbf{e}'_{\rm tot} = \sum_{i=r}^{2r-1} \mathbf{D}_i=h.
\label{eq:parity}
\end{equation}
Such an equal-weight decomposition exists only when $w(\mathbf D)$ is even. In that case, any partition of $\operatorname{supp}(\mathbf D)$ into two subsets of size $w(\mathbf D)/2$ gives two disjoint vectors $\mathbf E$ and $\mathbf E'$ of equal weight satisfying $\mathbf D=\mathbf E\oplus \mathbf E'$.
Clearly, spatial-temporal degeneracy can hinder iterative decoding, especially when the distance is large compared to $w(\mathbf{D})$, i.e., $w(\mathbf{D}) \ll d$. 
This is analogous to harmful degenerate errors of quantum codes in the code-capacity model~\cite{quantumTS,osd}. For clarity, in the rest of the paper we refer to degeneracy that stems from the rowspace of $\mathbf{H}_X$ as \emph{spatial degeneracy}. Degeneracy stemming from temporal codewords is referred to as temporal degeneracy, and more generally, degeneracy stemming from spatial-temporal codewords as spatial-temporal degeneracy.
Fig.~\ref{fig:simpleTemporal} illustrates an example of a temporal codeword, the structure of which will become clear in the following sections.

In the rest of this paper, we consider only $X$ errors and their decoding; without loss of generality, the notation $\mathbf{H}$ will refer to $\mathbf{H}_Z$, $h$ will denote an element of the rowspace of $\mathbf{H}_X$, $\mathbf{e}$ will refer to $\mathbf{e}_X$, and $\mathbf{s}$ will refer to $\mathbf{s}_Z$. 
\begin{figure}[htbp]

        \centering
       \includegraphics[trim=17 30 17 31, clip, width=0.3\linewidth]{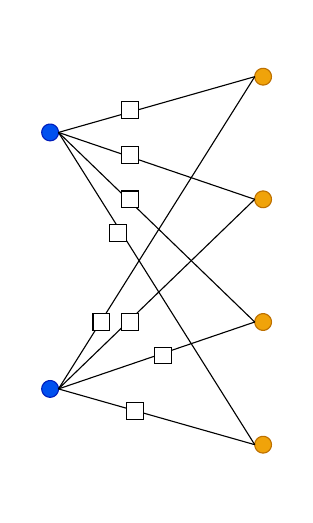}

\caption{The induced subgraph of a temporal codeword formed by a variable node ($d_v=4$) repeated in two measurement rounds.  Observe that the graph is isomorphic to a subdivision of the $K_{2,4}$ complete bipartite graph. Data variable nodes (\vnode{figblue}) and measurement variable nodes (\vnode{figyellow}) form the two partitions.}
    \label{fig:simpleTemporal}
\end{figure}

\section{Evaluation of Temporal Degeneracy}
\label{sec:sectemporal}
The purpose of this section is to identify the error patterns that cause the decoding failures established in Section~\ref{sec:secv}, and to count them. We start from a small base matrix that captures the time structure of $\mathbf{H}'$ independently of the code, extend the analysis to a single variable node repeated in time and then to the full space-time matrix, and finally count the half-weight error patterns supported on the resulting temporal codewords. 

\subsection{Analysis of the base matrix}
\label{sec:proto}
We consider the base matrix of the phenomenological noise model for $r$ measurement rounds, obtained by replacing each nonzero block of Eq.~(\ref{eq:phenomPCM}) with a one:  \begin{equation}
\mathbf{B} = 
\left[
\begin{array}{c|c}
\begin{array}{cccc}
1 \\
1 & 1 \\
& \ddots & \ddots \\
& & 1 & 1\\
&  &  \\
& & & 1
\end{array}
&
\begin{array}{ccccc}
1 \\
& 1 \\
& & \ddots & &\\
&\\
& &  \\
& & & &  1 \\
\end{array}
\end{array}
\right],
\label{eq:BasephenomPCM}
\end{equation}
where $\mathbf{B} \in \mathbb{F}_2^{r \times (2r-1)}$. The first $r-1$ columns correspond to the measurement part and the remaining $r$ columns correspond to the data part.

Let
\[
\mathbf{c}=(\mathbf{z}\mid \mathbf{x}) \in \mathbb{F}_2^{2r-1},
\]
where $\mathbf{z}=(z_1,\dots,z_{r-1})$ denotes the measurement part and $\mathbf{x}=(x_1,\dots,x_r)$ denotes the data part.

Then $\mathbf{c}\in \ker(\mathbf{B})$ if and only if
\begin{equation}
x_1=z_1,\qquad x_t=z_{t-1}\oplus z_t,\ \ 2\le t\le r-1,\qquad x_r=z_{r-1}.
\label{eq:kernelB_recursion}
\end{equation}
Therefore, every choice of the left block $\mathbf{z}\in\mathbb{F}_2^{r-1}$ uniquely determines the right block $\mathbf{x}$, which implies that
\[
\dim\bigl(\ker(\mathbf{B})\bigr)=r-1.
\]
Hence, the total number of nonzero vectors in $\ker(\mathbf{B})$ equals $2^{r-1}-1$.

Eq.~(\ref{eq:kernelB_recursion}) also makes clear how the Hamming weight of a kernel vector is determined. The first $r-1$ entries can take any binary pattern. Let $k$ denote the number of ones among them. For $k=1$, i.e., a single nonzero bit in the left block, there are exactly $r-1$ vectors of total weight $3$, which are the minimum-weight nonzero vectors in $\ker(\mathbf{B})$. For $k=2$, one must distinguish whether the two nonzero bits are consecutive or not. If they are consecutive, they form a single run and trigger only two data bits, producing vectors of weight $4$. If they are non-consecutive, they form two disjoint runs and trigger four data bits, resulting in vectors of weight $6$. Hence, there are $r-2$ vectors of weight $4$ and $\frac{(r-2)(r-3)}{2}$ vectors of weight $6$. The following lemma generalizes this counting.

Lemma~\ref{lemma1} gives the complete weight distribution of $\ker(\mathbf{B})$. It is the starting point of our analysis, since every temporal codeword is built from lifted copies of these vectors, and after lifting it directly yields the weights of all temporal codewords of a single variable node in Eq.~(\ref{eq:enumerateOneVN}).

\begin{lemma}
Let $\mathbf{B}$ be the base matrix obtained for $r$ measurement rounds. Define 
\[
k = 0,1,\dots,r-1
\] 
as the number of ones in the first $r-1$ positions (left block), and 
\[
t = 0,1,\dots,k-1
\] 
as the number of consecutive pairs among these ones. Then the bivariate weight enumerator $W(X,Y)$ of $\ker(\mathbf{B})$ is
\[
W(X,Y) = 1+  \sum_{k=1}^{r-1} \sum_{t=0}^{k-1} \binom{k-1}{t} \binom{r-k}{k-t} \, X^{k} \, Y^{2(k-t)},
\]

where $X$ tracks the weight in the first $r-1$ positions (left weight) and  $Y$ tracks the weight in the next $r$ positions (right weight).
\label{lemma1}
\end{lemma}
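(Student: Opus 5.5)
The proof is a direct counting argument built on the recursion in Eq.~(\ref{eq:kernelB_recursion}). That recursion gives a bijection between $\ker(\mathbf{B})$ and $\mathbb{F}_2^{r-1}$ via $\mathbf{c}\mapsto\mathbf{z}$, so it suffices to compute the right weight $w(\mathbf{x})$ as a function of $\mathbf{z}$. Then we count the $\mathbf{z}$ with a given number of ones $k$ and a given right weight.

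First, pad $\mathbf{z}$ with $z_0=z_r=0$. The recursion then reads uniformly as $x_t=z_{t-1}\oplus z_t$ for all $t\in[r]$. Hence $w(\mathbf{x})$ is the number of positions $t\in[r]$ where the padded sequence $(z_0,\dots,z_r)$ changes value. Because the padded sequence starts and ends with $0$, each maximal run of ones in $\mathbf{z}$ contributes exactly two changes, one at its start and one after its end. This gives $w(\mathbf{x})=2\rho$, where $\rho$ is the number of runs. If $\mathbf{z}$ has $k\ge1$ ones and $t$ consecutive pairs $(z_i,z_{i+1})=(1,1)$, then $\rho=k-t$, since each adjacent pair merges two ones into the same run. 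This accounts for the monomial $X^kY^{2(k-t)}$. The case $k=0$ gives the zero vector, which is the leading term $1$.

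It remains to count the binary strings of length $r-1$ with $k$ ones forming exactly $\rho=k-t$ runs. This is the standard composition count. Ordering the runs left to right, the $k$ ones are split into $\rho$ nonempty blocks in $\binom{k-1}{\rho-1}=\binom{k-1}{t}$ ways. The $r-1-k$ zeros must be placed into $\rho+1$ gaps, with the $\rho-1$ interior gaps nonempty and the two end gaps arbitrary. Removing one zero from each interior gap leaves $r-k-\rho$ zeros to distribute freely among $\rho+1$ gaps. By stars and bars this gives $\binom{(r-k-\rho)+\rho}{\rho}=\binom{r-k}{\rho}=\binom{r-k}{k-t}$ ways. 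Multiplying the two counts and summing over $k\in[r-1]$ and $t\in\{0,\dots,k-1\}$ yields $W(X,Y)$. Terms with $k-t>r-k$ vanish automatically.

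The only real care needed is in the gap-counting step, in particular the boundary gaps and the degenerate cases $\rho=1$ and $k=r-1$. We will check these against the small cases already computed in the text. For $k=1$ the formula gives $r-1$ vectors of weight $3$. For $k=2$ it gives $\binom{1}{1}\binom{r-2}{1}=r-2$ vectors of weight $4$ and $\binom{1}{0}\binom{r-2}{2}=\frac{(r-2)(r-3)}{2}$ vectors of weight $6$, matching the counts above. As a final sanity check, setting $X=Y=1$ should recover the total $2^{r-1}$.
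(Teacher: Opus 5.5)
Your proposal is correct and follows the paper's proof essentially step for step: you use the same decomposition into $k-t$ runs, the composition count $\binom{k-1}{t}$, the stars-and-bars count $\binom{r-k}{k-t}$ after reserving one zero per interior gap, and the observation that each run contributes exactly two data bits. Your padding with $z_0=z_r=0$ and your explicit appeal to the bijection $\mathbf{c}\mapsto\mathbf{z}$ justify two steps that the paper only asserts, and your Vandermonde sanity check recovering $2^{r-1}$ confirms the count.
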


\begin{IEEEproof}
The proof is given in Appendix~\ref{app:lemma1}.
\end{IEEEproof}
We  denote a kernel vector with left weight $k$ and right weight $2(k-t)$ by $\mathbf{c}^{k,2(k-t)}$. We also refer to such vectors as $\mathbf{c} \in \ker{\mathbf{B}}$. Not every kernel vector needs to be analyzed separately. A kernel vector that is not minimal is the sum of kernel vectors with strictly smaller supports, so the elementary objects are the minimal vectors~\cite{705584}. Lemma~\ref{lemma:minimal_xky2} identifies them.
\begin{lemma}
Any $\mathbf{c}^{k,2}$ vector with $1 \le k \le r-1$ is minimal, i.e., its support does not strictly contain the support of any other nonzero vector in $\ker(\mathbf{B})$.
\label{lemma:minimal_xky2}
\end{lemma}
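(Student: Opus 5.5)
The plan is to use the recursion in Eq.~(\ref{eq:kernelB_recursion}) to describe every kernel vector completely in terms of its left block. We then check that no nonzero kernel vector can have support strictly inside that of a $\mathbf{c}^{k,2}$ vector.

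First we identify the structure of $\mathbf{c}^{k,2}$. By Lemma~\ref{lemma1}, right weight $2(k-t)=2$ forces $t=k-1$, so the $k$ ones in $\mathbf{z}$ form a single run of consecutive positions, say $z_a=\dots=z_{a+k-1}=1$ and all other $z_i=0$. Eq.~(\ref{eq:kernelB_recursion}) then gives $x_t=z_{t-1}\oplus z_t$ (with the conventions $z_0=z_r=0$). This value is nonzero exactly at the two run boundaries $t=a$ and $t=a+k$. Hence
\[
\operatorname{supp}(\mathbf{c}^{k,2})=\{z_a,\dots,z_{a+k-1}\}\cup\{x_a,x_{a+k}\}.
\]

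Next, take any nonzero $\mathbf{c}'=(\mathbf{z}'\mid\mathbf{x}')\in\ker(\mathbf{B})$ with $\operatorname{supp}(\mathbf{c}')\subseteq\operatorname{supp}(\mathbf{c}^{k,2})$. We show $\mathbf{c}'=\mathbf{c}^{k,2}$.
\begin{itemize}
\item Since $\mathbf{x}'$ is determined by $\mathbf{z}'$, the condition $\mathbf{c}'\neq\mathbf{0}$ forces $\mathbf{z}'\neq\mathbf{0}$. Thus $\mathbf{z}'$ has some ones, all inside the run $\{a,\dots,a+k-1\}$.
\item The ones of $\mathbf{z}'$ split into maximal runs. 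Each such run produces nonzero data bits at both of its boundaries, by the same computation as above.
\item Suppose some run of $\mathbf{z}'$ did not span the whole interval $[a,a+k-1]$. Then one of its boundaries lies strictly inside $\{a+1,\dots,a+k-1\}$. This gives a nonzero $x'_t$ with $a<t<a+k$, which lies outside $\operatorname{supp}(\mathbf{c}^{k,2})$, a contradiction.
\item Hence $\mathbf{z}'$ equals the full run, and by uniqueness of $\mathbf{x}$ given $\mathbf{z}$ we get $\mathbf{c}'=\mathbf{c}^{k,2}$.
\end{itemize}
This proves minimality.

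An equivalent way to phrase the key step is that $x'_t=z'_{t-1}\oplus z'_t=0$ for all $a<t<a+k$. This forces $z'$ to be constant on the interval $[a,a+k-1]$, and that constant must be $1$ because $\mathbf{z}'\neq\mathbf{0}$. This formulation avoids the case analysis on runs and handles the edge cases $a=1$ and $a+k=r$ uniformly through $z_0=z_r=0$.

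The main, though mild, obstacle is bookkeeping these boundary conventions at the ends of the block, specifically the cases $x_1=z_1$ and $x_r=z_{r-1}$.
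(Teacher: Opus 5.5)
Your proof is correct, but it takes a different route from the paper's.

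The paper does not argue from the definition. It invokes the rank criterion for minimal vectors (Lemma~2.1 of~\cite{705584}): a kernel vector is minimal exactly when the column submatrix of $\mathbf{B}$ indexed by its support has rank one less than the support size. The paper restricts $\mathbf{B}$ to the $k+2$ columns in $\operatorname{supp}(\mathbf{c}^{k,2})$ and notes that only $k+1$ rows are nonzero. It then shows these rows are linearly independent by a triangular elimination: the data column $x_a$ forces the first coefficient to vanish, and the bidiagonal measurement columns propagate this to all the others. Hence the rank is $k+1=|\operatorname{supp}|-1$.

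You instead verify the definition directly. You take any kernel vector supported inside $\operatorname{supp}(\mathbf{c}^{k,2})$ and use $x'_t=z'_{t-1}\oplus z'_t=0$ for $a<t<a+k$. This forces $\mathbf{z}'$ to be constant on the run, hence equal to it.

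Both arguments establish the same underlying fact: the restriction of $\mathbf{B}$ to that support has a one-dimensional kernel. You compute that kernel explicitly from Eq.~(\ref{eq:kernelB_recursion}), while the paper obtains its dimension from a rank computation. Your version is self-contained, needs no external criterion, and handles the end cases $a=1$ and $a+k=r$ uniformly through $z_0=z_r=0$. The paper's version reduces minimality to a mechanical rank check on a small submatrix, in the standard language of the minimal-vector literature.
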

\begin{IEEEproof}
 The proof is given in Appendix~\ref{app:lemma:minimal_xky2}.
\end{IEEEproof} 

Conversely, any kernel vector whose first $(r-1)$ positions contain more than one run of ones is not minimal, since each run defines a nonzero kernel vector with strictly smaller support. Hence,
the minimal vectors are exactly the vectors with $k$ consecutive ones in their first $r-1$ positions, and for each $k$ there are $r-k$ such vectors of weight $k+2$. 
 Clearly, the weight-$3$ vectors are the {minimum-weight minimal vectors of the base matrix, and in the following lemma\red, we  show that they form a basis of $\ker(\mathbf{B})$.

Lemma~\ref{lemma:decompose_xky2} is the key property of the base matrix: the weight-$3$ vectors form a basis of $\ker(\mathbf{B})$. After lifting, it becomes Corollary~\ref{cor:dv2GeneratesALL}, which states that weight-$(d_v+2)$ temporal codewords generate all temporal codewords of a variable node, regardless of the number of rounds.

\begin{lemma}
\label{lemma:decompose_xky2}
Let $\mathbf{c}^{k,2(k-t)}\in \ker(\mathbf{B})$ be a vector with left weight $k$ and right weight $2(k-t)$, so that its total Hamming weight is
\[
k+2(k-t)=3k-2t.
\]
Then $\mathbf{c}^{k,2(k-t)}$ can be expressed as a modulo-$2$ sum of weight-$3$ minimal vectors $\mathbf{c}^{1,2}_i$. Equivalently, the set
\[
\{\mathbf{c}^{1,2}_i\}_{i=1}^{r-1}
\]
forms a basis of $\ker(\mathbf{B})$.
\end{lemma}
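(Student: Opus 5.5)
The plan is to exploit the fact, already established in Eq.~(\ref{eq:kernelB_recursion}), that the linear map $\mathbf{c}=(\mathbf{z}\mid\mathbf{x})\mapsto\mathbf{z}$ restricted to $\ker(\mathbf{B})$ is a bijection onto $\mathbb{F}_2^{r-1}$. Since this map is linear and injective on the kernel, it is an isomorphism $\ker(\mathbf{B})\cong\mathbb{F}_2^{r-1}$. The whole lemma then reduces to identifying the preimages of the standard basis vectors.

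First, for each $i\in[r-1]$, I will let $\mathbf{c}^{1,2}_i$ denote the kernel vector whose left block is the unit vector $\mathbf{u}_i$. By the recursion, its right block has ones exactly at positions $x_i$ and $x_{i+1}$, with $x_1=z_1$ and $x_r=z_{r-1}$ covering the boundary cases. Hence it has weight $3$. These are precisely the $r-1$ weight-$3$ vectors counted before Lemma~\ref{lemma1}, and they are minimal by Lemma~\ref{lemma:minimal_xky2} with $k=1$.

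Second, I will take an arbitrary $\mathbf{c}^{k,2(k-t)}$ with left block $\mathbf{z}=\sum_{i\in S}\mathbf{u}_i$, where $|S|=k$, and set $\mathbf{c}'=\bigoplus_{i\in S}\mathbf{c}^{1,2}_i$. Because $\ker(\mathbf{B})$ is a subspace, $\mathbf{c}'$ lies in the kernel and has left block $\mathbf{z}$. Uniqueness of the right block given the left block then forces $\mathbf{c}'=\mathbf{c}^{k,2(k-t)}$. As a consistency check, I will note that the telescoping $x_t=z_{t-1}\oplus z_t$ makes the data bits shared by consecutive indices in $S$ cancel. Each consecutive pair cancels two contributions, so the right weight is $2k-2t$, matching the stated weight $3k-2t$. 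This check is not needed for the logic.

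Third, the vectors $\mathbf{c}^{1,2}_i$ span $\ker(\mathbf{B})$ by the previous step. They are linearly independent because their left blocks $\mathbf{u}_i$ are independent. Since $\dim\ker(\mathbf{B})=r-1$, they form a basis, which gives the ``equivalently'' formulation.

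No step is a genuine obstacle. The only point requiring care is to invoke uniqueness of $\mathbf{x}$ given $\mathbf{z}$ explicitly, rather than merely verifying weights, and to handle the boundary indices $i=1$ and $i=r-1$ correctly in the recursion.
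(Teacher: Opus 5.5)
Your proposal is correct and follows essentially the same route as the paper. Both arguments get linear independence of the $\mathbf{c}^{1,2}_i$ from their unit left blocks and combine it with $\dim\ker(\mathbf{B})=r-1$ from the recursion; your additional explicit decomposition $\mathbf{c}^{k,2(k-t)}=\bigoplus_{i\in S}\mathbf{c}^{1,2}_i$, with $S$ the support of the left block and justified by uniqueness of $\mathbf{x}$ given $\mathbf{z}$, is a harmless strengthening that the paper leaves implicit in the dimension count.
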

\begin{IEEEproof}
The proof is given in Appendix~\ref{app:lemma:decompose_xky2}.
\end{IEEEproof}

\subsection{One variable node repeated in time}
We now consider the induced subgraph obtained by restricting $\mathbf{H}'$ to a single variable node of $\mathbf{H}$ and its $d_v$ neighboring checks repeated across $r$ rounds. 
Consider the parity-check matrix $\mathbf{H}'$ of Eq.~(\ref{eq:phenomPCM}).  
Let $v\in[n]$ denote a variable node of $\mathbf{H}$. Assume that in $\mathbf{H}'$ the data  columns are ordered round-by-round after the measurement columns. We define the set of data columns corresponding to the $r$ copies of variable node $v$ across consecutive measurement rounds as
\[
\mathcal{D}_r(v)
=
\{\, m(r-1)+v+jn : 0\le j\le r-1 \,\}.
\]
and denote the corresponding data variable nodes by
\[
v^{(1)},v^{(2)},\dots,v^{(r)}.
\]
Let $\mathcal{R}_r(v)$ denote the set of all check nodes adjacent to the variable nodes in $\mathcal{D}_r(v)$, and let $\mathcal{S}_r(v)$ denote the set of degree-$2$ measurement variable nodes adjacent to the check nodes in $\mathcal{R}_r(v)$. The parity-check matrix corresponding to this induced subgraph is the submatrix of $\mathbf{H}'$ induced by rows $\mathcal{R}_r(v)$ and columns $\mathcal{S}_r(v)\cup \mathcal{D}_r(v)$, which we denote by
\[
\mathbf{H}_1^{(r)}(v)
=
\mathbf{H}'\bigl(\mathcal{R}_r(v),\,\mathcal{S}_r(v)\cup \mathcal{D}_r(v)\bigr)
\in
\mathbb{F}_2^{\,d_v r \times \bigl(d_v(r-1)+r\bigr)}.
\]
Since the choice of $v$ is arbitrary, we  omit the dependence on $v$ and simply write $\mathbf{H}_1^{(r)}$.

\begin{figure}[htbp]
     \centering
     \begin{subfigure}[]
     {
        \centering
        \includegraphics[width=0.4\linewidth]{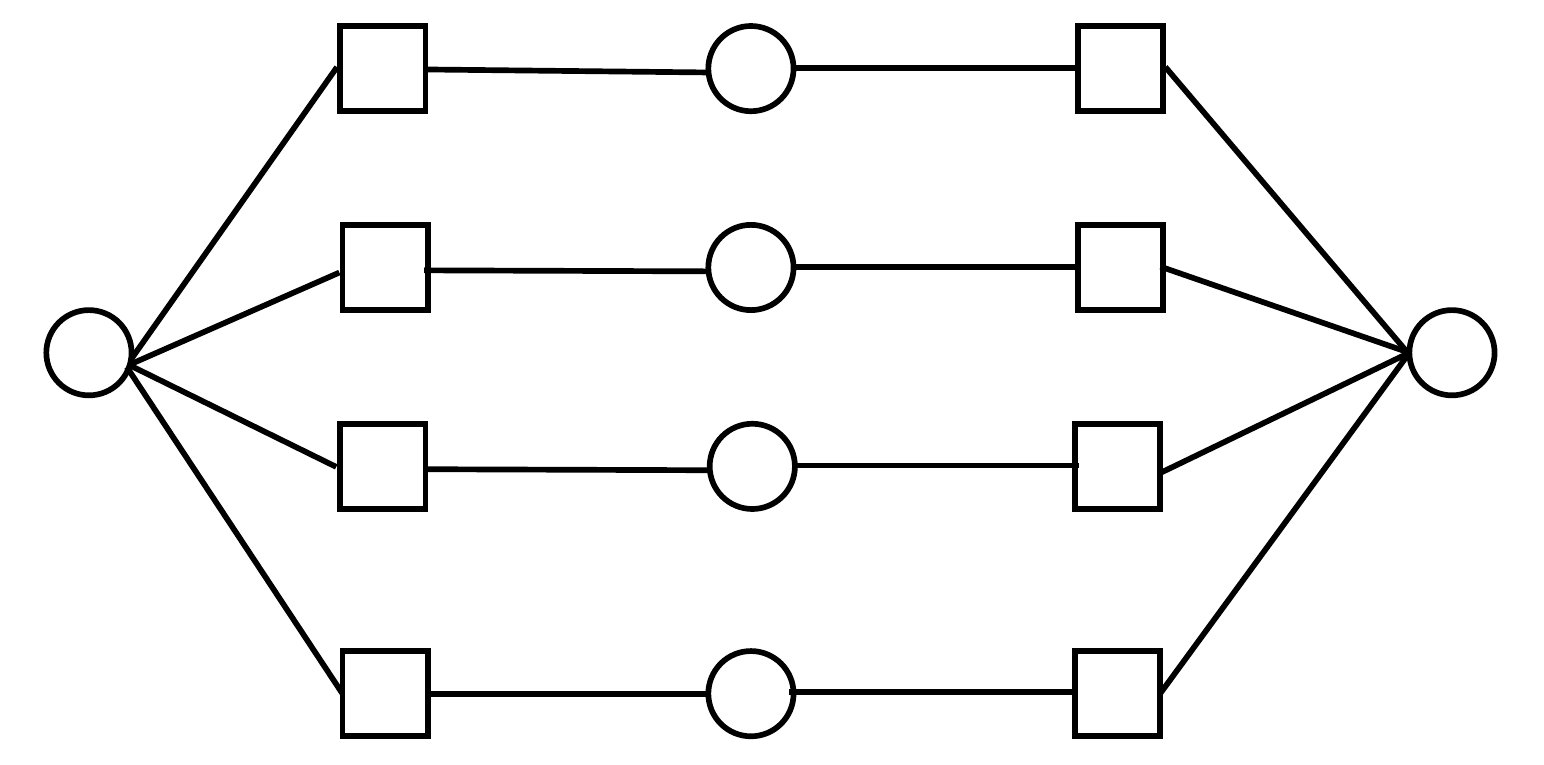}
         \label{fig:temporalStabilizer0}
    }
    \end{subfigure}
    \begin{subfigure}[]
    {
        \centering
       \includegraphics[width=0.4\linewidth]{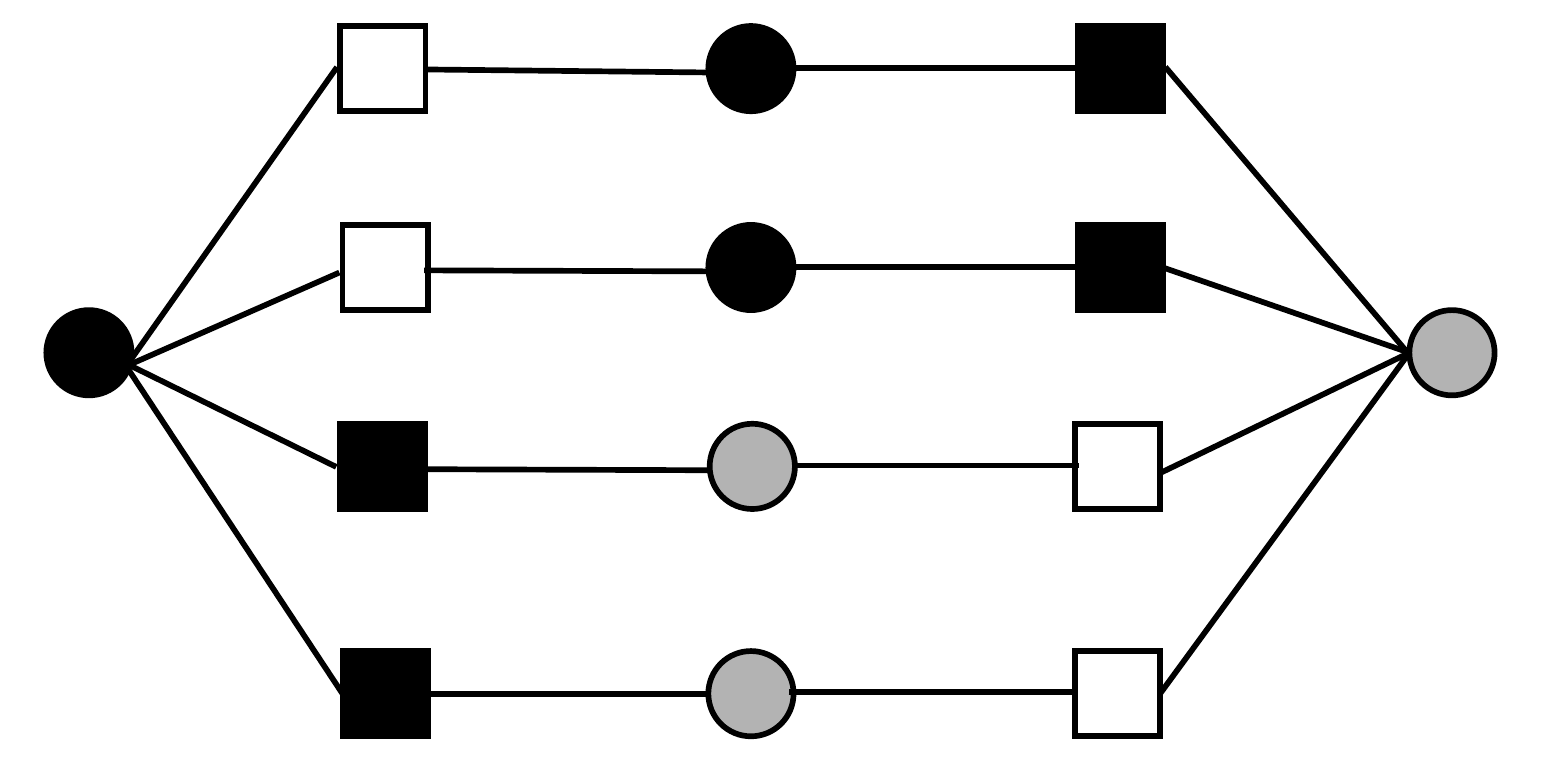}
        \label{fig:temporalStabilizer1}
    }
    
    \end{subfigure}
\caption{The induced subgraph  of a temporal codeword formed by a variable node ($d_v=4$) repeated in two measurement rounds~\subref{fig:temporalStabilizer0}. Two temporal degenerate error patterns of the same weight (\vnode{black} and \vnode{figgray}) in the support of the temporal codeword~\subref{fig:temporalStabilizer1}; unsatisfied check nodes are shown as \cnode{black}. Both error patterns are valid but problematic for an isotropic iterative decoder (as defined in Section~\ref{sec:secv}).}
    \label{fig:temporalStabilizer}
\end{figure}

In Fig.~\ref{fig:temporalStabilizer} we illustrate the induced subgraph of $\mathbf{H}_1^{(2)}$ for $d_v=4$, which shows that for two consecutive measurement rounds, the resulting subset of nodes corresponds to a weight-($d_v+2$) temporal codeword. 
Fig.~\ref{fig:temporalStabilizer}~\subref{fig:temporalStabilizer1} illustrates two temporal degenerate error patterns that are equivalent according to Eq.~(\ref{eq:codespace}) and, like degenerate errors originating from stabilizers  generated by $H_X$, are problematic for iterative decoders~\cite{chytasminsum}.
 
We now show that the temporal degeneracy of $\mathbf{H}^{(r)}_1$ can be characterized by the kernel vectors of $\mathbf{H}^{(r)}_1$, which are lifted versions of the kernel vectors of the base matrix $\mathbf{B}$ and correspond to temporal codewords. Lemma~\ref{lemma:vectors} and Corollary~\ref{cor:bijectionH1} transfer the analysis of the small matrix $\mathbf{B}$ to the actual induced subgraph of a variable node in $\mathbf{H}'$. The lifting shows how kernel vectors of $\mathbf{B}$ become temporal codewords, and the bijection guarantees that no temporal codeword of $\mathbf{H}_1^{(r)}$ is missed.
\begin{lemma}
Let $\mathbf{B}=[\mathbf{B}_L\; \mathbf{B}_R]$ be the base matrix of the phenomenological noise model , corresponding to $r$ measurement rounds. Form the  matrix $\mathbf{H}_1^{(r)}$  by replacing each $1$ in $\mathbf{B}_L$ with an identity block $I_{d_v}$ and each $1$ in $\mathbf{B}_R$ with a $d_v\times 1$ column of ones. 
Then any vector $\mathbf{c} = \begin{bmatrix} \mathbf{z} \\ \mathbf{x} \end{bmatrix} 
$ $\in \operatorname{ker}(\mathbf{B})$ lifts to a vector 
\[
\widehat {\mathbf{c}} = \begin{bmatrix} \mathbf{z}\otimes \mathbf{1}_{d_v} \\[0.5em] \mathbf{x} \end{bmatrix}
\]  $\in \operatorname{ker}(\mathbf{H}_1^{(r)})$.
\label{lemma:vectors}
\end{lemma}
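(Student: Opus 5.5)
The plan is a direct block computation, organized so that the lifted matrix-vector product splits row-block by row-block into a product of the base matrix. We view $\mathbf{H}_1^{(r)}$ as an $r\times(2r-1)$ array of blocks. Each block row has height $d_v$. The first $r-1$ block columns have width $d_v$ and the last $r$ have width $1$. By construction, the $(i,j)$ block of the left part is $B_{ij} I_{d_v}$ for $j\le r-1$, and the $(i,j)$ block of the right part is $B_{i,j}\mathbf{1}_{d_v}$ for the data columns. Write $\widehat{\mathbf{c}}$ in the matching block form: the $j$-th left block is $z_j\mathbf{1}_{d_v}$ and the $t$-th right entry is $x_t$.

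First, we compute the $i$-th block row of $\mathbf{H}_1^{(r)}\widehat{\mathbf{c}}$:
\[
\sum_{j=1}^{r-1} B_{ij} I_{d_v}(z_j\mathbf{1}_{d_v}) \oplus \sum_{t=1}^{r} B_{i,r-1+t}\,\mathbf{1}_{d_v} x_t
= \Bigl(\sum_{j} B_{ij} z_j \oplus \sum_t B_{i,r-1+t} x_t\Bigr)\mathbf{1}_{d_v}
= (\mathbf{B}\mathbf{c})_i\,\mathbf{1}_{d_v}.
\]
This is just the mixed-product property of the Kronecker product: the left part of $\mathbf{H}_1^{(r)}$ equals $\mathbf{B}_L\otimes I_{d_v}$ and the right part equals $\mathbf{B}_R\otimes \mathbf{1}_{d_v}$. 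Hence
\[
(\mathbf{B}_L\otimes I_{d_v})(\mathbf{z}\otimes\mathbf{1}_{d_v}) = (\mathbf{B}_L\mathbf{z})\otimes \mathbf{1}_{d_v}, \qquad (\mathbf{B}_R\otimes\mathbf{1}_{d_v})\mathbf{x}=(\mathbf{B}_R\mathbf{x})\otimes\mathbf{1}_{d_v}.
\]

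Second, we sum the two parts to obtain $\mathbf{H}_1^{(r)}\widehat{\mathbf{c}}=(\mathbf{B}\mathbf{c})\otimes\mathbf{1}_{d_v}$. This vanishes because $\mathbf{c}\in\ker(\mathbf{B})$. Equivalently, we can verify it with the recursion of Eq.~(\ref{eq:kernelB_recursion}): every row of block $i$ sees $z_{i-1}\oplus z_i\oplus x_i=0$, with the boundary conventions $z_0=z_r=0$.

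The only real care point is bookkeeping. We must confirm that the construction described in the lemma agrees with the induced submatrix $\mathbf{H}'(\mathcal{R}_r(v),\mathcal{S}_r(v)\cup\mathcal{D}_r(v))$ under a suitable ordering. The checks in $\mathcal{R}_r(v)$ are the $d_v$ neighbors of $v$ in each round. Each $\mathbf{I}_m$ block restricts to $I_{d_v}$ on these rows, and the column of $\mathbf{H}$ for $v$ restricts to $\mathbf{1}_{d_v}$. The lemma defines $\mathbf{H}_1^{(r)}$ directly by this lifting, however, so the proof itself needs only the Kronecker identity above. No genuine obstacle is expected.
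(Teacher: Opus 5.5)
Your proposal is correct and matches the paper's proof: the paper also writes $\mathbf{H}_1^{(r)}=[\mathbf{B}_L\otimes \mathbf{I}_{d_v}\mid \mathbf{B}_R\otimes \mathbf{1}_{d_v}]$ and applies the mixed-product property of the Kronecker product to get $\mathbf{H}_1^{(r)}\widehat{\mathbf{c}}=(\mathbf{B}\mathbf{c})\otimes\mathbf{1}_{d_v}=\mathbf{0}$. Your row-by-row check via Eq.~(\ref{eq:kernelB_recursion}) with $z_0=z_r=0$, and your remark identifying the construction with the induced submatrix of $\mathbf{H}'$, are harmless extras that the lemma as stated does not need.
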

\begin{IEEEproof}
The proof is given in Appendix~\ref{app:lemma:vectors}.
\end{IEEEproof}

\begin{corollary}
The lifting map induces a bijection between $\ker(\mathbf{B})$ and $\ker(\mathbf{H}_1^{(r)})$.
\label{cor:bijectionH1}
\end{corollary}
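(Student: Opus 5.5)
The plan is to show that the lifting map $\phi:\ker(\mathbf{B})\to\ker(\mathbf{H}_1^{(r)})$, $\phi(\mathbf{z},\mathbf{x})=(\mathbf{z}\otimes\mathbf{1}_{d_v},\mathbf{x})$, is well defined, injective and surjective. Well-definedness is exactly Lemma~\ref{lemma:vectors}. The map is also $\mathbb{F}_2$-linear, since both $\mathbf{z}\mapsto\mathbf{z}\otimes\mathbf{1}_{d_v}$ and the identity on $\mathbf{x}$ are linear. Injectivity is immediate: the data part $\mathbf{x}$ is copied verbatim, and $\mathbf{z}$ is recovered from any one coordinate of each length-$d_v$ block of $\mathbf{z}\otimes\mathbf{1}_{d_v}$. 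So $\phi(\mathbf{c})=\mathbf{0}$ forces $\mathbf{c}=\mathbf{0}$.

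For surjectivity, I will argue directly from the block structure of $\mathbf{H}_1^{(r)}$. Index the rows by pairs $(j,a)$, where $j\in[r]$ is the round and $a\in[d_v]$ labels the checks adjacent to $v$. Index the measurement columns by $(i,a)$ with $i\in[r-1]$, and the data columns by $v^{(j)}$. Take an arbitrary $\widehat{\mathbf{c}}=(\mathbf{w}\mid\mathbf{x})\in\ker(\mathbf{H}_1^{(r)})$ with $\mathbf{w}=(w_{i,a})$. Row $(j,a)$ then reads
\[
w_{j-1,a}\oplus w_{j,a}\oplus x_j=0,
\]
with the conventions $w_{0,a}=w_{r,a}=0$. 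This is the same recursion as Eq.~(\ref{eq:kernelB_recursion}), applied separately for each fixed $a$.

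Now solve it forward. Row $(1,a)$ gives $w_{1,a}=x_1$ for every $a$. Inductively, $w_{j,a}=x_1\oplus\cdots\oplus x_j$, which does not depend on $a$. Hence $\mathbf{w}=\mathbf{z}\otimes\mathbf{1}_{d_v}$ with $z_j:=w_{j,1}$. Substituting back, $(\mathbf{z},\mathbf{x})$ satisfies every row of $\mathbf{B}$, including the last one, $z_{r-1}=x_r$. Therefore $(\mathbf{z},\mathbf{x})\in\ker(\mathbf{B})$ and $\phi(\mathbf{z},\mathbf{x})=\widehat{\mathbf{c}}$.

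As a cross-check, a dimension count gives the same conclusion. The kernel of $\mathbf{H}_1^{(r)}$ has dimension $r-1$, because $\mathbf{x}$ determines $\mathbf{w}$ and the final rows impose one constraint on $\mathbf{x}$. This matches $\dim\ker(\mathbf{B})=r-1$.

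The only step that needs any care is the surjectivity argument: the forward recursion must force the $d_v$ measurement copies in each round to agree. This is where the identity-block lifting of $\mathbf{B}_L$ and the all-ones lifting of $\mathbf{B}_R$ matter. Each data node $v^{(j)}$ feeds all $d_v$ checks of round $j$ simultaneously, and each measurement node touches only one check index $a$, so the $d_v$ chains are driven by identical inputs. No girth or regularity hypothesis is needed, since $\mathbf{H}_1^{(r)}$ is fully determined by the lifting description in Lemma~\ref{lemma:vectors}.
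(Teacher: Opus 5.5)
Your proof is correct and takes essentially the same route as the paper. Well-definedness comes from Lemma~\ref{lemma:vectors}, and injectivity holds because the lift keeps $\mathbf{x}$ and repeats each coordinate of $\mathbf{z}$. Surjectivity follows by solving the per-index recursion forward to get $z_{a,t}=x_1\oplus\cdots\oplus x_t$, independent of $a$, with the final rows giving $z_{r-1}=x_r$. Your dimension count is an extra consistency check that the paper does not include.
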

\begin{IEEEproof}
The proof is given in Appendix~\ref{app:cor:bijectionH1}.
\end{IEEEproof}
Since, by Corollary~\ref{cor:bijectionH1}, there is a one-to-one mapping between the kernel vectors of the base matrix and those of $\mathbf{H}^{(r)}_1$, we can extend the weight enumerator  of Lemma~\ref{lemma1} to the kernel vectors of $\mathbf{H}^{(r)}_1$: 
\begin{equation}
{W(X,Y)} = 1+ \sum_{k=1}^{r-1} \sum_{t=0}^{k-1} \binom{k-1}{t} \binom{r-k}{k-t} \, X^{k d_v} \, Y^{2(k-t)},
\label{eq:enumerateOneVN}
\end{equation}
where  $X$ tracks the weight of the measurement part and $Y$ tracks the weight of the data part. The weight enumerator formula in Eq.~(\ref{eq:enumerateOneVN})  leads  to the following corollary. Corollary~\ref{cor:evenWeight} ensures that the kernel vectors of $\mathbf{H}_1^{(r)}$ do not change the total data error. Hence the two halves of such a vector are equivalent corrections in the sense of Eq.~(\ref{eq:codespace}), and a decoder that cannot choose between them fails even though either choice would be correct.
\begin{corollary}
\label{cor:evenWeight}
All vectors in the kernel of $\mathbf{H}_1^{(r)}$ are temporal codewords.
\end{corollary}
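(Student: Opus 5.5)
The plan is to reduce the claim to a parity identity on the base matrix and then transfer it through the lifting. By Corollary~\ref{cor:bijectionH1}, every vector in $\ker(\mathbf{H}_1^{(r)})$ has the form $\widehat{\mathbf{c}}=(\mathbf{z}\otimes\mathbf{1}_{d_v}\mid\mathbf{x})$ for a unique $\mathbf{c}=(\mathbf{z}\mid\mathbf{x})\in\ker(\mathbf{B})$. The data part of $\widehat{\mathbf{c}}$ consists of the $r$ copies $v^{(1)},\dots,v^{(r)}$ of a single variable node $v$, with values $x_1,\dots,x_r$. Embedding $\widehat{\mathbf{c}}$ into $\mathbb{F}_2^N$, with zeros outside $\mathcal{S}_r(v)\cup\mathcal{D}_r(v)$, the total data parity $\sum_{i=r}^{2r-1}\mathbf{D}_i$ is therefore the vector supported only on coordinate $v$, with value $\bigoplus_{t=1}^r x_t$. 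It thus suffices to show that $\bigoplus_t x_t=0$ for every $\mathbf{c}\in\ker(\mathbf{B})$; then $\mathbf{h}=\mathbf{0}$, and the vector is a temporal codeword in the sense of Definition~\ref{definition:definitionSpatialTemporal}.

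This follows from Eq.~(\ref{eq:kernelB_recursion}), since the sum telescopes:
\[
\bigoplus_{t=1}^{r} x_t = z_1\oplus(z_1\oplus z_2)\oplus\cdots\oplus(z_{r-2}\oplus z_{r-1})\oplus z_{r-1}=0,
\]
because each $z_j$ appears exactly twice. Equivalently, the enumerator in Eq.~(\ref{eq:enumerateOneVN}) shows that the data part always has even weight $2(k-t)$. An alternative route is to use Lemma~\ref{lemma:decompose_xky2}: each basis vector $\mathbf{c}^{1,2}_i$ has data support $\{x_i,x_{i+1}\}$, hence zero parity, and parity is linear.

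Two bookkeeping points remain, and I expect these to be the main (if mild) obstacle. First, the embedded vector must lie in $\ker(\mathbf{H}')$, not just in $\ker(\mathbf{H}_1^{(r)})$. Checks of $\mathbf{H}'$ outside $\mathcal{R}_r(v)$ meet none of the columns $\mathcal{D}_r(v)$. I would argue they also see no measurement columns in $\mathcal{S}_r(v)$, since each measurement node has degree $2$ and both of its checks are copies of the same check of $\mathbf{H}$, which lies in $\mathcal{M}(v)$ and hence in $\mathcal{R}_r(v)$. Second, for $r=1$ the kernel is trivial, and the statement holds vacuously.
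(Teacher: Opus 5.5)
Your proof is correct and is essentially the paper's argument. The paper deduces zero data parity from the even data-part weight $2(k-t)$ in the enumerator of Eq.~(\ref{eq:enumerateOneVN}), which is the same fact you obtain by telescoping Eq.~(\ref{eq:kernelB_recursion}) through the bijection of Corollary~\ref{cor:bijectionH1}; your additional check that the zero-padded vector lies in $\ker(\mathbf{H}')$ is also correct (each measurement node in $\mathcal{S}_r(v)$ has both of its checks in $\mathcal{R}_r(v)$), and it fills in a step the paper leaves implicit.
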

\begin{IEEEproof}
The proof is given in Appendix~\ref{app:cor:evenWeight}.
\end{IEEEproof}

Following the previous notation, we  denote by $\mathbf{c}^{kd_v,2(k-t)}$ a vector in $\ker(\mathbf{H}_1^{(r)})$. 
Weight-($d_v+2$) vectors $\in \operatorname{ker}(\mathbf{H}_1^{(r)})$ are \emph{minimum-weight vectors}, i.e., \emph{minimum-weight temporal codewords} of the code described by $\mathbf{H}_1^{(r)}$, since they are lifted versions of $\mathbf{c}^{1,2} \in \operatorname{ker}(\mathbf{B})$. As a result  of Lemma~\ref{lemma:minimal_xky2},  Lemma~\ref{lemma:decompose_xky2}, and Lemma~\ref{lemma:vectors},  we obtain the following corollaries.
\begin{corollary}
Any $\mathbf{c}^{kd_v,2}$ vector of $\mathbf{H}_1^{(r)}$ with $1\le k\le r-1$ is minimal.
\label{cor:minimalONEvn}
\end{corollary}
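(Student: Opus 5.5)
The plan is to transfer minimality from $\ker(\mathbf{B})$ to $\ker(\mathbf{H}_1^{(r)})$ through the lifting bijection of Corollary~\ref{cor:bijectionH1}. The key observation is that the lifting map preserves support inclusion in both directions. For $\mathbf{c}=(\mathbf{z}\mid\mathbf{x})\in\ker(\mathbf{B})$, the lift $\widehat{\mathbf{c}}=(\mathbf{z}\otimes\mathbf{1}_{d_v}\mid\mathbf{x})$ has a measurement support that is a union of full blocks of size $d_v$, one for each nonzero $z_i$, and a data support equal to $\operatorname{supp}(\mathbf{x})$. Hence, for $\mathbf{c},\mathbf{c}'\in\ker(\mathbf{B})$,
\[
\operatorname{supp}(\widehat{\mathbf{c}}')\subseteq\operatorname{supp}(\widehat{\mathbf{c}})\iff\operatorname{supp}(\mathbf{c}')\subseteq\operatorname{supp}(\mathbf{c}),
\]
and the same equivalence holds with strict inclusion.

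The argument then runs by contradiction in the following order.
\begin{enumerate}
\item Take $\widehat{\mathbf{c}}=\mathbf{c}^{kd_v,2}\in\ker(\mathbf{H}_1^{(r)})$. By Corollary~\ref{cor:bijectionH1}, it is the lift of a unique $\mathbf{c}\in\ker(\mathbf{B})$.
\item That $\mathbf{c}$ has left weight $k$ and right weight $2$, because lifting multiplies the measurement weight by $d_v$ and preserves the data weight. So $\mathbf{c}=\mathbf{c}^{k,2}$.
\item Suppose some nonzero $\widehat{\mathbf{c}}'\in\ker(\mathbf{H}_1^{(r)})$ has support strictly contained in $\operatorname{supp}(\widehat{\mathbf{c}})$.
\item By surjectivity of the lifting (Corollary~\ref{cor:bijectionH1}), $\widehat{\mathbf{c}}'$ is the lift of some nonzero $\mathbf{c}'\in\ker(\mathbf{B})$.
\item By the equivalence above, $\operatorname{supp}(\mathbf{c}')\subsetneq\operatorname{supp}(\mathbf{c}^{k,2})$. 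This contradicts Lemma~\ref{lemma:minimal_xky2}.
\end{enumerate}

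The main point needing care is the support-inclusion equivalence. Its delicate part is the reverse direction: the conclusion must not depend on $\widehat{\mathbf{c}}'$ vanishing on only part of a measurement block. This is automatic, because every kernel vector of $\mathbf{H}_1^{(r)}$ is a lift and therefore constant on each block of size $d_v$; this is exactly the surjectivity part of Corollary~\ref{cor:bijectionH1}. Given this, the inclusion of measurement supports reduces to inclusion of the sets $\{i: z_i=1\}$, and the data supports coincide entrywise. Strictness transfers because the lifting map is injective on supports.

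Finally, the range $1\le k\le r-1$ coincides with that of Lemma~\ref{lemma:minimal_xky2}, so no additional case analysis is needed.
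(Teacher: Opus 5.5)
Your proposal is correct and is essentially the paper's argument: the paper derives the corollary directly from Lemma~\ref{lemma:minimal_xky2}, using the lifting of Lemma~\ref{lemma:vectors} and the bijection of Corollary~\ref{cor:bijectionH1}. Your observation that every kernel vector of $\mathbf{H}_1^{(r)}$ is constant on each measurement block, so that lifting both preserves and reflects strict support inclusion, is exactly the detail the paper's one-line proof leaves implicit.
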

\begin{IEEEproof}
    Follows from Lemma~\ref{lemma:minimal_xky2}, lifting arguments of Lemma~\ref{lemma:vectors} and Corollary~\ref{cor:bijectionH1}.
\end{IEEEproof}
\begin{corollary}
The space $\ker(\mathbf{H}_1^{(r)})$ is spanned by the weight-($d_v+2$) temporal codewords. Hence, increasing the number of rounds $r$ does not introduce lower-weight minimal vectors.
    \label{cor:dv2GeneratesALL}
\end{corollary}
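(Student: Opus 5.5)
The plan is to transfer Lemma~\ref{lemma:decompose_xky2} from $\ker(\mathbf{B})$ to $\ker(\mathbf{H}_1^{(r)})$ through the lifting map of Lemma~\ref{lemma:vectors}. The key fact is that this map is linear over $\mathbb{F}_2$. It sends $(\mathbf{z}\mid\mathbf{x})$ to $(\mathbf{z}\otimes\mathbf{1}_{d_v}\mid\mathbf{x})$, and $\mathbf{z}\mapsto\mathbf{z}\otimes\mathbf{1}_{d_v}$ is linear while $\mathbf{x}$ is copied unchanged. So the lift $\phi:\ker(\mathbf{B})\to\ker(\mathbf{H}_1^{(r)})$ satisfies $\phi(\mathbf{c}\oplus\mathbf{c}')=\phi(\mathbf{c})\oplus\phi(\mathbf{c}')$. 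By Corollary~\ref{cor:bijectionH1}, $\phi$ is a bijection, hence a vector-space isomorphism.

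First, take the basis $\{\mathbf{c}^{1,2}_i\}_{i=1}^{r-1}$ of $\ker(\mathbf{B})$ given by Lemma~\ref{lemma:decompose_xky2}. Its image $\{\phi(\mathbf{c}^{1,2}_i)\}_{i=1}^{r-1}$ is then a basis of $\ker(\mathbf{H}_1^{(r)})$.

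Second, compute the weights of these images. Each $\mathbf{c}^{1,2}_i$ has a single one in the measurement part, which lifts to $d_v$ ones, and exactly two ones in the data part. Hence $w(\phi(\mathbf{c}^{1,2}_i))=d_v+2$, and these are the $\mathbf{c}^{d_v,2}$ vectors. This shows that $\ker(\mathbf{H}_1^{(r)})$ is spanned by weight-$(d_v+2)$ temporal codewords; by Corollary~\ref{cor:evenWeight} they are indeed temporal codewords.

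Third, argue that no lighter minimal vector appears as $r$ grows. Every nonzero kernel vector is the lift of some $\mathbf{c}^{k,2(k-t)}$ with $k\ge1$, so its weight is $kd_v+2(k-t)$. The data weight is nonzero: if $\mathbf{x}=\mathbf{0}$, the recursion in Eq.~(\ref{eq:kernelB_recursion}) forces $\mathbf{z}=\mathbf{0}$. The data weight is also even, since $t\le k-1$ gives $2(k-t)\ge2$. The minimum weight $d_v+2$ is attained only at $k=1$, and this holds independently of $r$, as the enumerator in Eq.~(\ref{eq:enumerateOneVN}) also shows. Moreover, Corollary~\ref{cor:minimalONEvn} together with the run decomposition preceding Lemma~\ref{lemma:decompose_xky2} identifies every minimal vector as some $\mathbf{c}^{kd_v,2}$, of weight $kd_v+2\ge d_v+2$.

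The only step that needs care is the linearity and bijectivity of $\phi$. Here I would check explicitly that a vector in $\ker(\mathbf{H}_1^{(r)})$ has a measurement part that is constant on each $d_v$-block. In each row block, the data column contributes the same bit to all $d_v$ rows, which forces the two adjacent measurement blocks to have equal entries across the block. This constancy is what Corollary~\ref{cor:bijectionH1} already provides, so the remaining steps are routine.
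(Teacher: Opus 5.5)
Your proposal is correct and follows the paper's route: you push the weight-$3$ basis of Lemma~\ref{lemma:decompose_xky2} through the lift of Lemma~\ref{lemma:vectors} and the bijection of Corollary~\ref{cor:bijectionH1}. Your explicit remark that the lift is $\mathbb{F}_2$-linear, so the bijection is an isomorphism and spanning transfers, supplies a step the paper leaves implicit. Your weight bound $kd_v+2(k-t)\ge d_v+2$ for the second claim is also fine, though positivity of the data weight comes from $t\le k-1$ and evenness from the factor $2$, not both from the same inequality.
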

\begin{IEEEproof}
    Follows from Lemma~\ref{lemma:decompose_xky2}, lifting arguments of Lemma~\ref{lemma:vectors} and Corollary~\ref{cor:bijectionH1}.
\end{IEEEproof}

In Fig.~\ref{fig:lattice} we illustrate all temporal codewords in the kernel  of $\mathbf{H}_1^{(4)}$. Notice that they are  all linear combinations of weight-($d_v+2$) temporal codewords. Fig.~\ref{fig:vector11} illustrates the induced subgraph  of $\mathbf{H}_1^{(4)}$ for a $d_v=3$ code. Based on Eq.~(\ref{eq:enumerateOneVN}), $\mathbf{H}_1^{(4)}$ contains weight-$5$, weight-$8$, weight-$10$, and weight-$11$ temporal codewords. In particular, Fig.~\ref{fig:vector11}~\subref{fig:h14_cdw} illustrates the induced subgraph of the weight-$11$ temporal codeword, which has the highest weight. Notice that it is a minimal vector, since its support does not strictly contain the support of any other nonzero kernel vector.

\begin{figure}
    \centering
\includegraphics[width=1\linewidth]{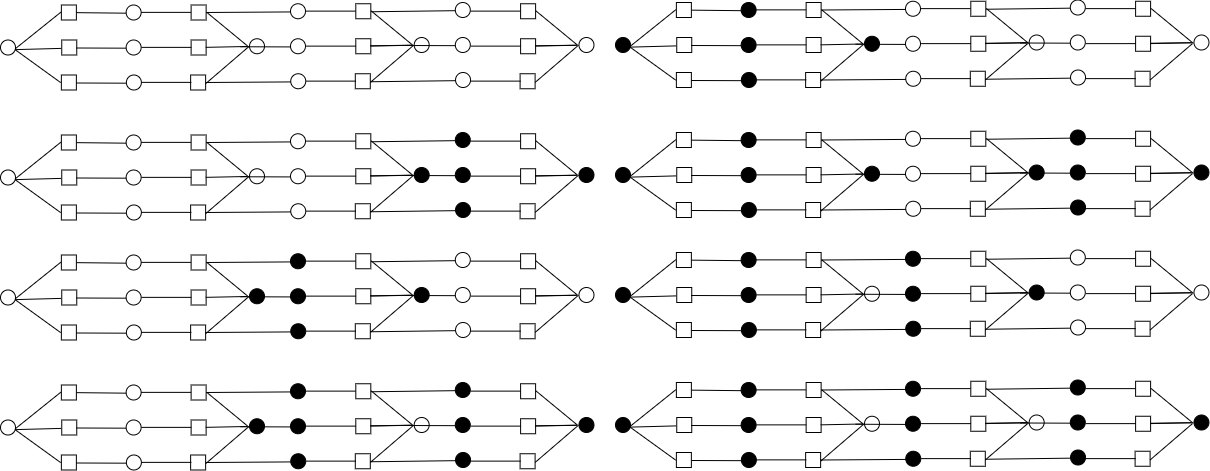}
    \caption{All temporal codewords of $\mathbf{H}_1^{(4)}$. The configurations are arranged in two columns (four on the left, four on the right). Traversing them top-to-bottom, left-to-right, each vector is uniquely labeled by a binary sequence of length $r-1=3$, starting from $000$ to $111$ in lexicographic order.} 
    \label{fig:lattice}
\end{figure} 
\begin{figure}[h]
     \centering
     \begin{subfigure}[]
     {
        \centering
        \includegraphics[width=0.8\linewidth]{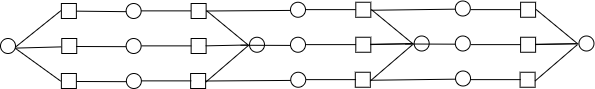}
         \label{fig:h14}
    }
    \end{subfigure}
    \begin{subfigure}[]
    {
        \centering
       \includegraphics[width=0.8\linewidth]{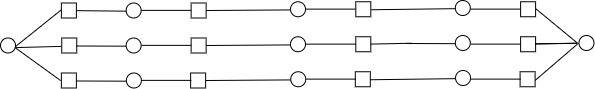}
        \label{fig:h14_cdw}
    }
    
    \end{subfigure}
\caption{The induced subgraph  of a temporal codeword formed by a variable node ($d_v=3$) repeated in four measurement rounds~\subref{fig:h14}. According to Eq.~(\ref{eq:enumerateOneVN}), the highest weight temporal codeword contained in the subgraph is a weight-$11$ vector, which is depicted in~\subref{fig:h14_cdw}.}
    \label{fig:vector11}
\end{figure}

The results so far describe which temporal codewords exist. To quantify their effect on decoding, we next count the error patterns they support. The relevant patterns are those of half the weight of the codeword that cannot be reduced by adding any of its constituent generators, since such a pattern and its complement are equally likely and produce the same syndrome. Definition~\ref{def:locallymin} makes this precise, and Theorem~\ref{theorem1} shows that their number depends only on $k$ and $d_v$, and not on the blocklength or the distance of the code.
\begin{defn}[Locally minimum half-weight error patterns]
Let $\mathbf D=\bigoplus_{i=1}^{k}\mathbf c_i\in\ker(\mathbf H')$
be an even-weight temporal codeword with a specified decomposition into
elementary temporal codewords $\mathbf c_i$. Define
\[
    \mathcal G_D
      :=\operatorname{span}_{\mathbb F_2}
        \{\mathbf c_1,\ldots,\mathbf c_k\},
    \qquad
    \mathcal R_D
      :=\bigcup_{i=1}^{k}\operatorname{supp}(\mathbf c_i).
\]
An error pattern $\mathbf e$ is a locally minimum half-weight error pattern associated with this decomposition if $\operatorname{supp}(\mathbf e)\subseteq\operatorname{supp}(\mathbf D)$, and
\[
    w(\mathbf e)=\frac{w(\mathbf D)}{2},
    \qquad
    w(\mathbf e\oplus\mathbf f)\geq w(\mathbf e),~~\forall \mathbf f\in\mathcal G_D.
\]
\label{def:locallymin}
\end{defn}

Every competing pattern $\mathbf e\oplus\mathbf f$ has the same syndrome as $\mathbf e$ and is supported within $\mathcal R_D$. Thus, the comparison includes coordinates shared by constituent codewords that cancel from $\mathbf D$, while testing only the subspace $\mathcal G_D$ specified by its decomposition. The following theorem counts locally minimum half-weight error patterns in the support of even-weight temporal codewords of $\mathbf{H}_1^{(r)}$.

For $2\le k\le r-1$, let $\mathcal{A}_k$ denote the set of admissible weight distributions $(w_1,\dots,w_k)\in\mathbb{Z}_{\ge 0}^{k}$ that satisfy
\begin{equation}
{\begin{aligned}
&\sum_{i=1}^{k}w_i=\frac{kd_v+2}{2},\\
&\frac{d_v}{2}\le w_1,w_k\le \frac{d_v+2}{2},\\
&\frac{d_v-2}{2}\le w_i\le \frac{d_v+2}{2},\quad 2\le i\le k-1.
\end{aligned}}
\label{eq:admissibleX}
\end{equation}
\begin{theorem}
Let $2\le k\le r-1$, let $\mathbf{c}^{kd_v,2}$ be a minimal vector of even weight $k d_v + 2$, and let $N(k)$ denote the number of locally minimum half-weight error patterns of weight $(k d_v + 2)/2$ contained in $\mathbf{c}^{kd_v,2}$.
Then
\begin{equation}
{N(k)\le
\sum_{(w_1,\dots,w_k)\in\mathcal{A}_k}
\binom{d_v+1}{w_1}
\biggl(\prod_{i=2}^{k-1}\binom{d_v}{w_i}\biggr)
\binom{d_v+1}{w_k}.}
\label{eq:exactNk}
\end{equation}
Consequently,
for fixed $d_v$,
\begin{equation}
N(k)\leq2^{O(kd_v)}.
\end{equation}

\textbf{Remark.} Over $r$ measurement rounds, the total number of locally minimum half-weight patterns of this type satisfies
\[
N_{\mathrm{total}}(k,r) \le (r-k)\,N(k),
\]
and therefore
\[
N_{\mathrm{total}}(k,r) \leq (r-k)\,2^{O(kd_v)}.
\]
\label{theorem1}
\end{theorem}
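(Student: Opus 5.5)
The plan is to make the structure of $\mathbf{c}^{kd_v,2}$ explicit and then bound the locally minimum half-weight patterns block by block. By Lemma~\ref{lemma:vectors} and Corollary~\ref{cor:bijectionH1}, $\mathbf{c}^{kd_v,2}$ is the lift of a base vector with $k$ consecutive ones in positions $j,\dots,j+k-1$ of the left block. By Eq.~(\ref{eq:kernelB_recursion}), its support therefore consists of:
\begin{itemize}
\item $k$ measurement blocks $M_1,\dots,M_k$, each of size $d_v$;
\item exactly two data nodes, $v^{(j)}$ and $v^{(j+k)}$.
\end{itemize}
By Lemma~\ref{lemma:decompose_xky2}, its natural decomposition is $\mathbf c_1\oplus\cdots\oplus\mathbf c_k$ with $\mathbf c_i$ the lifted weight-$(d_v+2)$ generators. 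Their supports are $M_i\cup\{v^{(j+i-1)},v^{(j+i)}\}$, so the interior data nodes $v^{(j+1)},\dots,v^{(j+k-1)}$ lie in $\mathcal R_D$ but cancel from $\mathbf D$.

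I group the support of $\mathbf D$ into $k$ blocks:
\begin{itemize}
\item the first block is $M_1\cup\{v^{(j)}\}$, of size $d_v+1$;
\item the last block is $M_k\cup\{v^{(j+k)}\}$, of size $d_v+1$;
\item each interior block is $M_i$, of size $d_v$.
\end{itemize}
Let $w_i$ be the number of positions of $\mathbf e$ in block $i$. Summing over weight profiles $(w_1,\dots,w_k)$ and choosing the positions inside each block independently gives exactly the product of binomials in Eq.~(\ref{eq:exactNk}). The total-weight condition $\sum w_i=(kd_v+2)/2$ is just $w(\mathbf e)=w(\mathbf D)/2$. What remains is to show that local minimality forces $(w_1,\dots,w_k)\in\mathcal A_k$.

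To get the inequalities, I test $w(\mathbf e\oplus\mathbf f)\ge w(\mathbf e)$ with well-chosen $\mathbf f\in\mathcal G_D$.

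First I take $\mathbf f=\mathbf c_1$. It flips all of $M_1$, $v^{(j)}$, and the cancelled node $v^{(j+1)}$, which is not in $\operatorname{supp}(\mathbf e)$ and so adds $1$. The weight change is $(d_v+1-w_1)-w_1+1$. Requiring this to be $\ge 0$ gives $w_1\le (d_v+2)/2$, and $w_k\le(d_v+2)/2$ follows symmetrically.

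For an interior generator $\mathbf c_i$ with $2\le i\le k-1$, both of its data nodes are cancelled nodes. The change is $d_v-2w_i+2\ge 0$, so $w_i\le (d_v+2)/2$.

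The lower bounds then come from the total-weight constraint. Since $\sum_i w_i=(kd_v+2)/2$ and every other block is at most $(d_v+2)/2$, a naive bound is too weak, so I would also use sums $\mathbf f=\mathbf c_i\oplus\cdots\oplus\mathbf c_{i'}$ over runs, which are the lifted minimal vectors. For a run ending at an endpoint, say $\mathbf c_1\oplus\cdots\oplus\mathbf c_{k}$ minus a block, comparing $\mathbf e$ with its complement inside $\mathbf D$ on the complementary run yields $\sum_{i\in S}w_i\ge$ (size of run)$/2$-type bounds. Combining these with $w(\mathbf e)=w(\mathbf D)/2$ should give $w_1,w_k\ge d_v/2$ and $w_i\ge (d_v-2)/2$.

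This lower-bound step is the main obstacle. Lemma~\ref{lemma:decompose_xky2} ensures every $\mathbf f$ is a sum of runs, so the run inequalities are the right family to use. I expect, but have not verified, that they suffice; if they turn out not to, the inequality in Eq.~(\ref{eq:exactNk}) still holds trivially with a slightly larger admissible set, and I would report that as the fallback.

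For the asymptotic claim, $\binom{d_v+1}{w}\le 2^{d_v+1}$ and $|\mathcal A_k|\le (d_v+2)^k$, so
\[
N(k)\le (d_v+2)^k 2^{kd_v+2}=2^{O(kd_v)}.
\]
For the remark, a run of $k$ consecutive ones fits into the $r-1$ left positions in $r-k$ ways, which gives the factor $(r-k)$.
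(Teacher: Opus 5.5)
Your setup matches the paper's proof. Your blocks $M_1\cup\{v^{(j)}\}$, $M_i$, $M_k\cup\{v^{(j+k)}\}$ are its sets $\mathcal U_i$, and your upper bounds $w_i\le (d_v+2)/2$ are its singleton tests with $\mathbf f=\mathbf c_i$. The product-of-binomials count, the $2^{O(kd_v)}$ estimate (which holds even without the admissibility constraints) and the factor $(r-k)$ are fine. The gap is the lower bounds $w_1,w_k\ge d_v/2$ and $w_i\ge (d_v-2)/2$. These are exactly what restricts the sum to $\mathcal A_k$, and you leave them unverified. Your fallback does not rescue the statement: summing over a larger admissible set proves a weaker inequality, not the one claimed.

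The missing step is short, and your run idea does work once carried out. Take a run $R=\{a,\dots,b\}\subseteq[k]$. The vector $\mathbf c_R=\mathbf c_a\oplus\cdots\oplus\mathbf c_b$ has support $M_a\cup\cdots\cup M_b\cup\{v^{(j+a-1)},v^{(j+b)}\}$, of weight $(b-a+1)d_v+2$. Its intersection with $\operatorname{supp}(\mathbf e)$ has exactly $\sum_{l\in R}w_l$ elements. This is because each boundary data node either lies in an endpoint block or is a cancelled node outside $\operatorname{supp}(\mathbf D)$. Hence $w(\mathbf e\oplus\mathbf c_R)\ge w(\mathbf e)$ gives $\sum_{l\in R}w_l\le \bigl((b-a+1)d_v+2\bigr)/2$. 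The total-weight constraint $\sum_l w_l=(kd_v+2)/2$ then converts this into lower bounds:
\begin{itemize}
\item Taking $R=\{2,\dots,k\}$ gives $w_1\ge d_v/2$, and $w_k\ge d_v/2$ follows symmetrically.
\item For $2\le i\le k-1$, take $R=\{1,\dots,i-1\}$ and $R=\{i+1,\dots,k\}$ and add the two bounds. This gives $w_i\ge \bigl(kd_v+2-((i-1)d_v+2)-((k-i)d_v+2)\bigr)/2=(d_v-2)/2$.
\end{itemize}
The paper does the same in one stroke by testing $\mathbf f=\mathbf D\oplus\mathbf c_i=\bigoplus_{l\neq i}\mathbf c_l$. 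This vector has weight $(k-1)d_v+2$ for $i\in\{1,k\}$ and $(k-1)d_v+4$ otherwise, since it splits into two disjoint runs. The extra $+2$ from the two cancelled data nodes is what lowers the interior bound from $d_v/2$ to $(d_v-2)/2$. With this step inserted, your argument is complete and is essentially the paper's.
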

\begin{IEEEproof}[Proof sketch]
From Corollary~\ref{cor:dv2GeneratesALL}, a weight-$(kd_v+2)$ minimal
vector can be written as
\begin{equation}
\mathbf{c}^{kd_v,2}
=
\mathbf{c}_1\oplus \mathbf{c}_2\oplus \cdots \oplus \mathbf{c}_k,
\label{eq:decomposition}
\end{equation}
where each $\mathbf{c}_i$ is a weight-$(d_v+2)$ temporal codeword of
$\mathbf{H}_1^{(r)}$.

Let $\mathcal{V}_i:=\operatorname{supp}(\mathbf{c}_i)$, $i\in[k]$.
The support of $\mathbf{c}^{kd_v,2}$ can be decomposed into disjoint sets
\begin{equation}
\operatorname{supp}(\mathbf{c}^{kd_v,2})
=
\mathcal{V}
=
\bigsqcup_{i=1}^{k}\mathcal{U}_i,
\label{eq:Vdisjoint}
\end{equation}
where
\[
|\mathcal{U}_1|=|\mathcal{U}_k|=d_v+1,
\qquad
|\mathcal{U}_i|=d_v,\quad 2\le i\le k-1.
\]

For an error pattern $\mathbf{e}$ of weight
$w(\mathbf{c}^{kd_v,2})/2$, define
\[
w_i
:=
|\operatorname{supp}(\mathbf e)\cap\mathcal U_i|,
\qquad i\in[k].
\]
Flipping any proper
combination of the constituent temporal codewords must not produce a
lower-weight syndrome-equivalent error. Applying this condition to the
singleton partitions yields the necessary constraints
\[
\frac{d_v}{2}\le w_1,w_k\le \frac{d_v+2}{2},
\]
and
\[
\frac{d_v-2}{2}\le w_i\le \frac{d_v+2}{2},
\qquad 2\le i\le k-1,
\]
together with
\[
\sum_{i=1}^{k}w_i=\frac{kd_v+2}{2}.
\]

Since the sets $\mathcal U_i$ are disjoint, a fixed weight distribution
$(w_1,\ldots,w_k)$ satisfying these constraints contributes
\[
\binom{d_v+1}{w_1}
\left(\prod_{i=2}^{k-1}\binom{d_v}{w_i}\right)
\binom{d_v+1}{w_k}
\]
possible error supports. Summing over all such distributions gives
Eq.~(\ref{eq:exactNk}). The complete derivation of the constraints and
the asymptotic bound is provided in Appendix~\ref{app:theorem1}.
\end{IEEEproof}

An illustration of the mechanism of the proof is shown in Fig.~\ref{fig:proofMechanism}. According to Theorem~\ref{theorem1}, there are $2\binom{5}{3}\binom{5}{2}=200$ weight-$5$ temporal degenerate error patterns that satisfy Definition~\ref{def:locallymin} and are supported on the weight-$10$ temporal codeword of the figure. The count here is exact since there is only one partition.
\begin{figure}[htpb]
     \centering
     \begin{subfigure}[]
     {
        \centering
        \includegraphics[width=0.8\linewidth]{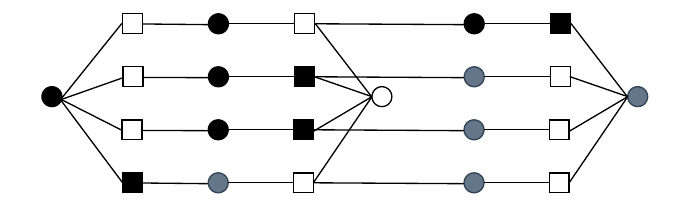}
         \label{fig:proof1}
    }
    \end{subfigure}
    \begin{subfigure}[]
    {
        \centering
       \includegraphics[width=0.8\linewidth]{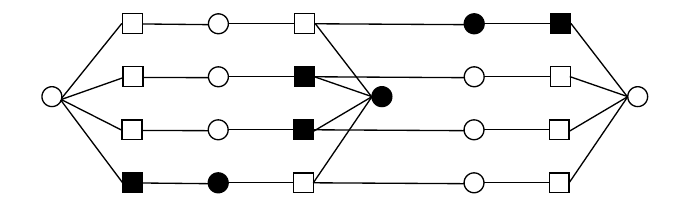}
        \label{fig:proof2}
    }
    
    \end{subfigure}
\caption[Example of two temporal degenerate error patterns.]{Example of two temporal degenerate error patterns $\mathbf{e}, \mathbf{f}$ (colored  black and gray, respectively) that together form the support of a weight-10 temporal codeword~\subref{fig:proof1}. As shown in the proof of Theorem~\ref{theorem1}, since the supports of $\mathbf{e}, \mathbf{f}$ cover more than  half of each weight-$6$ temporal codeword, there exists a weight-$3$ error pattern $\mathbf{f}'$~\subref{fig:proof2}\ that matches the syndrome. In that case,  even though the weight of $\mathbf{e}$ is half the weight of the vector, $\mathbf{e}$ is not considered locally minimum half-weight.}
    \label{fig:proofMechanism}
\end{figure}
The proof technique of Theorem~\ref{theorem1} is also used in the following sections. 

\subsection{Harmful temporal degenerate error patterns}
Based on the analysis of the previous subsection, we characterize all temporal codewords in the kernel of $\mathbf{H}'$ (Definition~\ref{definition:definitionSpatialTemporal}).

Lemma~\ref{lemma:temporalgenerator} extends the single-variable analysis to the full space-time matrix $\mathbf{H}'$. It shows that every temporal codeword of $\mathbf{H}'$ is a sum of embedded weight-$(d_v+2)$ generators. Consequently, temporal degeneracy is determined locally by the variable-node degree, independently of the code construction.

\begin{lemma}
Every temporal codeword $\mathbf D\in\ker(\mathbf H')$ is a linear
combination of embedded weight-$(d_v+2)$ temporal codeword generators.
\label{lemma:temporalgenerator}
\end{lemma}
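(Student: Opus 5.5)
The plan is a direct elimination (``peeling'') argument on the data part of $\mathbf D$. I will push every data error forward in time using embedded elementary generators until all of the data part sits in the last round. Then I will use the zero total data parity to show that nothing remains.

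\emph{Step 1: define the embedded generators.} For a data variable $v\in[n]$ and a round $j\in[r-1]$, let $\mathbf g(v,j)\in\mathbb F_2^N$ be the lift of the weight-$3$ vector $\mathbf c^{1,2}_j\in\ker(\mathbf B)$ (Lemma~\ref{lemma:vectors}), embedded into $\mathbf H'$ through $\mathbf H_1^{(r)}(v)$. Concretely, $\mathbf g(v,j)$ has the following support:
\begin{itemize}
\item the data nodes $v^{(j)}$ and $v^{(j+1)}$;
\item in measurement block $j$, the vector $\mathbf H\mathbf e_v$, i.e.\ the $d_v$ measurement nodes of the checks in $\mathcal M(v)$ between rounds $j$ and $j+1$.
\end{itemize}
It therefore has weight $d_v+2$. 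Write $\mathbf z_j$ for the measurement blocks and $\mathbf x_j$ for the data blocks. Block row $j$ of $\mathbf H'$ in Eq.~(\ref{eq:phenomPCM}) reads $\mathbf z_{j-1}\oplus\mathbf z_j\oplus\mathbf H\mathbf x_j=\mathbf 0$, with the conventions $\mathbf z_0=\mathbf z_r=\mathbf 0$. One checks directly that only block rows $j$ and $j+1$ see $\mathbf g(v,j)$, and both give $\mathbf H\mathbf e_v\oplus\mathbf H\mathbf e_v=\mathbf 0$. Hence $\mathbf g(v,j)\in\ker(\mathbf H')$. Its total data parity is $\mathbf e_v\oplus\mathbf e_v=\mathbf 0$, so it is a temporal codeword.

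\emph{Step 2: sweep the data part to the last round.} Let $\mathbf D$ be a temporal codeword with data blocks $\mathbf x_1,\dots,\mathbf x_r$ satisfying $\bigoplus_j\mathbf x_j=\mathbf 0$. For $j=1,\dots,r-1$ in order, replace the current vector $\mathbf D$ by
\[
\mathbf D\oplus\bigoplus_{v\in\operatorname{supp}(\mathbf x_j)}\mathbf g(v,j).
\]
This step clears the data block of round $j$. It modifies only data block $j+1$ and measurement block $j$, so blocks already cleared are untouched. Each added generator lies in $\ker(\mathbf H')$ and has zero data parity, so the following two properties are preserved throughout:
\begin{itemize}
\item membership in $\ker(\mathbf H')$;
\item zero total data parity.
\end{itemize}
After the sweep, the resulting vector $\mathbf D'$ has $\mathbf x_1=\dots=\mathbf x_{r-1}=\mathbf 0$. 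Zero total parity then forces $\mathbf x_r=\mathbf 0$ as well.

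\emph{Step 3: show the remainder vanishes.} With zero data part, the block rows of $\mathbf H'\mathbf D'=\mathbf 0$ reduce to
\[
\mathbf z_1=\mathbf 0,\qquad \mathbf z_{j-1}\oplus\mathbf z_j=\mathbf 0 \quad (2\le j\le r-1).
\]
This is exactly the recursion of Eq.~(\ref{eq:kernelB_recursion}) applied blockwise. By induction on $j$, every $\mathbf z_j=\mathbf 0$, so $\mathbf D'=\mathbf 0$. Therefore $\mathbf D$ equals the mod-$2$ sum of the generators added in Step 2, which proves the lemma.

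The argument is a global version of Lemma~\ref{lemma:decompose_xky2} and Corollary~\ref{cor:dv2GeneratesALL}, with the recursion applied to whole blocks rather than to a single variable. The only delicate point is the bookkeeping in Step 2: I must confirm that each elimination step touches only data block $j+1$ and measurement block $j$, so that the sweep never reintroduces data errors in earlier rounds. I also must confirm that the embedding of the lifted vector matches the column ordering of $\mathbf H'$ used for $\mathcal D_r(v)$ and $\mathcal S_r(v)$. Once that is settled, the rest is routine.
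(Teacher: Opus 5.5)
Your proof is correct, but it takes a different route from the paper's.

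\textbf{The paper's route.} The paper decomposes $\mathbf D$ across space. For each data variable $v$ it reads off the time sequence $(x_v^{(1)},\ldots,x_v^{(r)})$. From this it builds the unique local kernel vector $\mathbf V^{(v)}\in\ker(\mathbf H_1^{(r)}(v))$, whose measurement blocks are $(x_v^{(1)}\oplus\cdots\oplus x_v^{(t)})\mathbf 1_{d_v}$. It then argues $\mathbf D=\sum_v\iota_v(\mathbf V^{(v)})$, because the measurement part of a kernel vector is fixed by its data part via $\mathbf z^{(t)}=\mathbf H(\mathbf x^{(1)}\oplus\cdots\oplus\mathbf x^{(t)})$. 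Finally it invokes Corollary~\ref{cor:dv2GeneratesALL}, that is, the basis of Lemma~\ref{lemma:decompose_xky2} transported through the lifting bijection.

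\textbf{Your route.} You instead run forward elimination in time on all of $\mathbf H'$ at once, and you never need the single-variable lemmas. Your Step~3 is the homogeneous form of the same ``data part determines measurement part'' fact.

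\textbf{The two decompositions coincide.} In your sweep, the data block cleared at step $j$ is the cumulative parity $\mathbf x_1\oplus\cdots\oplus\mathbf x_j$. So $\mathbf g(v,j)$ enters with coefficient $x_v^{(1)}\oplus\cdots\oplus x_v^{(j)}$. This is exactly the measurement coordinate of $\mathbf V^{(v)}$ at interval $j$, i.e., its coefficient in the basis of Lemma~\ref{lemma:decompose_xky2}.

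\textbf{What each buys.} The paper's route reuses the single-variable analysis and makes the locality of temporal degeneracy explicit: the temporal kernel is the sum of the embedded single-variable kernels. Your route has several advantages of its own:
\begin{itemize}
\item It is self-contained.
\item It verifies directly that each embedded generator lies in $\ker(\mathbf H')$, which the paper takes for granted when it defines $\iota_v$.
\item It gives an explicit elimination procedure with explicit coefficients.
\item With one extra line it also yields linear independence. Reading off data blocks $1,\ldots,r-1$ in order shows that any vanishing combination of the $\mathbf g(v,j)$ has all coefficients zero. Hence the $n(r-1)$ generators form a basis of the temporal kernel.
\end{itemize}
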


\begin{IEEEproof}
The proof is given in Appendix~\ref{app:lemma:temporalgenerator}.
\end{IEEEproof}

The preceding lemma shows that all temporal codewords in $\ker(\mathbf H')$ are generated by embedded copies of the weight-$(d_v+2)$ temporal codewords of $\mathbf H_1^{(r)}$. We refer to these embedded weight-$(d_v+2)$ vectors as temporal codeword generators.
We now count the lowest-weight potentially problematic temporal degenerate error patterns. Based on the previous analysis, such errors are supported on even-weight temporal codewords. For odd $d_v$, which includes the BB codes of Section~\ref{sec:secvi}, the generators have odd weight $d_v+2$ and cannot be split into two halves of equal weight. Lemma~\ref{lem:min-even-temporal} identifies the smallest even-weight temporal codewords, of weight $2d_v+2$. This is why the failing weight for odd $d_v$ is $d_v+1$, e.g., four for BB codes. An example of a weight-$(2d_v+2)$ temporal codeword found in $d=3$ codes, including BB codes, is shown in Fig.~\ref{fig:SmallestForOdd}.
The following lemma establishes a result used in the subsequent theorems.
\begin{lemma}
\label{lem:min-even-temporal}
Let $\mathbf H$ be a $(d_v,d_c)$-regular parity-check matrix with $r\geq 2$,
$d_v>2$ odd and a Tanner graph of girth larger than four. Then every
nonzero even-weight temporal codeword
$\mathbf D\in\ker(\mathbf H')$ satisfies
\[
    w(\mathbf D)\ge 2d_v+2.
\]
\end{lemma}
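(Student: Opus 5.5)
The plan is to parametrize $\mathbf D$ by its measurement part and reduce the weight to a sum of per-round quantities of the form $|\mathbf u|+|\mathbf H\mathbf u|$, which girth larger than four lets us bound from below. By Lemma~\ref{lemma:temporalgenerator}, $\mathbf D$ is a sum of embedded generators $\mathbf g_{v,j}$. Here $\mathbf g_{v,j}$ is supported on the data bits $v^{(j)}, v^{(j+1)}$ and on the $d_v$ measurement bits indexed by $\mathcal M(v)$ in measurement block $j$.

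Let $\mathbf a_j\in\mathbb F_2^n$, $1\le j\le r-1$, record which generators between rounds $j$ and $j+1$ occur, and set $\mathbf a_0=\mathbf a_r=\mathbf 0$. Then
\[
\mathbf D_j=\mathbf H\mathbf a_j\ \ (1\le j\le r-1),\qquad \mathbf D_{r-1+j}=\mathbf a_{j-1}\oplus\mathbf a_j\ \ (1\le j\le r).
\]
The same follows directly from the round equations of $\mathbf H'$, with $\mathbf a_j$ the prefix sums of the data blocks, so
\[
w(\mathbf D)=\sum_{j=1}^{r-1}|\mathbf H\mathbf a_j|+\sum_{j=1}^{r}|\mathbf a_{j-1}\oplus\mathbf a_j|.
\]
If all $\mathbf a_j=\mathbf 0$ then $\mathbf D=\mathbf 0$, so we may assume some $\mathbf a_j\neq\mathbf 0$.

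\emph{Parity step.} Since $d_v$ is odd, $|\mathbf H\mathbf a|\equiv d_v|\mathbf a|\equiv|\mathbf a|\pmod 2$. Also $|\mathbf a_{j-1}\oplus\mathbf a_j|\equiv|\mathbf a_{j-1}|+|\mathbf a_j|$, and these data terms telescope to an even total. Hence $w(\mathbf D)\equiv\sum_j|\mathbf a_j|\pmod 2$, so even weight forces the total number of generators $\sum_j|\mathbf a_j|$ to be even.

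\emph{Girth bound.} I will show that for $|\mathbf a|=w$ with $1\le w\le d_v$,
\[
|\mathbf H\mathbf a|\ge w d_v-w(w-1).
\]
Since the girth exceeds four, two columns share at most one check, so at most $\binom{w}{2}$ checks are shared by pairs of columns. A check with $t$ neighbours in $\operatorname{supp}(\mathbf a)$ contributes $t$ to $wd_v$. It is unsatisfied unless $t$ is even, and $t-[t\text{ odd}]\le t(t-1)$ for all $t\ge1$, so summing over checks gives the bound. Consequently $|\mathbf a|+|\mathbf H\mathbf a|\ge d_v+1$ for every nonzero $\mathbf a$: for $w\le d_v$ this follows from the bound, and for $w\ge d_v+1$ it is trivial.

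\emph{Case split.} Let $j_{\min}\le j_{\max}$ be the first and last indices with $\mathbf a_j\neq\mathbf 0$. Then the data blocks satisfy $\mathbf D_{r-1+j_{\min}}=\mathbf a_{j_{\min}}$ and $\mathbf D_{r-1+j_{\max}+1}=\mathbf a_{j_{\max}}$, because the neighbouring $\mathbf a$'s vanish.

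If $j_{\min}<j_{\max}$, keep only these two data blocks and the two measurement blocks $\mathbf H\mathbf a_{j_{\min}}$, $\mathbf H\mathbf a_{j_{\max}}$. This gives
\[
w(\mathbf D)\ge\bigl(|\mathbf a_{j_{\min}}|+|\mathbf H\mathbf a_{j_{\min}}|\bigr)+\bigl(|\mathbf a_{j_{\max}}|+|\mathbf H\mathbf a_{j_{\max}}|\bigr)\ge 2d_v+2.
\]
If $j_{\min}=j_{\max}$, there is a single nonzero $\mathbf a$, and its weight $w$ is even by the parity step, so $w\ge2$. Then $w(\mathbf D)=2w+|\mathbf H\mathbf a|$. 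For $2\le w\le d_v$ this is at least $2w+w(d_v-w+1)$. That expression is concave in $w$, so its minimum is at an endpoint: it equals $2d_v+2$ at $w=2$ and $3d_v$ at $w=d_v$, both at least $2d_v+2$ since $d_v\ge3$. For $w\ge d_v+1$ we have $2w\ge2d_v+2$ directly.

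The main obstacle is the single-block case. Without the parity restriction, $|\mathbf a|=1$ gives a generator of weight $d_v+2$, so the oddness of $d_v$ must be used exactly there. I will also double-check that $\mathbf a_j\in\ker(\mathbf H)$ causes no gap. In that case the $|\mathbf a|$ terms alone carry the bound, since $\mathbf a$ nonzero in $\ker(\mathbf H)$ forces $|\mathbf a|\ge d_v+1$ by the same girth bound.
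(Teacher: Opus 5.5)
Your proof is correct and uses the paper's main ingredients. Your $\mathbf a_j$ are the paper's cumulative parities $\mathbf g_t$, the weight formula is the same, and your girth bound $|\mathbf H\mathbf a|\ge w(d_v-w+1)$ is the paper's inequality. The case analysis, however, is organized differently.

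The paper fixes a round $t^\star$ maximizing $q=w(\mathbf g_t)$ and argues that each of those $q$ coordinates makes at least two parity transitions, so the data part has weight at least $2q$. Combined with the girth bound at $t^\star$, this settles all $q\ge 2$ without any parity argument. Oddness of $d_v$ enters only for $q=1$, through the requirement that the number $L$ of nonzero singleton blocks be even.

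You split instead by the number of active measurement intervals. With at least two, the first and last nonzero data blocks (equal to $\mathbf a_{j_{\min}}$ and $\mathbf a_{j_{\max}}$) together with the local inequality $|\mathbf a|+|\mathbf H\mathbf a|\ge d_v+1$ give $2d_v+2$ for every $d_v$. Your global congruence $w(\mathbf D)\equiv\sum_j|\mathbf a_j|\pmod 2$ says an even-weight temporal codeword is a sum of an even number of generators. You use it only to rule out a single generator in the one-interval case, after which your bound $w(d_v-w+3)$ is exactly the paper's $q\ge 2$ estimate.

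What your organization buys is that its two cases match the two extremal families in the proof of Theorem~\ref{theorem:upper boundsTemporal}: one variable in two consecutive intervals, and two variables sharing a check within one interval. The paper reads off the same equality cases from the values $q=1$, $q=2$, and $q=d_v+1$.

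One wording point: your summation step needs $\sum_c\binom{t_c}{2}\le\binom{w}{2}$, i.e., pairs counted with multiplicity over checks. This holds because two columns share at most one check. The phrase ``at most $\binom{w}{2}$ checks are shared'' is weaker than what you actually use.
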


\begin{IEEEproof}
The proof is given in Appendix~\ref{app:lem:min-even-temporal}.
\end{IEEEproof}
We are interested in locally minimum half-weight error patterns (Definition~\ref{def:locallymin}) supported on temporal codewords of even weight and
 nonzero measurement part, 
when such codewords exist.  We refer to such error patterns as \emph{locally minimum half-weight errors supported on minimum even-weight temporal codewords}. Theorem~\ref{theorem:upper boundsTemporal} turns the preceding results into explicit counts. Its upper bounds, $\overline N_{\mathrm{even}}$ and $\overline N_{\mathrm{odd}}$, are matched by lower bounds of the same order in Section~\ref{sec:secv} (Corollary~\ref{cor:decoder_independent_temporal}), which together establish that the number of harmful patterns grows as $\Theta(nr)$.
 \begin{figure}
     \centering
     \includegraphics[width=0.5\linewidth]{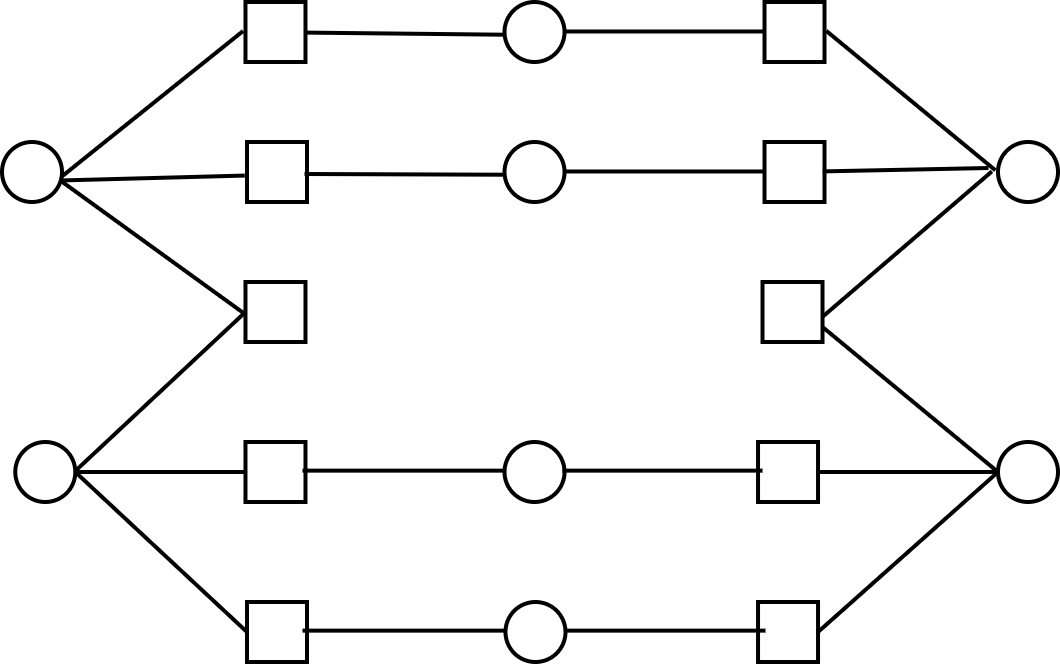}
     \caption{A weight-($2d_v+2$) temporal codeword ($d_v=3$) obtained by two temporal codeword generators sharing a measurement node, or alternatively by two variable nodes of $\mathbf{H}$ sharing a check node repeated for two measurement rounds.}
     \label{fig:SmallestForOdd}
 \end{figure}
\begin{theorem}
   Consider the parity-check matrix $\mathbf{H}'$ obtained from 
a $(d_v,d_c)$-regular matrix $\mathbf{H}$ with girth larger than four under the phenomenological noise model with $r \geq 2$ measurement rounds and $d_v,d_c>2$. We enumerate  locally minimum half-weight errors supported on minimum even-weight temporal codewords with nonzero measurement parts. If $d_v$ is even\red{,} these errors have weight $\frac{d_v+2}{2}$, and the number of incidences is \[\overline N_{\mathrm{even}}=\binom{d_v+2}{(d_v+2)/2}n(r-1)\textcolor{red}{.}\]\red{\st{.}} If the variable degree of $\mathbf{H}$ is odd, these errors have weight $d_v+1$, and  the number of incidences  is \[\overline N_{\mathrm{odd}}=\binom{d_v+1}{(d_v+1)/2}^2n(r-2)+\frac{1}{2}\binom{d_v+1}{(d_v+1)/2}^2nd_v(d_c-1)(r-1)\textcolor{red}{.}\]\red{\st{.}}
    
Consequently, $\overline N_{\mathrm{even}}$ and
$\overline N_{\mathrm{odd}}$ upper-bound the corresponding numbers of
distinct error patterns.
\label{theorem:upper boundsTemporal}
\end{theorem}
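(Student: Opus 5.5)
The plan is to first identify the minimum even-weight temporal codewords with nonzero measurement part, then count the half-weight patterns inside each one, and finally count how many such codewords can be embedded in $\mathbf H'$.

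\emph{Even $d_v$.} The generators of Lemma~\ref{lemma:temporalgenerator} have weight $d_v+2$, which is even. They are the minimum-weight nonzero temporal codewords with nonzero measurement part, since by Corollary~\ref{cor:dv2GeneratesALL} and the weight enumerator in Eq.~(\ref{eq:enumerateOneVN}) nothing lighter exists. I would confirm that no lighter combination arises across different variable nodes. The argument is that any temporal codeword with a nonzero measurement part forces at least $d_v$ measurement bits, together with at least two data bits that close the parity. A generator is determined by a variable node $v\in[n]$ and a consecutive round pair $(j,j+1)$, $j\in[r-1]$, so there are $n(r-1)$ of them. Each has $\binom{d_v+2}{(d_v+2)/2}$ half-weight subsets, and the locally minimum ones are among these. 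This gives $\overline N_{\mathrm{even}}$ as an incidence count.

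\emph{Odd $d_v$.} By Lemma~\ref{lem:min-even-temporal}, every even-weight temporal codeword has weight at least $2d_v+2$, so the relevant patterns have weight $d_v+1$. I would characterize the weight-$(2d_v+2)$ codewords with nonzero measurement part as sums $\mathbf c_1\oplus\mathbf c_2$ of two generators whose supports overlap in exactly $d_v$ positions' worth of cancellation. Here $w(\mathbf c_1\oplus\mathbf c_2)=2(d_v+2)-2|\operatorname{supp}\mathbf c_1\cap\operatorname{supp}\mathbf c_2|$ must equal $2d_v+2$, so the overlap is exactly one coordinate. Using girth greater than four, two generators can share exactly one coordinate in only two ways.

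In case (a), they share a data node: the same $v$ over rounds $(j,j+1)$ and $(j+1,j+2)$. This is $\mathbf c^{2d_v,2}$ with $j\in[r-2]$, giving $n(r-2)$ codewords. In case (b), they share a measurement node: two variable nodes $v,u$ adjacent to a common check $c$, over the same round pair. Girth larger than four guarantees that $v$ and $u$ share only $c$, so the overlap is exactly one. The number of unordered such pairs is $\tfrac12 n d_v(d_c-1)$ per round pair, times $r-1$ round pairs.

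I would also argue that no other combination of three or more generators reaches weight $2d_v+2$ with nonzero measurement part. The approach is a cancellation count: each additional generator contributes $d_v+2$ coordinates, and girth greater than four limits the pairwise overlap to one.

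For the pattern count in each codeword, I would apply the mechanism of Theorem~\ref{theorem1} with $k=2$. The support splits into disjoint blocks $\mathcal U_1,\mathcal U_2$ of size $d_v+1$ each. Local minimality against $\mathbf c_1$ and $\mathbf c_2$ forces $w_1=w_2=(d_v+1)/2$, because each inequality $w_i\ge (d_v+1)/2$ is needed and the sum is $d_v+1$. This gives at most $\binom{d_v+1}{(d_v+1)/2}^2$ patterns per codeword, and the same block argument applies in case (b). Multiplying and adding yields $\overline N_{\mathrm{odd}}$. Since distinct patterns could lie on several codewords, these incidence counts upper-bound the number of distinct patterns, which is the final sentence of the statement.

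I expect the main obstacle to be the exhaustiveness step: showing that all minimum even-weight temporal codewords are exactly the two overlap types and nothing else. In particular, it must rule out a mixture of a generator with a spatial stabilizer contribution, which is excluded because temporal codewords have $\mathbf h=\mathbf 0$. It must also rule out configurations needing a 4-cycle. For this step I would start from the decomposition given by Lemma~\ref{lemma:temporalgenerator} and then bound the weight by inclusion–exclusion.
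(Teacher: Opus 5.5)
Your overall plan matches the paper's: identify the minimum even-weight temporal codewords with nonzero measurement part, count the balanced half-weight supports on each, then multiply by the number of placements. Your counts $n(r-1)$, $n(r-2)$, $\tfrac12 nd_v(d_c-1)(r-1)$ and $\binom{d_v+1}{(d_v+1)/2}^2$ per codeword are correct. The gap is the exhaustiveness step, and the arguments you propose for it would not work.

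In the even case, the claim that a nonzero measurement part forces at least $d_v$ measurement bits is false. The measurement block in interval $t$ is $\mathbf H\mathbf g_t$ with $\mathbf g_t=\bigoplus_{j\le t}\mathbf x^{(j)}$. The hypotheses do not exclude vectors $\mathbf g$ with $0<w(\mathbf H\mathbf g)<d_v$, i.e.\ near-codewords or $(a,b)$ trapping sets with $b<d_v$. For such $\mathbf g$, $\bigoplus_{v\in\operatorname{supp}\mathbf g}\mathbf c_{v,t}$ has only $b$ measurement bits and is heavy only because its data part has weight $2a$.

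In the odd case, ``pairwise overlap at most one'' bounds each overlap but not the number $P$ of overlapping pairs. The inclusion--exclusion bound $k(d_v+2)-2P$ therefore degrades quadratically in $k$. For example, take the $k\le d_c$ generators $\mathbf c_{v,t}$ with $v\in\mathcal N(c)$ for a single check $c$ in one interval $t$. Then $P=\binom{k}{2}$, and the bound $k(d_v+3-k)$ is at most $2d_v+2$ once $k\ge d_v+1$, although the true weight is $k(d_v+1)+\mathbf 1_{\{k\ \mathrm{odd}\}}$. So this route cannot rule out combinations of many generators.

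What is missing is the device from the proof of Lemma~\ref{lem:min-even-temporal}, which the paper reuses here. Parametrize $\mathbf D$ by its cumulative parities $\mathbf g_t$; its generator decomposition is then exactly $\{\mathbf c_{v,t}:v\in\operatorname{supp}\mathbf g_t\}$. Choose a slice $t^\star$ of maximal weight $q$. Combine the data bound $2q$ (every $v\in\operatorname{supp}\mathbf g_{t^\star}$ has two parity transitions) with the single-slice girth bound $w(\mathbf H\mathbf g_{t^\star})\ge q(d_v-q+1)$, and use the data part alone for $q\ge d_v+2$.

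The paper then reads off the equality cases of $w(\mathbf D)\ge 2d_v+2$:
\begin{itemize}
\item $q=1$ forces exactly two consecutive identical singleton slices (your case (a));
\item $q=2$ forces one nonzero slice on two variables sharing exactly one check (your case (b));
\item $q=d_v+1$ forces a zero measurement part;
\item every other $q$ gives a strict inequality.
\end{itemize}
The same computation without the parity argument on $L$ gives $w(\mathbf D)\ge d_v+2$, with equality only for single generators. That is what your even case needs; the paper's proof takes it for granted there.

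A smaller point: to obtain the stated incidence counts as equalities rather than upper bounds, add two observations. Every half-weight support on a generator is locally minimum, since $\mathcal G_D=\{\mathbf 0,\mathbf c\}$ and $w(\mathbf e\oplus\mathbf c)=w(\mathbf e)$. In the odd case, $w_1=w_2=(d_v+1)/2$ is also sufficient, because $w(\mathbf e\oplus\mathbf c_i)-w(\mathbf e)=1$ and $w(\mathbf e\oplus\mathbf D)=w(\mathbf e)$.
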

\begin{IEEEproof}
The proof is given in Appendix~\ref{app:theorem:upper boundsTemporal}.
\end{IEEEproof}
Intuitively, just as spatial degeneracy arises from linear combinations of rows of $\mathbf{H}_X$,
temporal degeneracy  arises from linear combinations of weight-($d_v+2$) temporal codeword generators. In the following subsection, we study the spatial-temporal case ($\mathbf{h}\neq \mathbf{0}$ in Definition~\ref{definition:definitionSpatialTemporal}).
\subsection{Spatial-Temporal Degeneracy}
\label{sec:sectemporalspatial}
We now focus on a more general case. Let
$\mathbf D
=
(\mathbf z^{(1)},\ldots,\mathbf z^{(r-1)}
\mid
\mathbf x^{(1)},\ldots,\mathbf x^{(r)})
\in \ker(\mathbf H')$
be a  kernel vector. We define $\Pi(\mathbf D)
:=
\sum_{t=1}^{r}\mathbf x^{(t)} \pmod 2.$
Temporal codewords correspond to the case $\Pi(\mathbf D)=\mathbf 0$.
In this subsection, we consider the case in which
$\mathbf D$ has nonzero measurement part and $\Pi(\mathbf D)=\mathbf h\neq \mathbf 0,$
where $\mathbf h$ is a stabilizer of the underlying code. We refer
to such vectors as \emph{mixed spatial-temporal codewords}.

Lemma~\ref{lem:spatiotemporal_generator} shows that mixed spatial-temporal codewords are sums of a stabilizer placed in one round and a temporal codeword. Hence no new type of generator arises, but spatial and temporal degeneracy combine, and a decoder must break the symmetries of both. This motivates the space-time anisotropic decoders of Section~\ref{sec:secvi}.

\begin{lemma}
\label{lem:spatiotemporal_generator}
Let $\mathbf D\in\ker(\mathbf H')$ satisfy
\[
\Pi(\mathbf D)=\mathbf h,
\]
where $\Pi$ is the parity of the data part and $\mathbf h$ is a  stabilizer, so that $\mathbf H\mathbf h=0$.
Then
\[
\mathbf D=\mathbf S_{\mathbf h}\oplus \mathbf T,
\]
where $\mathbf S_{\mathbf h}$ is a  stabilizer placed in one data
round and $\mathbf T$ is a temporal codeword generated by embedded
weight-$(d_v+2)$ temporal codeword generators.
\end{lemma}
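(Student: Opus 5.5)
The idea is to subtract from $\mathbf D$ a copy of the stabilizer $\mathbf h$ placed in a single data round, and then apply Lemma~\ref{lemma:temporalgenerator} to what is left.

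Concretely, fix any round, say the last one $t=r$. Let $\mathbf S_{\mathbf h}$ be the vector in $\mathbb F_2^N$ whose measurement part is zero, whose data block in round $r$ equals $\mathbf h$, and whose other data blocks are zero. Because $\mathbf H\mathbf h=\mathbf 0$, every check row of $\mathbf H'$ in round $r$ sees the syndrome $\mathbf H\mathbf h=\mathbf 0$ from this vector. No measurement column is involved, and no other round is touched. Hence $\mathbf H'\mathbf S_{\mathbf h}=\mathbf 0$, so $\mathbf S_{\mathbf h}\in\ker(\mathbf H')$. Moreover $\Pi(\mathbf S_{\mathbf h})=\mathbf h$ directly from the definition of $\Pi$. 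In words, $\mathbf S_{\mathbf h}$ is a spatial stabilizer embedded in one round.

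Next, set $\mathbf T:=\mathbf D\oplus\mathbf S_{\mathbf h}$. Since $\ker(\mathbf H')$ is a linear space, $\mathbf T\in\ker(\mathbf H')$. The map $\Pi$ is linear over $\mathbb F_2$, so
\[
\Pi(\mathbf T)=\Pi(\mathbf D)\oplus\Pi(\mathbf S_{\mathbf h})=\mathbf h\oplus\mathbf h=\mathbf 0 .
\]
Thus $\mathbf T$ is a temporal codeword in the sense of Definition~\ref{definition:definitionSpatialTemporal} (the case $\mathbf h=\mathbf 0$). By Lemma~\ref{lemma:temporalgenerator}, $\mathbf T$ is a linear combination of embedded weight-$(d_v+2)$ temporal codeword generators. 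This gives $\mathbf D=\mathbf S_{\mathbf h}\oplus\mathbf T$.

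The measurement part of $\mathbf T$ equals that of $\mathbf D$, because $\mathbf S_{\mathbf h}$ has zero measurement part. So when $\mathbf D$ is a mixed codeword, $\mathbf T$ carries the entire nonzero measurement part. The round in which $\mathbf S_{\mathbf h}$ is placed is arbitrary. Two placements differ by a vector whose data part is $\mathbf h$ in two rounds, which has zero total parity, so the choice only changes $\mathbf T$ by a temporal codeword.

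The only substantive ingredient is Lemma~\ref{lemma:temporalgenerator}. The one point to check carefully is that the hypothesis $\mathbf H\mathbf h=\mathbf 0$ is exactly what makes $\mathbf S_{\mathbf h}$ lie in the kernel. This holds because the data block of $\mathbf H'$ is block-diagonal in $\mathbf H$, so a vector supported in one round contributes only that round's syndrome. The hypothesis holds for every $\mathbf h\in\operatorname{rowspace}(\mathbf H_X)$, by CSS orthogonality $\mathbf H_Z\mathbf H_X^{T}=\mathbf 0$.
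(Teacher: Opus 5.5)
Your proposal is correct and follows the paper's proof step for step. The paper also places $\mathbf h$ in a single data round $q$ and uses $\mathbf H\mathbf h=\mathbf 0$ to get $\mathbf S_{\mathbf h}\in\ker(\mathbf H')$, then applies linearity of $\Pi$ and Lemma~\ref{lemma:temporalgenerator} to $\mathbf T=\mathbf D\oplus\mathbf S_{\mathbf h}$. Your extra remarks are correct but not needed: $\mathbf T$ carries the entire measurement part of $\mathbf D$, and moving $\mathbf S_{\mathbf h}$ to another round changes $\mathbf T$ only by a temporal codeword.
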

\begin{IEEEproof}
The proof is given in Appendix~\ref{app:lem:spatiotemporal_generator}.
\end{IEEEproof}
Fig.~\ref{fig:spatialtemporalOdd} depicts a  mixed spatial-temporal codeword for a regular two-block code with $d_v=3$ and girth larger than four.

\begin{figure}[htbp]
     \centering
     \begin{subfigure}[]
     {
        \centering
        \includegraphics[width=0.4\linewidth]{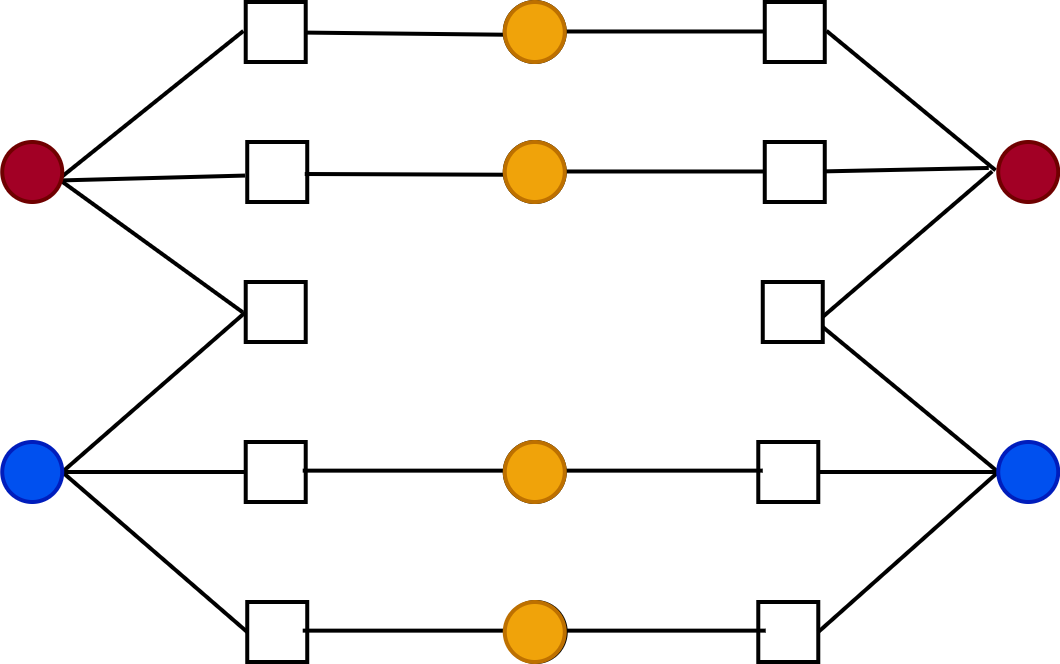}
         \label{fig:spatialtemporalOdd0}
    }
    \end{subfigure}
    \begin{subfigure}[]
    {
        \centering
       \includegraphics[width=0.4\linewidth]{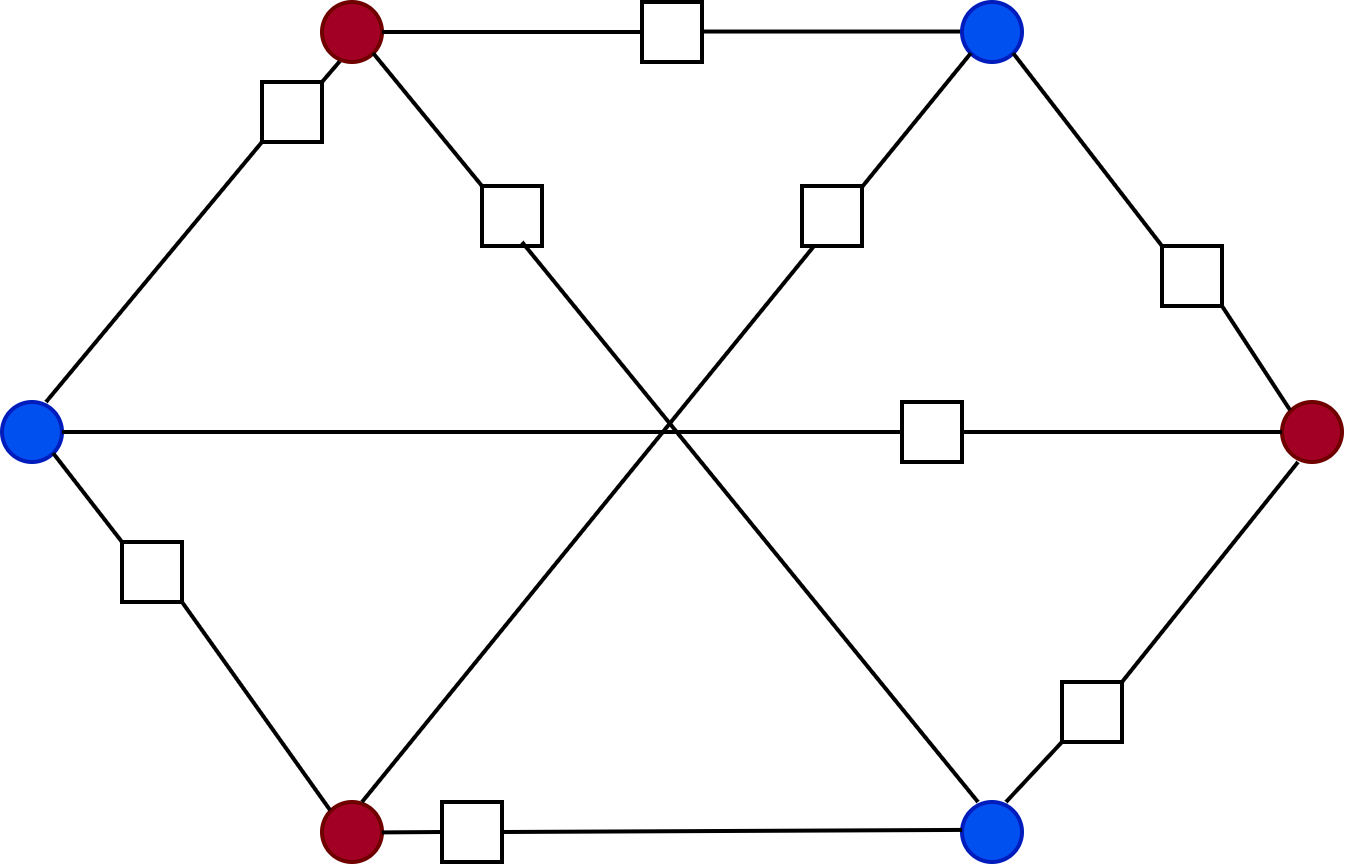}
        \label{fig:spatialtemporalOdd1}
    }
    
    \end{subfigure}

    \begin{subfigure}[]
     {
        \centering
        \includegraphics[width=0.4\linewidth]{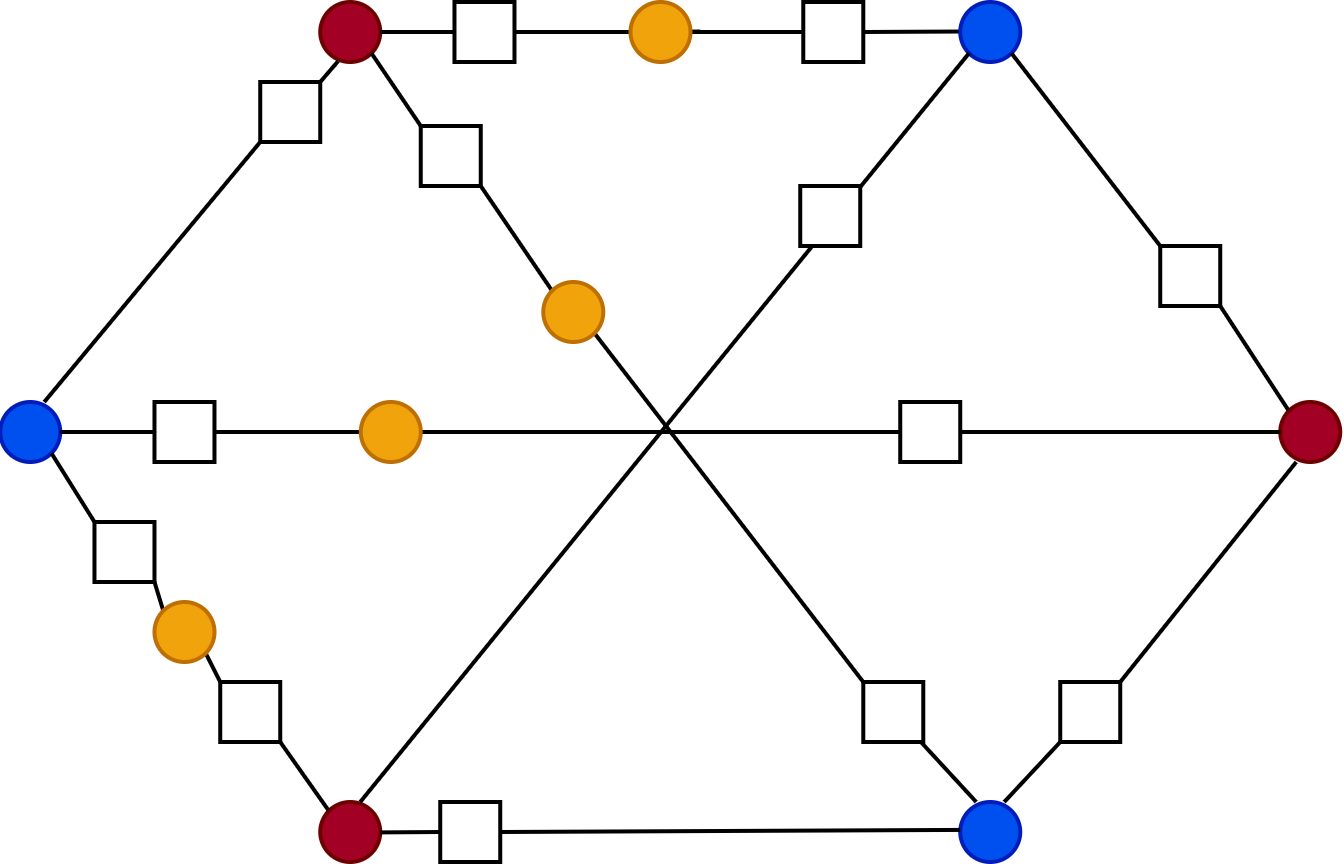}
         \label{fig:spatialtemporalOdd2}
    }
    \end{subfigure}
\caption{A weight-$10$ mixed spatial-temporal codeword~\subref{fig:spatialtemporalOdd2} obtained by combining a weight-$8$ temporal codeword~\subref{fig:spatialtemporalOdd0} and a weight-$6$  stabilizer~\subref{fig:spatialtemporalOdd1}. This configuration is found in BB codes. Data variable nodes are represented by \vnode{figred}, \vnode{figblue}, depending on which block they belong to, and measurement nodes are represented by \vnode{figyellow}. }
    \label{fig:spatialtemporalOdd}
\end{figure}
\section{Effect of temporal degeneracy on iterative decoding}
\label{sec:secv}
Theorem~\ref{theorem:upper boundsTemporal} gives decoder-independent upper bounds on the number of locally minimum half-weight error patterns supported on minimum even-weight temporal codewords. We now identify subsets of these patterns that cause symmetry-induced decoding failures for deterministic isotropic message-passing decoders under flooding scheduling, symmetric initialization, and the isolation assumption. We first introduce graph automorphisms and isotropic decoding to explain the mechanism responsible for these failures.
\begin{defn}[Graph automorphisms]
Let $G=(\mathcal V,\mathcal C,\mathcal E)$ be an induced subgraph of a Tanner graph. A \emph{type-preserving automorphism} of $G$ is a bijection
\[
\psi:\mathcal V\sqcup\mathcal C \to \mathcal V\sqcup\mathcal C
\]
such that
\begin{enumerate}
    \item $\psi(\mathcal V)=\mathcal V$ and $\psi(\mathcal C)=\mathcal C$,
    \item $(v,c)\in\mathcal E$ if and only if $(\psi(v),\psi(c))\in\mathcal E$, for $v \in \mathcal V$ and $c \in \mathcal C$ .
\end{enumerate}
The set of all such automorphisms is denoted by $\operatorname{Aut}(G)$.
If the check nodes carry syndrome labels $\boldsymbol{\sigma}=(\sigma_c), {c\in\mathcal C}$, we define the syndrome-label-preserving automorphism group as
\[
\operatorname{Aut}(G,\boldsymbol{\sigma})
:=
\{
\psi\in\operatorname{Aut}(G):
\sigma_{\psi(c)}=\sigma_c
\text{ for every }c\in\mathcal C
\}.
\]
\end{defn}
We consider iterative message-passing decoders operating on a Tanner graph $G=(\mathcal V,\mathcal C,\mathcal E)$  associated with a
parity-check matrix $\mathbf{H}$ and a syndrome vector $\boldsymbol{\sigma}$. For a check node $c \in \mathcal{C}$ and a variable node $v \in \mathcal{V}$, we denote their neighborhoods by $\mathcal{N}(c)$ and $\mathcal{M}(v)$, respectively.
A \emph{syndrome-based message-passing decoder} is
defined by update functions $\Phi_v$, $\Phi_c$, and a hard-decision function $\Psi_v$ such that, for all iterations $\ell\ge1$,
the messages evolve according to the flooding schedule (all variable and subsequently check nodes are updated in parallel):
\begin{align}
\nu^{(\ell)}_{v\to c}
&=
\Phi_v\!\left(
\lambda,\,
\{\mu^{(\ell-1)}_{c'\to v} : c' \in \mathcal{M}(v)\setminus\{c\}\}
\right),
\label{eq:isotropic_v}
\\
\mu^{(\ell)}_{c\to v}
&=
\Phi_c\!\left(
\sigma_c,\,
\{\nu^{(\ell)}_{v'\to c} : v' \in \mathcal{N}(c)\setminus\{v\}\}
\right),
\label{eq:isotropic_c}
\\
\widehat e_v^{(\ell)}
&=
\Psi_v\left(
\lambda,
\left\{
\mu_{c\to v}^{(\ell)}:c\in\mathcal M(v)
\right\}
\right)
\in\{0,1\},
\label{eq:isotropic_dec}
\end{align}
where $\nu^{(\ell)}_{v\to c}$ and $\mu^{(\ell)}_{c\to v}$ denote the variable-to-check and
check-to-variable messages, respectively,  and
$\widehat e_v^{(\ell)}$ denotes the binary hard decision at variable node $v$. The scalar $\lambda$ denotes a \emph{uniform prior} shared by all variable nodes. Initialize $\nu^{(1)}_{v\to c}=\lambda$. At iteration $\ell=1$,
compute the check-to-variable messages using Eq.~(\ref{eq:isotropic_c}) and hard decisions using Eq.~(\ref{eq:isotropic_dec}). For every $\ell\ge2$, first compute the variable-to-check messages using Eq.~(\ref{eq:isotropic_v}),
then the check-to-variable messages using Eq.~(\ref{eq:isotropic_c}), and finally the hard decisions using Eq.~(\ref{eq:isotropic_dec}).

\begin{defn}[Isotropic syndrome-based decoder]
\label{defn:isotropic}
A syndrome-based message-passing decoder defined by Eqs.~(\ref{eq:isotropic_v})--(\ref{eq:isotropic_dec}) is called \emph{isotropic} if:

\begin{enumerate}
    \item 
    The same update functions $\Phi_v$, $\Psi_v$, and $\Phi_c$ are used at all variable nodes and
    check nodes, respectively.

    \item 
    The update functions $\Phi_v$, $\Psi_v$, and $\Phi_c$ depend only on the \emph{multiset} of
    incoming messages and are invariant under permutations of their arguments.
    \item All update and decision functions are deterministic. In particular, any tie in the hard decision rule is resolved using the same deterministic tie-breaking rule at every variable node.
\end{enumerate}
\end{defn}

Examples of  isotropic message-passing decoders used in practice are BP, MS~\cite{05CDEFH} and Gallager~A/B~\cite{63Gallager} under flooding schedule.
The failures of iterative decoders are characterized by the notion of trapping sets, which we consider here from a purely graph-theoretic perspective independent of the channel and the decoder (based on~\cite{ontology}). An $(a, b)$ trapping set is a configuration
of $a$ variable nodes, for which the induced subgraph is connected and has $b$ odd-degree check
nodes.
For the trapping sets under consideration here, it holds that $b=0$, since they correspond to codewords. 

For the analysis below we will employ the isolation assumption~\cite{iso}. This assumption states that all nodes outside
the trapping set of interest, which in our case will correspond to a temporal codeword, are error-free and have converged, meaning that their 
 variable-to-check messages are of high positive amplitudes. This means that external messages preserve equivariance under the induced-subgraph
automorphisms considered below. In our case external messages are excluded during the check-to-variable updates of the decoder (Eq.~(\ref{eq:isotropic_c})).

In the following Lemma we show a property of isotropic decoding that is going to be useful for the next Theorem.

\begin{lemma}[Automorphism equivariance of isotropic decoding]
\label{lem:aut_equivariance}
Let $G=(\mathcal V,\mathcal C,\mathcal E)$ be an induced Tanner subgraph with syndrome labels
$\boldsymbol{\sigma}$ on check nodes, and consider a syndrome-based isotropic decoder
defined by \eqref{eq:isotropic_v}--\eqref{eq:isotropic_dec} under flooding schedule.

Assume the isolation assumption holds for an induced subgraph under test.
Let $\psi$ be a type-preserving automorphism of $G$ that preserves syndrome labels, i.e.,
$\sigma_c=\sigma_{\psi(c)}$ for all check nodes $c\in\mathcal C$.

Initialize messages consistently with $\psi$ (e.g., all variable priors equal to $\lambda$,
and boundary messages invariant under $\psi$).
Then for every iteration $\ell\ge 1$ and every edge $(v,c)\in\mathcal E$,
\[
\nu^{(\ell)}_{v\to c}=\nu^{(\ell)}_{\psi(v)\to \psi(c)},
\qquad
\mu^{(\ell)}_{c\to v}=\mu^{(\ell)}_{\psi(c)\to \psi(v)},
\]
and for every variable node $v\in\mathcal V$,
\[
\widehat e_v^{(\ell)}
=
\widehat e_{\psi(v)}^{(\ell)}.
\]
Equivalently, the hard decision vector restricted to
$\mathcal V$ is invariant under $\psi$ at every iteration, i.e., if $\psi(v)=v'$ then the hard decisions at $v$ and $v'$ coincide
for all iterations.
\end{lemma}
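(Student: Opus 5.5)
The plan is to argue by induction on the iteration index $\ell$, using the flooding schedule to order the claims within each iteration. The inductive hypothesis at iteration $\ell$ is the equality of all variable-to-check messages across $\psi$-related edges, $\nu^{(\ell)}_{v\to c}=\nu^{(\ell)}_{\psi(v)\to\psi(c)}$. From it I will derive the corresponding equality for check-to-variable messages, then the hard decisions, and finally the variable-to-check messages of iteration $\ell+1$.

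The base case is the initialization $\nu^{(1)}_{v\to c}=\lambda$ on every edge, which is trivially invariant. The isolation assumption is what lets us treat boundary contributions as $\psi$-invariant (or excluded) inputs. The first step is the check update. Since $\psi$ is a type-preserving automorphism, it maps $\mathcal N(c)$ bijectively onto $\mathcal N(\psi(c))$, and therefore it maps $\mathcal N(c)\setminus\{v\}$ bijectively onto $\mathcal N(\psi(c))\setminus\{\psi(v)\}$. This uses that $\psi$ is a bijection on $\mathcal V\sqcup\mathcal C$ preserving edges in both directions.

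By the inductive hypothesis, the multiset $\{\nu^{(\ell)}_{v'\to c}: v'\in\mathcal N(c)\setminus\{v\}\}$ therefore equals the multiset $\{\nu^{(\ell)}_{\psi(v')\to\psi(c)}\}$, which is exactly the input multiset at $\psi(c)$ toward $\psi(v)$. Several properties of the isotropic decoder then close the step: the same $\Phi_c$ is used at every check node, $\Phi_c$ is permutation invariant, and it is deterministic. Together with $\sigma_c=\sigma_{\psi(c)}$, these give $\mu^{(\ell)}_{c\to v}=\mu^{(\ell)}_{\psi(c)\to\psi(v)}$.

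The same argument, with $\mathcal M(v)$ mapped bijectively onto $\mathcal M(\psi(v))$, handles the remaining updates. Using the common uniform prior $\lambda$, the common function $\Psi_v$, and the deterministic common tie-breaking rule (item 3 of Definition~\ref{defn:isotropic}), it yields $\widehat e^{(\ell)}_v=\widehat e^{(\ell)}_{\psi(v)}$. Applied to $\mathcal M(v)\setminus\{c\}$ with $\Phi_v$, it yields $\nu^{(\ell+1)}_{v\to c}=\nu^{(\ell+1)}_{\psi(v)\to\psi(c)}$, which completes the induction.

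The one delicate point is the handling of messages entering $G$ from outside the induced subgraph. For a node of $G$ that has neighbors outside $G$, its true input multiset in the full Tanner graph contains external messages, and these are not a priori related by $\psi$, since $\psi$ is only an automorphism of $G$. I will dispose of this by invoking the isolation assumption exactly as stated before the lemma. External variable-to-check messages are excluded in Eq.~(\ref{eq:isotropic_c}), and any boundary inputs to variable nodes are assumed invariant under $\psi$ by the consistent initialization hypothesis. With this, the effective update at each node of $G$ depends only on the inputs within $G$ together with $\psi$-invariant data, and the multiset argument goes through unchanged.

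The final equivalent statement, invariance of the restricted hard-decision vector, is simply the hard-decision equality rewritten with $v'=\psi(v)$.
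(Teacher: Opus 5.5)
Your proposal is correct and follows essentially the same induction as the paper. Both start from the uniform initialization $\nu^{(1)}_{v\to c}=\lambda$ and use the bijections $\mathcal N(c)\setminus\{v\}\to\mathcal N(\psi(c))\setminus\{\psi(v)\}$ and $\mathcal M(v)\setminus\{c\}\to\mathcal M(\psi(v))\setminus\{\psi(c)\}$, together with permutation invariance of the common update functions and $\sigma_c=\sigma_{\psi(c)}$. The only differences are bookkeeping (you induct on $\nu^{(\ell)}$ alone and derive $\mu^{(\ell)}$, $\widehat e^{(\ell)}_v$, $\nu^{(\ell+1)}$ in flooding order, where the paper inducts on all messages at once) and your more explicit handling of external messages through the isolation assumption.
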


\begin{IEEEproof}
We will use induction on the $\ell$-th iteration.

\emph{Base step.} At $\ell=1$, variable-to-check messages are initialized as
$\nu^{(1)}_{v\to c}=\lambda$
so they are identical on all edges and hence invariant under $\psi$.
Using $\sigma_c=\sigma_{\psi(c)}$ and that $\psi$ preserves neighborhoods,
the multisets
$
\{\nu^{(1)}_{v'\to c}: v'\in\mathcal N(c)\setminus\{v\}\}
$
and
$
\{\nu^{(1)}_{\psi(v')\to \psi(c)}:\psi(v')\in\mathcal N(\psi(c))\setminus\{\psi(v)\}\}
$
agree up to permutation. Since $\Phi_c$ depends only on the multiset of its arguments,
we obtain
$
\mu^{(1)}_{c\to v}=\mu^{(1)}_{\psi(c)\to \psi(v)}.
$

\emph{Induction step.} Assume the statement holds at iteration $\ell \geq 1$ for all messages.
Fix $(v,c)\in\mathcal E$. By the induction hypothesis and the fact that $\psi$ maps
$\mathcal M(v)\setminus\{c\}$ bijectively to $\mathcal M(\psi(v))\setminus\{\psi(c)\}$,
the multiset
$
\{\mu^{(\ell)}_{c'\to v}: c'\in\mathcal M(v)\setminus\{c\}\}
$
equals the multiset
$
\{\mu^{(\ell)}_{\psi(c')\to \psi(v)}: \psi(c')\in\mathcal M(\psi(v))\setminus\{\psi(c)\}\}
$
up to permutation. By permutation invariance of $\Phi_v$,
$
\nu^{(\ell+1)}_{v\to c}=\nu^{(\ell+1)}_{\psi(v)\to \psi(c)}.
$
Applying the same argument to the check update \eqref{eq:isotropic_c} and using
$\sigma_c=\sigma_{\psi(c)}$ yields
$
\mu^{(\ell+1)}_{c\to v}=\mu^{(\ell+1)}_{\psi(c)\to \psi(v)}.
$

Finally, the hard decisions \eqref{eq:isotropic_dec} satisfy
$\widehat e_v^{(\ell)} = \widehat e_{\psi(v)}^{(\ell)}$ by the same multiset argument.
\end{IEEEproof}

The next theorem establishes a symmetry-induced local failure of syndrome-based isotropic decoders. Under the isolation assumption, the decoder's hard decision
cannot coincide with either of the two syndrome-compatible half-weight error patterns associated with each counted temporal codeword. Consequently, it cannot produce a syndrome-consistent hard decision supported entirely on the corresponding isolated subgraph.

To state the resulting lower bounds compactly, define
\begin{equation}
\underline{N}_{\mathrm{even}}
:=
2\binom{d_v}{d_v/2}n(r-1),
\label{eq:lowerEven}
\end{equation}
and
\begin{equation}
\begin{split}
\underline{N}_{\mathrm{odd}}
:={}&
2\binom{d_v}{(d_v-1)/2}^{2}n(r-2)\\
&+
\left[
2
\binom{d_v-1}{(d_v+1)/2}
\binom{d_v-1}{(d_v-3)/2}
+
4
\binom{d_v-1}{(d_v-1)/2}^{2}
\right]\\
&\hspace{1.5cm}\times
\frac{1}{2}nd_v(d_c-1)(r-1).
\end{split}
\label{eq:lowerOdd}
\end{equation}

\begin{theorem}
Let $\mathbf H$ be a $(d_v,d_c)$-regular parity-check matrix
with $d_v>2$ and girth larger than four, and let $\mathbf H'$
be the corresponding phenomenological parity-check matrix for
$r\geq 2$ measurement rounds. Consider a deterministic syndrome-based isotropic decoder under the flooding schedule, as in Definition~\ref{defn:isotropic}. Under the isolation assumption, there exist at least $\underline N_{\mathrm{even}}$ distinct locally minimum weight-$(d_v+2)/2$ error patterns if $d_v$ is even, and at
least $\underline N_{\mathrm{odd}}$ distinct locally minimum weight-$(d_v+1)$ error patterns if $d_v$ is odd, with the
following property.

For every counted error pattern $\mathbf e$, let $\mathbf D$ be its associated minimum even-weight temporal codeword, let
$\mathcal V:=\operatorname{supp}(\mathbf D)$, and
define its complementary degenerate error pattern by
\[
\mathbf f:=\mathbf e\oplus\mathbf D.
\]
For the input syndrome
\[
\boldsymbol\sigma
=
\mathbf H'\mathbf e
=
\mathbf H'\mathbf f,
\]
the restriction of the decoder's hard decision to
$\mathcal V$ satisfies
\[
\left.
\widehat{\mathbf e}^{(\ell)}
\right|_{\mathcal V}
\notin
\{\mathbf e,\mathbf f\}
\]
at every iteration $\ell$. Moreover, since $\mathbf e$ and $\mathbf f$ are the only error patterns supported on
$\mathcal V$ that satisfy this syndrome, the decoder cannot produce a syndrome-consistent hard decision supported entirely on the isolated temporal-codeword subgraph.
\label{theorem:isotropic-obstruction}
\end{theorem}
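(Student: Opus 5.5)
The proof will be a symmetry-plus-counting argument. The plan is to exhibit, for each counted pattern, an automorphism $\psi\in\operatorname{Aut}(G,\boldsymbol\sigma)$ of the induced subgraph $G$ of $\mathcal V=\operatorname{supp}(\mathbf D)$ with $\psi(\operatorname{supp}\mathbf e)=\operatorname{supp}\mathbf f$. By Lemma~\ref{lem:aut_equivariance} and the isolation assumption, the hard decision is $\psi$-invariant, whereas $\mathbf e$ and $\mathbf f$ are not $\psi$-invariant because they are disjoint and nonempty. This yields $\widehat{\mathbf e}^{(\ell)}|_{\mathcal V}\notin\{\mathbf e,\mathbf f\}$ for every $\ell$.

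\emph{Even $d_v$.} Take $\mathbf D$ to be an embedded generator of weight $d_v+2$. There are $n(r-1)$ of these, one for each data node $v$ and each consecutive round pair $(t,t+1)$. Its subgraph is a subdivided $K_{2,d_v}$: the two data copies $v^{(t)},v^{(t+1)}$, the $d_v$ measurement nodes, and $2d_v$ degree-2 checks.

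Consider the involution $\psi$ that swaps $v^{(t)}\leftrightarrow v^{(t+1)}$, fixes every measurement node, and swaps the two checks adjacent to each measurement node. It preserves syndrome labels if, for each measurement node $z$, the two checks at $z$ carry the same label. I will choose $\mathbf e$ to contain $v^{(t)}$ and a set $S$ of $d_v/2$ measurement nodes, and let $\mathbf f$ contain $v^{(t+1)}$ and the complement $S^c$. Then:
\begin{itemize}
\item each check at a node $z\in S$ sees $v^{(\cdot)}$ plus $z$ from one side and $z$ alone from the other;
\item so the labels at $z$ are equal: $0$ for $z\in S$ and $1$ for $z\notin S$.
\end{itemize}
Hence $\psi\in\operatorname{Aut}(G,\boldsymbol\sigma)$ and $\psi$ maps $\mathbf e$ to $\mathbf f$. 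Counting the choices of $S$ together with the choice of data copy gives $2\binom{d_v}{d_v/2}$ patterns per generator, so $\underline N_{\mathrm{even}}$ patterns in total. Local minimality within $\mathcal G_D=\{0,\mathbf D\}$ is immediate since $w(\mathbf e)=w(\mathbf f)$.

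\emph{Odd $d_v$.} By Lemma~\ref{lem:min-even-temporal}, $\mathbf D$ must be a weight-$(2d_v+2)$ codeword. There are two types:
\begin{itemize}
\item \textbf{(a)} two generators of the same $v$ in rounds $(t,t+1)$ and $(t+1,t+2)$ sharing the middle copy $v^{(t+1)}$, which cancels. There are $n(r-2)$ of these.
\item \textbf{(b)} two generators of distinct data nodes $u,v$ sharing a check $c$ in the same round pair, whose common measurement node cancels. Girth $>4$ ensures $u,v$ share exactly one check, giving $\tfrac12 n d_v(d_c-1)(r-1)$ of these.
\end{itemize}

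For (a), I will use the reflection $\psi$ that swaps $v^{(t)}\leftrightarrow v^{(t+2)}$ and swaps the two measurement layers. I will take $\mathbf e$ to contain $v^{(t)}$, $(d_v-1)/2$ measurement nodes of layer 1 and $(d_v+1)/2$ of layer 2, arranged so that the mirrored complementary choice is $\mathbf f$ and the labels are $\psi$-invariant. This gives $2\binom{d_v}{(d_v-1)/2}^2$ patterns.

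For (b), the map swapping $u$'s gadget with $v$'s gadget and the two round copies, while fixing the shared measurement-node structure, plays the same role. Splitting by whether the shared node lies in $\mathbf e$ produces the bracketed binomial terms of $\eqref{eq:lowerOdd}$.

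In every case I must check that the patterns are locally minimum in the sense of Definition~\ref{def:locallymin}. This means testing $\mathbf e\oplus\mathbf c_1$ and $\mathbf e\oplus\mathbf c_2$, using the same per-$\mathcal U_i$ constraints as in Theorem~\ref{theorem1}.

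Finally, for the uniqueness claim, any $\mathbf g$ supported on $\mathcal V$ with $\mathbf H'\mathbf g=\boldsymbol\sigma$ satisfies $\mathbf g\oplus\mathbf e\in\ker(\mathbf H')$ with support in $\mathcal V$. Minimality of $\mathbf D$ (Corollary~\ref{cor:minimalONEvn} for type (a), and a direct girth argument for type (b)) forces $\mathbf g\oplus\mathbf e\in\{0,\mathbf D\}$.

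I expect the main obstacle to be the odd case, specifically type (b). There I must verify that the swap is a genuine label-preserving automorphism once the merged check $c$ is accounted for, and that the stated binomial counts are exactly the symmetric, locally minimum choices. I would first try writing $\mathbf e$'s restriction to each gadget explicitly and checking the labels on $c$ and its round copy by hand.
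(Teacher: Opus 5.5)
Your overall strategy is the paper's: find a label-preserving automorphism exchanging the two supports, apply Lemma~\ref{lem:aut_equivariance}, and use minimality of $\mathbf D$ for uniqueness. However, the automorphism you construct in the even case is not syndrome-label-preserving, so the lemma cannot be invoked. Take $\mathbf e=\{v^{(t)}\}\cup S$ and a measurement node $z$. The round-$t$ check at $z$ has neighbors $v^{(t)},z$ and label $1\oplus\mathbf 1\{z\in S\}$. The round-$(t+1)$ check at $z$ has neighbors $v^{(t+1)},z$ and label $\mathbf 1\{z\in S\}$. Your own first bullet says exactly this, and it gives the label pair $(0,1)$ for $z\in S$ and $(1,0)$ for $z\notin S$, never equal. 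In fact your $\psi$ changes the $\mathbf e$-membership of exactly one endpoint of every check, so it complements every label. It also does not send $\mathbf e$ to $\mathbf f$, because it fixes $S$ instead of sending it to $S^c$.

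The fix is the principle you stated at the outset but did not use. In the isolated subgraph every check has exactly two neighbors in $\mathcal V$, and its label is the parity of their membership in $\mathcal V_e$. Hence any graph automorphism with $\psi(\mathcal V_e)=\mathcal V\setminus\mathcal V_e$ preserves all labels automatically. Here that means swapping $v^{(t)}\leftrightarrow v^{(t+1)}$ \emph{and} exchanging $S$ with $S^c$ by a bijection $\pi$ (both have size $d_v/2$), with the check on $(a,b)$ sent to the check on $(\psi(a),\pi(b))$. This is the paper's construction, and with it your count $2\binom{d_v}{d_v/2}$ per generator is correct.

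In the odd case, your constructions do not support the counts you claim.

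\emph{Type (a).} Index the $d_v$ paths $v^{(t)}$--$b_{1,j}$--$b_{2,j}$--$v^{(t+2)}$, and let $S_1,S_2\subseteq[d_v]$ be the selected layer-1 and layer-2 indices. The pure reflection $b_{1,j}\leftrightarrow b_{2,j}$ exchanges $\mathcal V_e$ with its complement (equivalently, preserves the labels of the swapped end checks) only if $S_2=[d_v]\setminus S_1$. That yields $2\binom{d_v}{(d_v-1)/2}$ patterns, not $2\binom{d_v}{(d_v-1)/2}^2$. The missing idea is to compose the reflection with a permutation $\rho$ of the paths that exchanges $C_{00}=[d_v]\setminus(S_1\cup S_2)$ with $C_{11}=S_1\cap S_2$ and fixes the remaining indices. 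Since $|S_1|+|S_2|=d_v$ forces $|C_{00}|=|C_{11}|$, this works for every admissible pair $(S_1,S_2)$.

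\emph{Type (b).} Splitting by whether the shared node lies in $\mathbf e$ is meaningless. The shared measurement node cancels in $\mathbf c_{u,t}\oplus\mathbf c_{v,t}$, so it is not in $\operatorname{supp}(\mathbf D)\supseteq\operatorname{supp}(\mathbf e)$. The bracketed terms instead come from $t_1=|\mathcal V_e\cap\{u^{(t)},u^{(t+1)}\}|$, with $t_2$ defined likewise for $v$. Automorphisms preserve the four degree-$d_v$ data nodes, which forces $t_1+t_2=2$. Conversely, when $t_1+t_2=2$ you can take bijections from $u$'s two data copies to $v$'s, and from $u$'s remaining $d_v-1$ measurement neighbors to $v$'s, that send selected nodes to unselected ones. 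Use their inverses on $v$'s side; the two copies of the shared check are then fixed or swapped. With $w=(d_v+1)/2$, the cases $t_1\in\{0,2\}$ and $t_1=1$ give $2\binom{d_v-1}{w}\binom{d_v-1}{w-2}$ and $4\binom{d_v-1}{w-1}^2$, respectively.

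\emph{Distinctness.} The theorem asserts \emph{distinct} patterns, but you only multiply per-codeword counts by the number of codewords. You still need to show that no support is attached to two different $\mathbf D$. Types (a) and (b) are separated by the number of data coordinates in the support (one versus two). The delicate case is type (b) with $t_u=2$: there the $w\ge 2$ selected measurement neighbors of $v$ determine $v$ by the girth assumption.
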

\begin{IEEEproof}[Proof sketch]
Let $\mathbf e$ be one of the counted half-weight error patterns, with its support contained in the support of $\mathbf D$, its associated minimum even-weight temporal codeword, and define
\[
\mathbf f:=\mathbf e\oplus\mathbf D.
\]
Since $\mathbf D\in\ker(\mathbf H')$, the two error patterns produce the same syndrome:
\[
\mathbf H'\mathbf e=\mathbf H'\mathbf f.
\]

Let
\[
\mathcal V_e:=\operatorname{supp}(\mathbf e),
\qquad
\mathcal V_f:=\operatorname{supp}(\mathbf f)
             =\mathcal V\setminus\mathcal V_e.
\]
For every pattern counted in
Appendix~\ref{app:isotropicCounting}, there exists a type-preserving, syndrome-label-preserving automorphism $\psi$ of the induced
subgraph satisfying
\[
\psi(\mathcal V_e)=\mathcal V_f,
\qquad
\psi(\mathcal V_f)=\mathcal V_e.
\]
The explicit constructions and enumeration are given in Appendix~\ref{app:isotropicCounting}.

By Lemma~\ref{lem:aut_equivariance}, the hard decisions at variable nodes related by $\psi$ are equal. Consequently, the support of the
hard decision within $\mathcal V$ is invariant under $\psi$:
\[
\psi\!\left(
\operatorname{supp}(\widehat{\mathbf e}^{(\ell)})
\cap\mathcal V
\right)
=
\operatorname{supp}(\widehat{\mathbf e}^{(\ell)})
\cap\mathcal V.
\]
It therefore cannot equal $\mathcal V_e$, because $\psi(\mathcal V_e)=\mathcal V_f\ne\mathcal V_e$. The same argument shows that it cannot equal $\mathcal V_f$.
Hence,
\[
\left.
\widehat{\mathbf e}^{(\ell)}
\right|_{\mathcal V}
\notin
\{\mathbf e,\mathbf f\}
\]
at every iteration.
Furthermore, the subgraph induced by
$\mathcal V$ is connected, and every adjacent check
node has degree two within this subgraph. Therefore,
\[
\ker\!\left(
\left.\mathbf H'\right|_{\mathcal V}
\right)
=
\operatorname{span}_{\mathbb F_2}\{\mathbf D\}.
\]
Indeed, if an error pattern $\mathbf g$ is supported on
$\mathcal V$ and satisfies
$
\mathbf H'\mathbf g
=
\mathbf H'\mathbf e,
$
then
\[
\mathbf g\oplus\mathbf e
\in
\ker\!\left(
\left.\mathbf H'\right|_{\mathcal V}
\right).
\]
Consequently,
\[
\mathbf g\oplus\mathbf e
\in\{\mathbf 0,\mathbf D\},
\]
and hence
\[
\mathbf g\in\{\mathbf e,\mathbf f\}.
\]
Thus, the decoder cannot produce a syndrome-consistent hard decision supported entirely on the isolated subgraph.

Finally, Appendix~\ref{app:isotropicCounting} constructs the required syndrome-label-preserving automorphisms and shows that they exist for at least $\underline N_{\mathrm{even}}$ distinct error patterns when $d_v$ is even and at least
$\underline N_{\mathrm{odd}}$ distinct error patterns when
$d_v$ is odd.
\end{IEEEproof}

Theorem~\ref{theorem:upper boundsTemporal} gives decoder-agnostic
upper bounds on the numbers of distinct locally minimum half-weight
error patterns supported on minimum even-weight temporal codewords. Theorem~\ref{theorem:isotropic-obstruction}, on the other hand, gives
lower bounds on the subset of these patterns that produce a local decoding failure for isotropic decoders under the isolation assumption. Here we note that Theorem~\ref{theorem:isotropic-obstruction} does not predict the logical error rate of the full graph which depends on the underlying decoding dynamics. Combining the two theorems
gives the following bounds.

Let $X_w$ denote the number of distinct weight-$w$ locally minimum half-weight errors supported on minimum even-weight temporal codewords with nonzero measurement parts for which, under the isolation and initialization assumptions of Lemma~\ref{lem:aut_equivariance}, an isotropic decoder cannot produce a syndrome-consistent hard decision supported on the isolated subgraph.

\begin{corollary}
\label{cor:decoder_independent_temporal}
Under the phenomenological noise model and the isolation assumption,
\[
\underline{N}_{\mathrm{even}}\;\le\; X_{\frac{d_v+2}{2}} \;\le\; \overline{N}_{\mathrm{even}}
\quad\text{for even } d_v,
\]
and
\[
\underline{N}_{\mathrm{odd}} \;\le\; X_{d_v+1} \;\le\; \overline{N}_{\mathrm{odd}}
\quad\text{for odd } d_v.
\]
In particular, $X_w=\Theta(nr)$ for fixed $d_v$ and $d_c$.
\end{corollary}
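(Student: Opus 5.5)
The corollary is a sandwich, so I will prove the lower and upper inequalities separately and then read off the asymptotic order. I fix the parity of $d_v$ and write $w$ for the relevant weight, namely $w=(d_v+2)/2$ for even $d_v$ and $w=d_v+1$ for odd $d_v$. By definition, $X_w$ counts a subset of the distinct weight-$w$ locally minimum half-weight errors supported on minimum even-weight temporal codewords with nonzero measurement parts. The subset consists of those patterns for which the isolation and initialization assumptions of Lemma~\ref{lem:aut_equivariance} force an isotropic decoder to fail to output a syndrome-consistent hard decision on the isolated subgraph.

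For the upper bound I will invoke Theorem~\ref{theorem:upper boundsTemporal}. It enumerates all locally minimum half-weight errors supported on minimum even-weight temporal codewords with nonzero measurement parts, counted with multiplicity over incidences. The number of distinct such patterns is therefore at most $\overline N_{\mathrm{even}}$, respectively $\overline N_{\mathrm{odd}}$. Since $X_w$ counts a subset of these distinct patterns, $X_w\le\overline N$ follows at once. The only point to check is that the weights agree with the classification in Theorem~\ref{theorem:upper boundsTemporal}. For even $d_v$ the minimum even-weight temporal codewords are the generators of weight $d_v+2$. For odd $d_v$, Lemma~\ref{lem:min-even-temporal} shows they have weight $2d_v+2$, so in both cases half the weight is $w$.

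For the lower bound I will invoke Theorem~\ref{theorem:isotropic-obstruction}. It exhibits at least $\underline N_{\mathrm{even}}$ (respectively $\underline N_{\mathrm{odd}}$) distinct locally minimum patterns of weight $w$, each associated with a minimum even-weight temporal codeword $\mathbf D$. For each such pattern, the hard decision restricted to $\operatorname{supp}(\mathbf D)$ is never $\mathbf e$ or $\mathbf f$, and these are the only syndrome-consistent patterns on that support. Hence every exhibited pattern is counted in $X_w$. I also need the associated codewords to have nonzero measurement parts. This is true for the generators of Lemma~\ref{lemma:vectors} and for the weight-$(2d_v+2)$ constructions in Appendix~\ref{app:isotropicCounting}, which are built from generators sharing or chaining measurement nodes. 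It follows that $X_w\ge\underline N$.

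Finally, for fixed $d_v,d_c$ all binomial coefficients and the factor $d_v(d_c-1)$ are constants. Both bounds are therefore positive constant multiples of $n(r-1)$ or $n(r-2)$, plus $n(r-1)$ in the odd case. For $r\ge 2$ each of these is $\Theta(nr)$; when $r=2$ the odd-case term $n(r-2)$ vanishes, but the second term still supplies order $nr$. Hence $X_w=\Theta(nr)$. The main obstacle is the consistency check in the lower bound: I must confirm that the patterns constructed in Appendix~\ref{app:isotropicCounting} are exactly of the class counted in Theorem~\ref{theorem:upper boundsTemporal}, meaning locally minimum in the sense of Definition~\ref{def:locallymin} and supported on minimum even-weight temporal codewords with nonzero measurement part. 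Without that, the two bounds would refer to different sets. I would settle this by checking the per-block weight constraints of Eq.~(\ref{eq:admissibleX}) on each construction.
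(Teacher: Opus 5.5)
Your proposal is correct and follows the paper's argument. The upper bound is Theorem~\ref{theorem:upper boundsTemporal}: incidence counts bound the number of distinct patterns, and $X_w$ counts a subset of them. The lower bound is Theorem~\ref{theorem:isotropic-obstruction}, and the $\Theta(nr)$ scaling is read off the explicit linear forms in $n(r-1)$ and $n(r-2)$; your handling of $r=2$ is correct.

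The consistency issue you single out as the main obstacle is already settled by the statement of Theorem~\ref{theorem:isotropic-obstruction}. That theorem asserts that the exhibited patterns are locally minimum half-weight patterns on their associated minimum even-weight temporal codewords. So only the nonzero-measurement-part condition needs your separate (easy) check, and no appeal to Eq.~(\ref{eq:admissibleX}) is required.
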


\begin{IEEEproof} 
This follows directly from Theorem~\ref{theorem:upper boundsTemporal} and Theorem~\ref{theorem:isotropic-obstruction} by observing that
the counted degenerate supports have constant weight and occur with multiplicity
$\Theta(nr)$ for $(d_v,d_c)$-regular codes of girth larger than four and $d_v>2,r\geq2$.
\end{IEEEproof}

In other words, there exist constant-weight temporal degenerate error patterns whose number grows linearly with the blocklength $n$ and the number of measurement rounds $r$, and on which the hard decision of syndrome-based isotropic iterative decoders fails. These patterns are imposed by repeated syndrome measurements and persist independently of the decoder realization, as long as it is isotropic (Definition~\ref{defn:isotropic}), and of the code, as long as it is $(d_v,d_c)$-regular with girth larger than four.

The temporal codewords responsible for these failures are introduced by the measurement model and occur across codes that satisfy the stated assumptions. Since the sparse Tanner graphs of QLDPC codes make message passing attractive for low-complexity decoding, we address these symmetries directly within the message update rules. Assigning different update rules to nodes exchanged by the relevant syndrome-preserving automorphisms can break these symmetries, realized via \emph{anisotropic decoding} as defined below. We note that symmetry breaking does not by itself guarantee successful decoding.

\begin{defn}[Anisotropic syndrome-based decoder]
\label{defn:anisotropic}
Let $G=(\mathcal V,\mathcal C,\mathcal E)$ be a Tanner graph with syndrome labels
$\boldsymbol{\sigma}$. Let $\kappa:\mathcal V\to\{1,\dots,K\}$ be a labeling
(coloring) of variable nodes. An anisotropic syndrome-based decoder
uses update functions $\Phi_v^{(k)}$ at variable node $v$ whenever
$\kappa(v)=k$, $k \in [K]$ and check and hard decision updates $\Phi_c,\Psi_v$ as before.
\end{defn}
Define the label-preserving automorphism group
\begin{equation}
\mathrm{Aut}(G,\boldsymbol{\sigma},\kappa)
=
\left\{
\psi\in\mathrm{Aut}(G,\boldsymbol{\sigma}) :
\kappa(v)=\kappa(\psi(v))\ \forall v\in\mathcal V
\right\}.
\end{equation}
\begin{prop}[Symmetry breaking via variable node labeling]
\label{prop:prop1}
Let $G=(\mathcal V,\mathcal C,\mathcal E)$ be a Tanner graph with
syndrome labels $\boldsymbol{\sigma}$.
Let $\kappa:\mathcal V\to\{1,\dots,K\}$ be a labeling (coloring)
of the variable nodes, and consider a syndrome-based decoder
that applies update rule $\Phi_v^{(\kappa(v))}$ at variable node $v$,
while check and hard decision updates use $\Phi_c$ and $\Psi_v$ as before.

Assume symmetric initialization and the isolation assumption.
Then, message evolution is equivariant with respect to the
label-preserving automorphism group
\[
\mathrm{Aut}(G,\boldsymbol{\sigma},\kappa)
=
\{\psi\in\mathrm{Aut}(G,\boldsymbol{\sigma}) :
\kappa(v)=\kappa(\psi(v))\ \forall v\}.
\]

In particular, for a temporal codeword
support,
if the labeling $\kappa$ is chosen such that no
$\psi\in\mathrm{Aut}(G,\boldsymbol{\sigma},\kappa)$
satisfies
\[
\psi(\mathcal V_e)=\mathcal V\setminus \mathcal V_e
\]
for the corresponding harmful supports, then the condition for the symmetry-based local failure of
Theorem~\ref{theorem:isotropic-obstruction} does not apply.
\label{prop:coloring}
\end{prop}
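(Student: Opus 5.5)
The argument reruns the induction of Lemma~\ref{lem:aut_equivariance} with one modification: the symmetry group is restricted to automorphisms that preserve the labeling $\kappa$. Fix $\psi\in\mathrm{Aut}(G,\boldsymbol{\sigma},\kappa)$. By definition, $\psi$ is type-preserving, maps neighborhoods bijectively onto neighborhoods, preserves the syndrome labels $\sigma_c=\sigma_{\psi(c)}$, and satisfies $\kappa(\psi(v))=\kappa(v)$. The last property is the only new ingredient. It guarantees that the variable update used at $\psi(v)$ is $\Phi_v^{(\kappa(\psi(v)))}=\Phi_v^{(\kappa(v))}$, the same function used at $v$. Under symmetric initialization and the isolation assumption, the boundary messages are invariant under $\psi$. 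The priors must also satisfy $\lambda_{\psi(v)}=\lambda_v$; if the anisotropy also enters through the priors, we absorb it into the label $\kappa$.

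The induction itself proceeds as follows.
\begin{enumerate}
\item \emph{Base step.} At $\ell=1$ we have $\nu^{(1)}_{v\to c}=\lambda$ on every edge. Since $\Phi_c$ is a common function, permutation invariant, and $\sigma_c=\sigma_{\psi(c)}$, we get $\mu^{(1)}_{c\to v}=\mu^{(1)}_{\psi(c)\to\psi(v)}$.
\item \emph{Variable step.} Assume equivariance of all messages at iteration $\ell$. The multiset $\{\mu^{(\ell)}_{c'\to v}\}_{c'\in\mathcal M(v)\setminus\{c\}}$ coincides with the corresponding multiset at $(\psi(v),\psi(c))$. Because the same function $\Phi_v^{(\kappa(v))}$ is applied at both nodes, $\nu^{(\ell+1)}_{v\to c}=\nu^{(\ell+1)}_{\psi(v)\to\psi(c)}$. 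This is exactly where label preservation is used. For a general $\psi\in\mathrm{Aut}(G,\boldsymbol\sigma)$ with $\kappa(\psi(v))\neq\kappa(v)$, two different functions would be applied and the equality could fail.
\item \emph{Check step and hard decisions.} These follow verbatim as in Lemma~\ref{lem:aut_equivariance}, since $\Phi_c$ and $\Psi_v$ are common to all nodes. This gives equivariance of all messages and hard decisions with respect to $\mathrm{Aut}(G,\boldsymbol\sigma,\kappa)$.
\end{enumerate}

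For the ``in particular'' claim, return to the proof of Theorem~\ref{theorem:isotropic-obstruction}. There, the conclusion $\widehat{\mathbf e}^{(\ell)}|_{\mathcal V}\notin\{\mathbf e,\mathbf f\}$ rests entirely on the existence of an automorphism under which the decoder is equivariant and which swaps $\mathcal V_e$ and $\mathcal V_f=\mathcal V\setminus\mathcal V_e$. For the anisotropic decoder, the first part shows that the only automorphisms guaranteed to be respected are those in $\mathrm{Aut}(G,\boldsymbol\sigma,\kappa)$. If $\kappa$ is chosen so that no such $\psi$ maps $\mathcal V_e$ to $\mathcal V\setminus\mathcal V_e$, the hypothesis of that argument is not met, and the symmetry-based local failure no longer applies. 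In line with the remark before Definition~\ref{defn:anisotropic}, this is only a statement that the obstruction is removed; it does not claim that decoding succeeds.

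The main obstacle is precision rather than difficulty. We must state clearly that the isolation assumption and the initialization must themselves be $\psi$-invariant for $\psi$ in the smaller group; this is automatic because $\mathrm{Aut}(G,\boldsymbol\sigma,\kappa)\subseteq\mathrm{Aut}(G,\boldsymbol\sigma)$. We must also note that the proposition asserts only the non-applicability of the failure mechanism, not its converse.
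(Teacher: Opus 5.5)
Your proposal is correct and follows the same approach as the paper. You rerun the induction of Lemma~\ref{lem:aut_equivariance} and use $\kappa(\psi(v))=\kappa(v)$ to guarantee that paired variable nodes apply the identical update function $\Phi_v^{(\kappa(v))}$, so that permutation invariance gives equivariance under $\mathrm{Aut}(G,\boldsymbol{\sigma},\kappa)$; the paper's proof is a compressed statement of exactly this, and your added remarks on prior invariance and on the ``in particular'' clause only removing the obstruction, rather than guaranteeing success, agree with the paper's discussion.
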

\begin{IEEEproof}
The proof follows the same induction argument as in
Lemma~\ref{lem:aut_equivariance}. Since
$\kappa(v)=\kappa(\psi(v))$ for every
$\psi\in\mathrm{Aut}(G,\boldsymbol{\sigma},\kappa)$,
corresponding variable nodes apply identical update rules.
Permutation invariance of the update functions then yields
equivariance under $\mathrm{Aut}(G,\boldsymbol{\sigma},\kappa)$.
\end{IEEEproof}

In the following section we will demonstrate a case study illustrating the limitations of iterative decoding under repeated measurements, and a proposed solution based on anisotropic decoding. We note that the design  of an optimal decoding algorithm for the phenomenological noise model is beyond the scope of this paper and is left for future work.

\section{Numerical Results}
\label{sec:secvi}
In this section, we  provide a case study on decoding $X$ errors and measurement errors on the family of BB codes. Both types of errors are independent and introduced by a binary symmetric channel with crossover probability $\alpha$. Decoding is performed via MS-based algorithms which we run for $100$ iterations. We note that in terms of notation, we assume that decoding is applied on the parity check matrix $\mathbf{H}'$. 

Throughout this section, $r$ denotes the total number of measurement rounds, consisting of $r-1$ noisy syndrome rounds followed by one ideal syndrome round. A decoding trial is counted as a failure if no syndrome-consistent estimate is found, or if a syndrome-consistent estimate is returned whose total data residual is not in the rowspace of $\mathbf{H}_X$.

\subsection{Space-time anisotropic min-sum decoding}
BB codes belong to the family of two-block CSS codes~\cite{bravyi2024high,mackay_quantum,PhysRevA.109.022407}. Such codes are defined by two commuting binary matrices $\mathbf{A}_1$ and $\mathbf{A}_2$, satisfying $\mathbf{A_1A_2}=\mathbf{A_2A_1}$. The corresponding parity-check matrices are
\begin{equation}
\mathbf{H}_X = [\mathbf{A}_1, \mathbf{A}_2], \qquad
\mathbf{H}_Z = [\mathbf{A}_2^{T} , \mathbf{A}_1^{T}],
\end{equation}
which ensures $\mathbf{H}_X \mathbf{H}_Z^{T} = \mathbf{A_1A_2} + \mathbf{A_2A_1} = \mathbf{0}$ (mod $2$), and hence the CSS commutativity condition.

In BB codes, the blocks $\mathbf{A_1}$ and $\mathbf{A_2}$ are constructed from binary circulant matrices and as a result, they are sparse with constant row and column weight.

For the BB codes considered here, each of the two blocks has row
weight three and the induced subgraph of each stabilizer generator corresponds to a subdivided $K_{3,3}$ complete bipartite graph. Hence, in order to mitigate spatial degeneracy, anisotropic decoding rules can be assigned independently to variable nodes belonging to the $\mathbf{A_1}$ and $\mathbf{A_2}$ blocks (such as~\cite{chytasminsum}) and across repeated measurement instances.

We use the syndrome-based MS decoder, a low-complexity variant of BP~\cite{05CDEFH}.
A check-to-variable message sent from check node $c$ to variable node $v$ at the $\ell$-th iteration is denoted by $\mu^{(\ell)}_{c \to v}$, whereas a variable-to-check message sent from variable node $v$ to check node $c$ at the $\ell$-th iteration is denoted by $\nu^{(\ell)}_{v \to c}$. The variable node update rule is computed as: 
\begin{equation}
    \nu^{(\ell)}_{v \to c} = \lambda_v + \sum\limits_{c' \in \mathcal{M}(v) \backslash \{ c \}}\mu^{(\ell-1)}_{c' \to v},
    \label{eq:vnu}
\end{equation}
and the check node update rule is computed as: 
\begin{equation}
   \mu^{(\ell)}_{c \to v} = (1-2\sigma_c)\prod\limits_{v' \in \mathcal{N}(c) \backslash \{v\}} \mathrm{sgn}(\nu^{(\ell)}_{v'\to c})\min\limits_{v' \in \mathcal{N}(c) \backslash \{v\}}|\nu^{(\ell)}_{v'\to c}|,
    \label{eq:cnu}
\end{equation}
where $\mathrm{sgn}$ is the sign function, $\sigma_c \in \{0,1\}$,  is the value of the measured syndrome of check node $c$ and $\lambda_v=\lambda=\log({\frac{1-\alpha}{\alpha}})$ is the a priori log-likelihood ratio (LLR) of every variable node. Since MS follows Definition~\ref{defn:isotropic}, we refer to it as isotropic MS decoder.

Authors in~\cite{chytasminsum} analyzed the evolution of the variable-to-check messages of the MS decoder under the condition that $\frac{w}{2}$-weight error patterns are applied to weight-$w$ stabilizers whose induced subgraphs are isomorphic to complete bipartite graphs. Under the isolation assumption, the authors showed that the signs of the variable-to-check messages flip in every iteration and to counter that, they proposed the following update rule which damps oscillating messages only for certain variable nodes:
\begin{equation}
\nu = \lambda_v + \sum\limits_{c' \in \mathcal{M}(v) \backslash \{ c \}}\mu^{(\ell-1)}_{c' \to v},
\label{eq:temp}
\end{equation}
and then, after sign flips are checked, the variable-to-check message is computed as follows:
\begin{equation}
\nu^{(\ell)}_{v \to c} = \begin{cases}
\nu, & \text{if ($\mathrm{sgn}(\nu)=\mathrm{sgn}(\nu^{(\ell-1)}_{v \to c}))$}\\
\nu+\nu^{(\ell-1)}_{v \to c}, & \text{otherwise}.
\end{cases}
\label{eq:Pi}
\end{equation}
In particular, for the case of BB codes the new update rule is applied to variable nodes of one block only. Since, that MS variant follows Definition~\ref{defn:anisotropic}, we  refer to it as space-anisotropic MS decoder. 

Based on Proposition~\ref{prop:coloring}, the analysis of spatial and temporal degenerate error patterns naturally motivates coloring data
variable nodes in both space and time. Since we decode $X$ errors, we use $\mathbf H=\mathbf H_Z=[\mathbf A_2^T,\mathbf A_1^T]$.
For an arbitrary number $r$ of measurement rounds, the diagonal data-check blocks of $\mathbf H'$ are
\begin{equation}
\mathbf H^{(j)}
=
\left[(\mathbf A_2^{(j)})^T,(\mathbf A_1^{(j)})^T\right],
\qquad j\in[r],
\end{equation}
where the superscript identifies a copy of the same code in round $j$. We denote by $\mathcal V_{A_2}^{(j)}$ and $\mathcal V_{A_1}^{(j)}$ the sets
of data variable nodes corresponding to the first and second column blocks, respectively.

For a data variable node $v$ in round $j\in[r]$, we assign the fixed flag
\begin{equation}
f_v=
\begin{cases}
1, & v\in\mathcal V_{A_2}^{(j)},\quad j\text{ odd},\\
1, & v\in\mathcal V_{A_1}^{(j)},\quad j\text{ even},\\
0, & \text{otherwise}.
\end{cases}
\label{eq:spaceTimeFlags}
\end{equation}
All measurement variable nodes are assigned $f_v=0$. Thus, the selected data block alternates between consecutive rounds, while the flags remain
fixed throughout the decoding iterations.

Let $\nu$ denote the raw variable-to-check message in~\eqref{eq:temp}. For every variable node $v$ and neighboring check node $c$, the outgoing
message is computed as
\begin{equation}
\nu^{(\ell)}_{v\to c}
=
\begin{cases}
\nu+f_v\nu^{(\ell-1)}_{v\to c},
& \mathrm{sgn}(\nu)\ne\mathrm{sgn}(\nu^{(\ell-1)}_{v\to c}),\\
\nu, & \text{otherwise}.
\end{cases}
\label{eq:spaceTimeUpdate}
\end{equation}
The sign-change condition is tested separately for each outgoing edge. Consequently, flagged data variable nodes apply~\eqref{eq:Pi}, whereas
unflagged data variable nodes and all measurement variable nodes use the standard MS update~\eqref{eq:vnu}. Check-to-variable messages are computed
using~\eqref{eq:cnu}. As in isotropic MS, messages are initialized as $\nu^{(0)}_{v\to c}=\lambda_v$ and $\mu^{(0)}_{c\to v}=0$, with
$\mathrm{sgn}(0)=0$.

We refer to this deterministic decoder as \emph{space-time anisotropic MS}. For $r=2$, the flagged data nodes are precisely
$\mathcal V_{A_2}^{(1)}\cup\mathcal V_{A_1}^{(2)}$. The coloring distinguishes the two data blocks within each round and
copies of the same data variable in consecutive rounds, with the aim of breaking symmetries associated with spatial and temporal degeneracy.

Fig.~\ref{fig:threshold} shows the performance of isotropic and space-time anisotropic MS decoding for the family of BB codes under phenomenological noise for the minimal setting where one noisy measurement is followed by an ideal one.
Evidently, no threshold behavior is observed under the isotropic MS decoder. On the other hand, space-time anisotropic MS decoding exhibits evidence of threshold-like behavior as it is able to correct multiple low-weight spatial and temporal degenerate error patterns.

\begin{figure}
    \centering
    \includegraphics[width=1\linewidth]{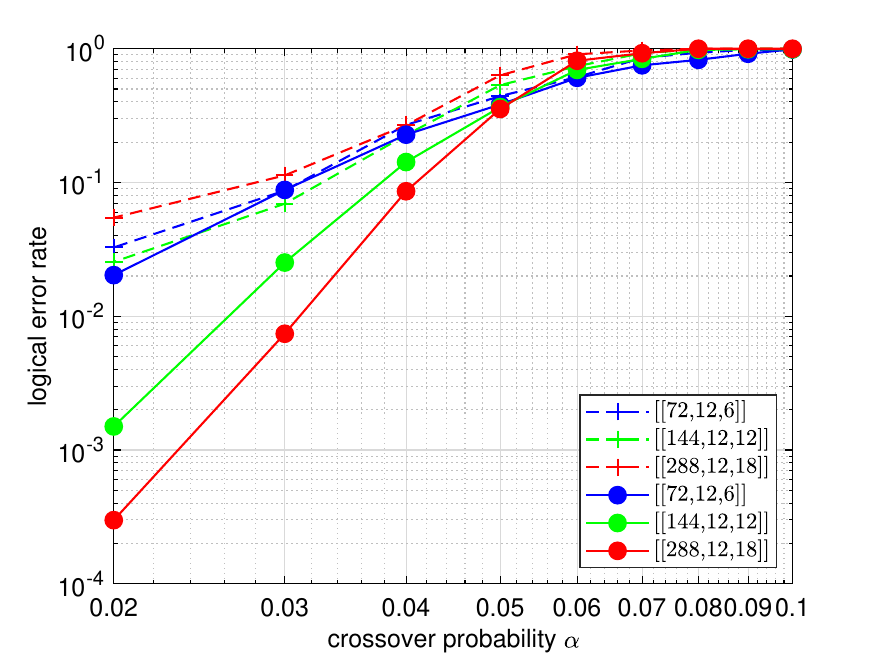}
    \caption{Performance of the family of BB codes under repeated syndrome measurements. Dashed lines correspond to isotropic MS decoding and solid lines to space-time anisotropic MS decoding. Notably, threshold-like behavior can be achieved in linear time by introducing anisotropy in MS decoding. Decoding is performed for $100$ iterations.}
    \label{fig:threshold}
\end{figure}
\begin{figure}[htbp]
     \centering
     \begin{subfigure}[]
     {
        \centering
        \includegraphics[width=0.4\linewidth]{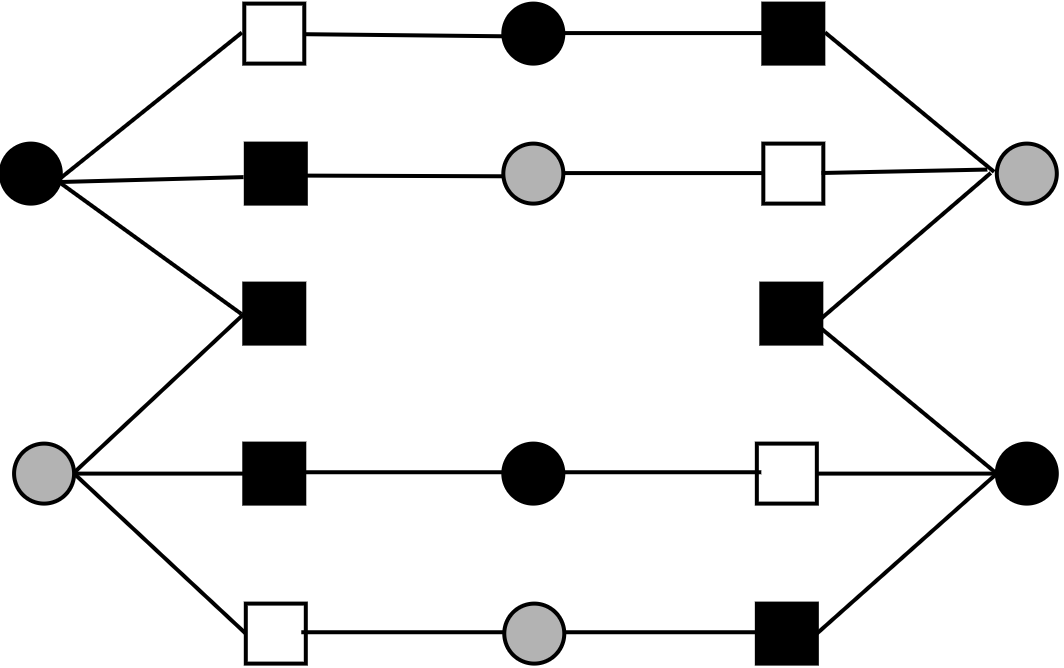}
         \label{fig:problem1}
    }
    \end{subfigure}
    \begin{subfigure}[]
    {
        \centering
       \includegraphics[width=0.4\linewidth]{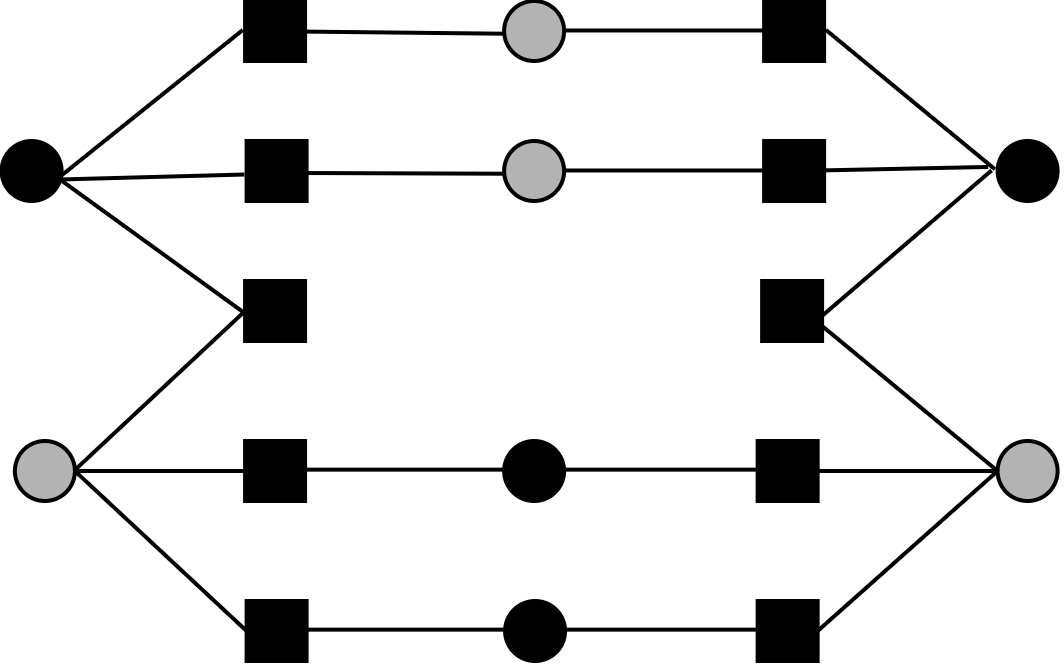}
        \label{fig:problem2}
    }
    
    \end{subfigure}

   \begin{subfigure}[]
     {
        \centering
        \includegraphics[width=0.6\linewidth]{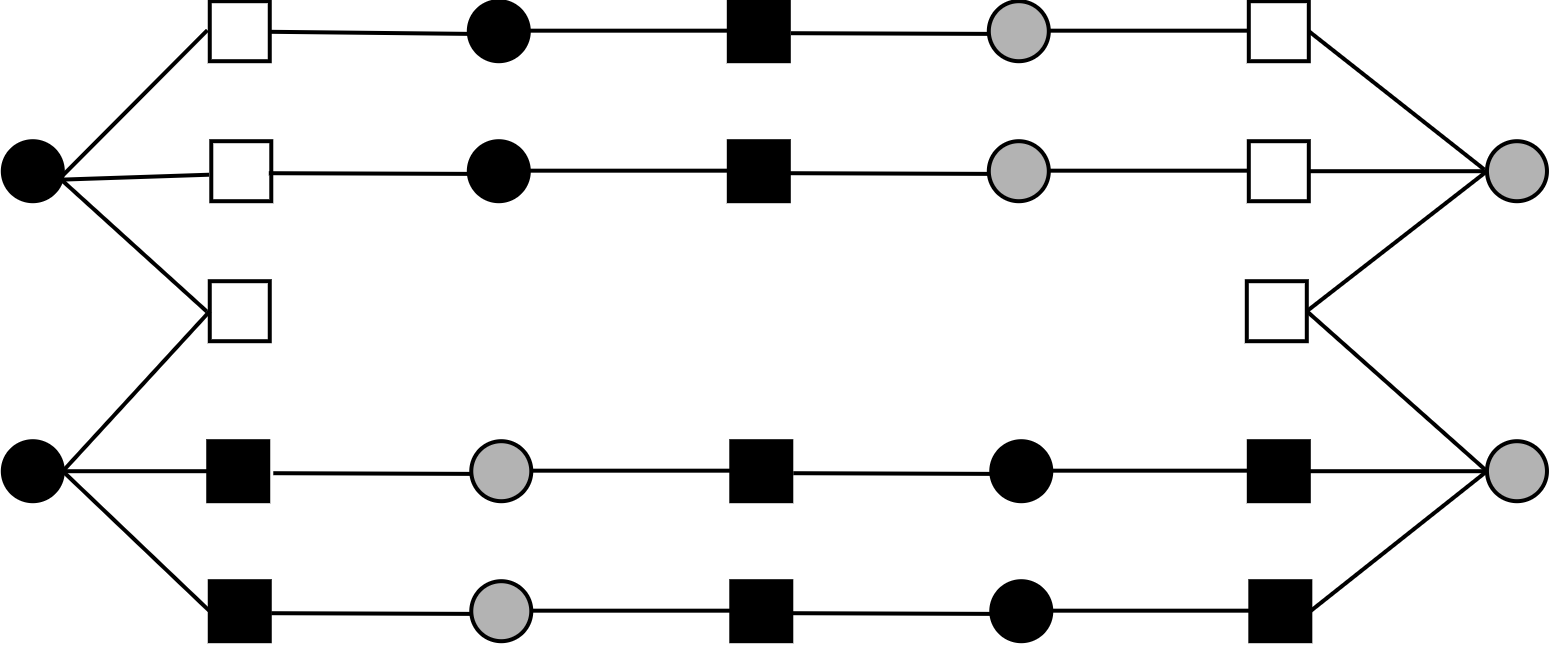}
         \label{fig:problem3}
    }
    \end{subfigure}
\caption{Pairs of degenerate error patterns supported on temporal codewords of a BB code: (a), (b) weight-$4$ patterns on weight-$8$ codewords spanning two measurement rounds, and (c) weight-$6$ patterns on a weight-$12$ codeword spanning three measurement rounds. In each example, \vnode{black} and \vnode{figgray} denote the two complementary error supports, and \cnode{black} denotes
an unsatisfied check node. All data variable nodes in each panel belong to the same block. The deterministic space-time coloring
preserves the automorphisms relating the paired patterns. In (c), all data nodes have identical colors because
the coloring repeats every two rounds. These configurations motivate additional coloring of measurement variable nodes.}
\label{fig:problematic}
\end{figure}

The deterministic space-time coloring does not eliminate all automorphisms relating degenerate error patterns. In
Fig.~\ref{fig:problematic}(a), (b), the paired weight-$4$ patterns are supported on weight-$8$ temporal codewords whose
data variable nodes belong to the same block. Within each round, these data nodes receive the same color, so the automorphisms relating the paired patterns remain
label-preserving.
Similar limitations arise for higher-weight temporal codewords. Fig.~\ref{fig:problematic}(c) shows two complementary weight-$6$
patterns supported on a weight-$12$ temporal codeword spanning three measurement rounds. Since the deterministic coloring of data nodes
repeats every two rounds, all four data nodes have identical colors, and are therefore related by a graph automorphism.

These examples illustrate symmetries that remain unresolved by the deterministic coloring of data nodes and motivate additional
coloring of measurement variable nodes. We next examine the effect of low-weight degenerate error patterns under full-graph
decoding and introduce a syndrome-guided extension of the space-time anisotropic MS decoder to address residual
temporal symmetries.

\subsection{Low-weight error patterns and syndrome-guided decoding}
\label{subsec:lowweight}
We investigate low-weight decoding failures of the $[[288,12,18]]$ BB code using the full space-time Tanner graph of $H'$, without imposing
the isolation assumption. Theorem~\ref{theorem:isotropic-obstruction} identifies failures for isotropic message-passing decoders
under isolation; it does not predict the failure probability on the full graph. Here, we examine the practical behavior of isotropic MS and  space-time anisotropic MS, and use the remaining failures to motivate syndrome-guided coloring of measurement
variable nodes. We consider one and four noisy measurement rounds, each followed by an ideal round, corresponding to $r=2$ and $r=5$, respectively. With
$n=288$ and $m=144$, a complete error history has length $N=nr+m(r-1)=720$ and $2016$, respectively. All MS realizations use normalization factor $1$, the same initialization, and at most $100$ iterations. The failure probability is denoted by $P_{\mathrm{fail}}$.

To examine the temporal degenerate error patterns identified in Theorem~\ref{theorem:isotropic-obstruction} beyond the isolation
assumption, we exhaustively tested the constructed weight-four family on the full space-time Tanner graph, with no additional
errors and a common prior $\alpha_0=10^{-4}$. For $r=2$, isotropic MS fails on $19\,008$ of the $38\,880$ patterns, whereas deterministic space-time anisotropic MS fails on $912$ patterns. For $r=5$, the corresponding number of failures  are $93\,312$ and $6912$  out of $171\,072$ patterns. Since these tests do not impose isolation, the measured failure fractions need not coincide with the isolated-subgraph behavior established by the theorem. At the same prior $\alpha_0=10^{-4}$, we also tested all $n(r-1)$ weight-two patterns consisting of errors on data variable nodes of the weight-$(d_v+2)$ temporal codeword generator. Isotropic MS fails on all $288$ patterns for $r=2$ and all $1152$ patterns for $r=5$, whereas space-time anisotropic MS corrects every tested pattern. These are additional empirical
failures, distinct from the equal-weight configurations analyzed in Theorem~\ref{theorem:isotropic-obstruction}.

To resolve rare low-weight failures under independent data and measurement noise, we use fixed-weight importance sampling~\cite{beverland2025failfasttechniquesprobe}.
Assume a weight-$W$ error pattern, at a crossover probability $\alpha$, its distribution and the failure probability satisfy
\begin{equation}
\begin{aligned}
B_w(\alpha)&=\binom{N}{w}\alpha^w(1-\alpha)^{N-w},\\
P_{\mathrm{fail},\mathcal D}(\alpha)
&=\sum_{w=0}^{N}B_w(\alpha)g_{w,\mathcal D}(\alpha),
\end{aligned}
\label{eq:weightStratumContribution}
\end{equation}
where
$g_{w,\mathcal D}(\alpha)=\Pr(\mathrm{failure}\mid W=w)$ for decoder $\mathcal D$ initialized with the prior corresponding to $\alpha$.
Conditioned on $W=w$, all $\binom{N}{w}$ supports are equally likely. We enumerate these supports exhaustively where feasible; otherwise,
each trial selects $w$ distinct locations uniformly among all $N$ locations, with every other location error-free. Sampled supports are
drawn independently, with replacement across trials. This samples the entire weight stratum, including patterns outside the constructed
temporal family.

For $K_{w,\mathcal D}$ failures in $T_w$ tests, we use $\widehat g_{w,\mathcal D}=K_{w,\mathcal D}/T_w$ and multiply by
$B_w(\alpha)$ to estimate the weight-$w$ contribution. This fraction is exact when all supports of that weight have been tested. For a set
of included weights $\mathcal W$, the partial estimate is
\begin{equation}
\widehat P_{\mathrm{fail},\mathcal D}^{(\mathcal W)}(\alpha)
=\sum_{w\in\mathcal W}B_w(\alpha)
\widehat g_{w,\mathcal D}(\alpha).
\label{eq:partialFailureEstimate}
\end{equation}

For $r=2$, space-time anisotropic MS corrects all error patterns up to weight-$3$ at $\alpha_{\mathrm{IS}}=10^{-3}$ . At weights four, five, and six, we observe $99$, $23$, and $78$
failures in $10^8$, $10^7$, and $10^7$ trials, respectively. The corresponding
estimated contributions are
\[
(\widehat P_4,\widehat P_5,\widehat P_6)
=(5.3705,\,1.7885,\,0.72351)\times10^{-9},
\]
where $\widehat P_w=B_w\widehat g_{w,\mathcal D}$ for the decoder and
prior under consideration. Their sum is
\[
\widehat P_{\mathrm{fail},\mathcal D}^{(\{4,5,6\})}(10^{-3})
=7.8826\times10^{-9},
\]
with a conservative $95\%$ statistical confidence interval
\[
[\,5.7255\times10^{-9},\;1.0652\times10^{-8}\,].
\]
This interval uses a joint Bonferroni correction across the three sampled weights, including those evaluated in separate runs.
Weight four accounts for approximately $68.1\%$ of the measured sum.

For $r=5$ at $\alpha_{\mathrm{IS}}=5\times10^{-3}$, we observe $142$, $28$, and $84$ failures at weights four, five, and six in
$10^8$, $10^7$, and $10^7$ trials, respectively. The estimated contributions are
\[
(\widehat P_4,\widehat P_5,\widehat P_6)
=(0.253901,\,1.012369,\,5.115263)\times10^{-7}.
\]
Their sum is
\[
\widehat P_{\mathrm{fail},\mathcal D}^{(\{4,5,6\})}
(5\times10^{-3})=6.3815\times10^{-7},
\]
with a conservative $95\%$ statistical confidence interval
\[
[\,4.6931\times10^{-7},\;8.4962\times10^{-7}\,],
\]
using a joint Bonferroni correction across the three sampled weights. Weight six accounts for approximately $80.2\%$ of this partial sum,
whereas weight four accounts for approximately $4.0\%$. At this probability, $N\alpha=10.08$, so contributions beyond weight
six may be substantial. The reported sum therefore quantifies the included low-weight contributions rather than a complete
failure-probability estimate.

Fig.~\ref{fig:msExclusionCodes} compares isotropic MS and space-time anisotropic MS. Ordinary Monte Carlo estimates under
independent data and measurement noise are supplemented by the fixed-weight estimates described above, marked IS. These IS values contain only the specified weight contributions.

\begin{figure*}[t]
    \centering
    \begin{minipage}[t]{0.49\textwidth}
        \centering
        \IfFileExists{figures/msSpaceTime_1plus1.pdf}{%
            \includegraphics[width=\linewidth]{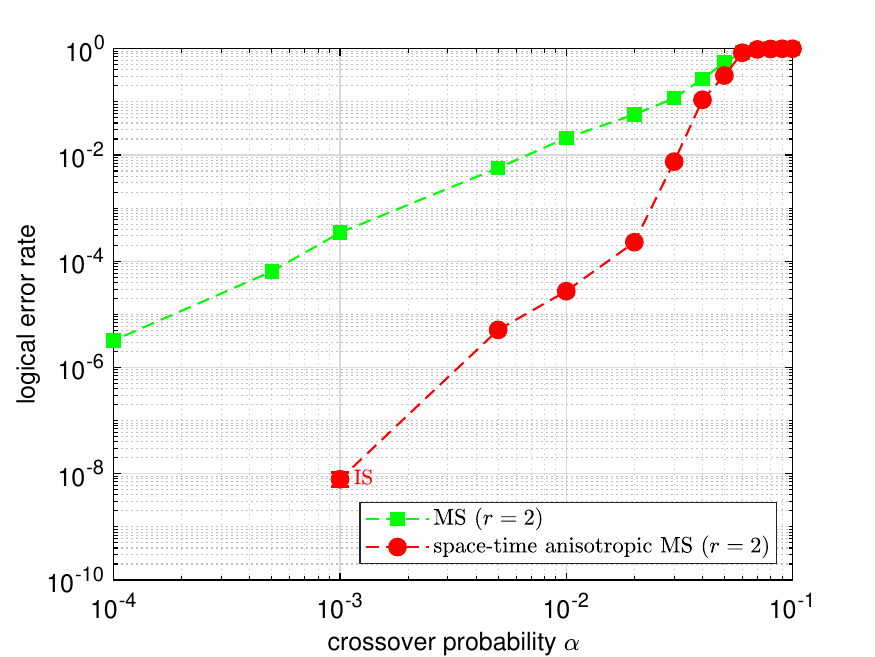}%
        }{%
            \fbox{\parbox[c][0.18\textheight][c]{0.90\linewidth}{%
                \centering\textbf{[TODO: insert the $1+1$ ordinary-noise
                MS versus space-time MS plot with fitted slopes.]}}}%
        }

        \small (a) Two measurement rounds
    \end{minipage}
    \hfill
    \begin{minipage}[t]{0.49\textwidth}
        \centering
        \IfFileExists{figures/msSpaceTime_4plus1.pdf}{%
            \includegraphics[width=\linewidth]{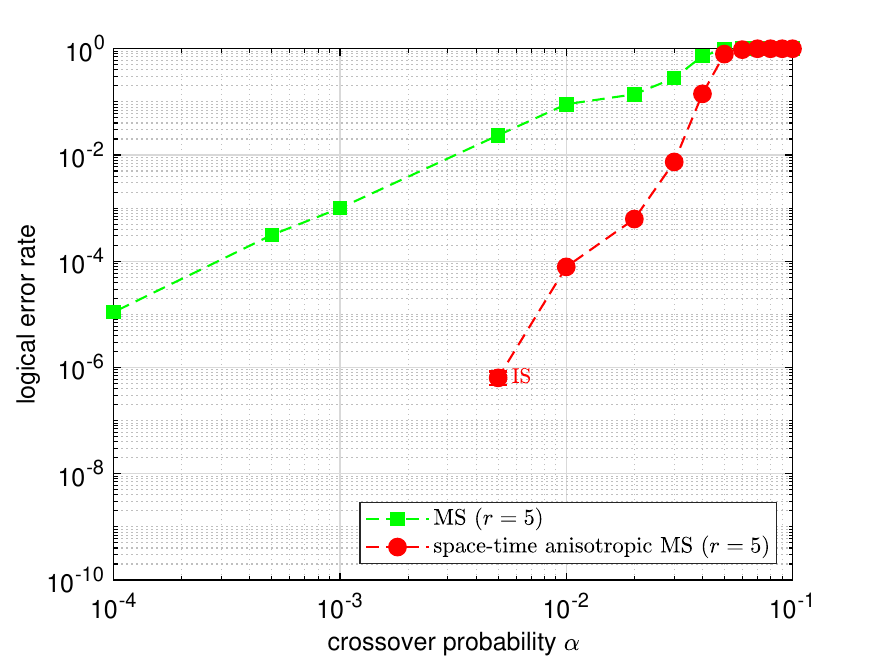}%
        }{%
            \fbox{\parbox[c][0.18\textheight][c]{0.90\linewidth}{%
                \centering\textbf{[TODO: insert the $4+1$ ordinary-noise
                MS versus space-time MS plot with fitted slopes.]}}}%
        }

        \small (b) Five measurement rounds
    \end{minipage}
    \caption{Decoding performance of MS and space-time anisotropic MS for the $[[288,12,18]]$ BB code under (a) one and (b) four noisy
    measurement rounds, each followed by an ideal round. Every realization uses
    normalization factor $1$ and at most $100$ iterations. The points marked IS estimate only the contributions from error weights four through six. Conservative $95\%$ statistical confidence intervals for these partial sums are reported in the text and do not account for contributions from omitted weights.}
    \label{fig:msExclusionCodes}
\end{figure*}

The remaining low-weight failures, together with the residual symmetries illustrated in Fig.~\ref{fig:problematic}, motivate additional coloring
of measurement variable nodes. We retain the deterministic data-node coloring and update rule in~\eqref{eq:spaceTimeFlags}
and~\eqref{eq:spaceTimeUpdate}. The selection of measurement nodes is guided by the input syndrome. Let
\[
\mathcal U_0=\operatorname{supp}(\boldsymbol{\sigma})
\]
be the set of initially unsatisfied checks, and let $\mathcal S$ be the set of measurement variable nodes. Define
\[
\mathcal S^{\mathrm{act}}
=\left\{v\in\mathcal S:
\mathcal M(v)\cap\mathcal U_0\ne\varnothing\right\}.
\]
For each guided realization, every $v\in\mathcal S^{\mathrm{act}}$ is assigned an independent coefficient
\[
\xi_v\sim\operatorname{Unif}(0,1),
\]
sampled once and kept fixed throughout the realization. The same coefficient is used on both outgoing edges of measurement node $v$.

Let $\nu$ denote the raw variable-to-check message in~\eqref{eq:temp}. The update at an active measurement node is
\[
\nu_{v\to c}^{(\ell)}=
\begin{cases}
\nu, & \mathrm{sgn}(\nu)=
\mathrm{sgn}\!\left(\nu_{v\to c}^{(\ell-1)}\right),\\
\nu+\xi_v\nu_{v\to c}^{(\ell-1)}, & \text{otherwise}.
\end{cases}
\]
The sign-reversal condition is tested separately for each outgoing edge, with $\mathrm{sgn}(0)=0$. Measurement nodes outside
$\mathcal S^{\mathrm{act}}$ use the standard MS update. The node-dependent coefficients provide additional coloring while
preserving the deterministic data-node coloring. They can also help break residual symmetries of temporal codewords spanning several
rounds, such as the example in Fig.~\ref{fig:vector11} and Fig.~\ref{fig:problematic}~\subref{fig:problem3}. Symmetry breaking alone does not guarantee successful decoding. We refer to the resulting
decoder as \emph{guided space-time anisotropic MS}.

The diversity set consists of the space-time anisotropic MS decoder and $N_D$ independent guided realizations. Each guided realization samples new coefficients on the same active measurement nodes. Thus, $N_D=0$ recovers the space-time anisotropic MS. All realizations use the original priors, the same initialization, and the same maximum iteration limit. Although the realizations can be executed in parallel, we execute them sequentially in our simulations and stop at the first estimate whose space-time syndrome matches the input syndrome. If no such estimate is found, decoding is declared unsuccessful.

We next evaluate guided space-time anisotropic MS under ordinary phenomenological noise and compare it with MS followed by OSD
post-processing and Relay-BP, including the effects of diversity size and decoding resources.

\begin{figure}[htbp]
    \centering
    \includegraphics[width=\linewidth]{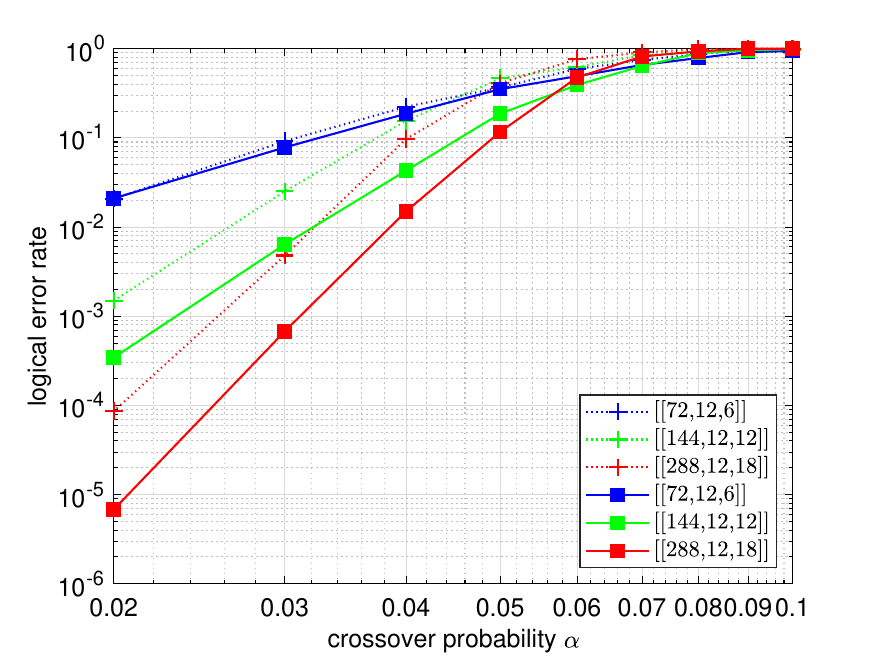}
    \caption{Logical error rates for the family of BB codes under one noisy measurement round followed by an ideal round. Solid lines correspond to guided space-time anisotropic MS with $N_D=9$, and dotted lines to MS followed by OSD-$0$ post-processing. Each MS decoder realization uses a maximum of $100$ iterations.}
    \label{fig:thresholdProposed}
\end{figure}

\subsection{Comparison with OSD Post-Processing and Relay-BP}
\label{subsec:osdcomparison}

\begin{figure*}[t]
    \centering
    \begin{minipage}[t]{0.49\textwidth}
        \centering
        \includegraphics[width=\linewidth]{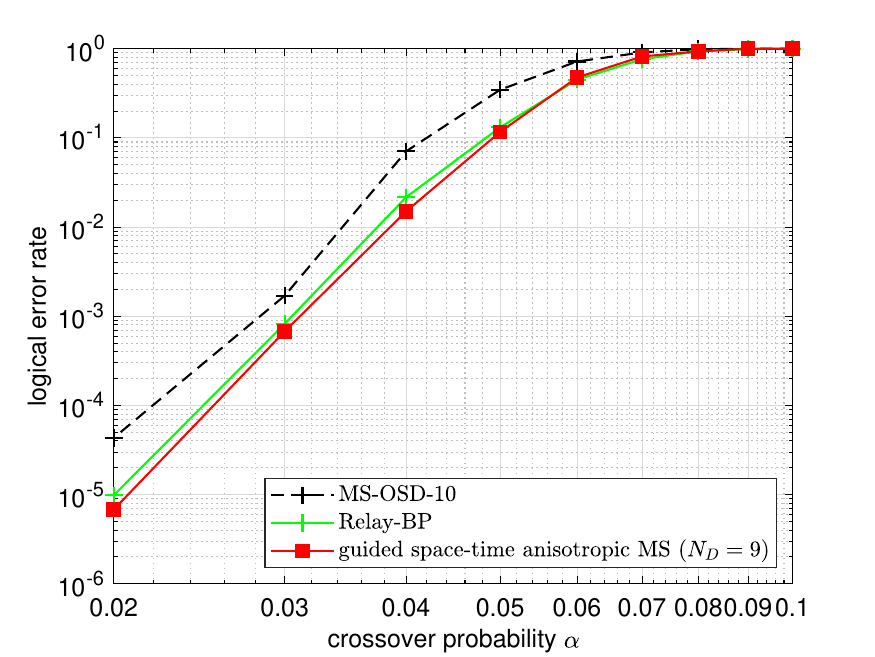}

        \small (a) Two measurement rounds
    \end{minipage}
    \hfill
    \begin{minipage}[t]{0.49\textwidth}
        \centering
        \includegraphics[width=\linewidth]{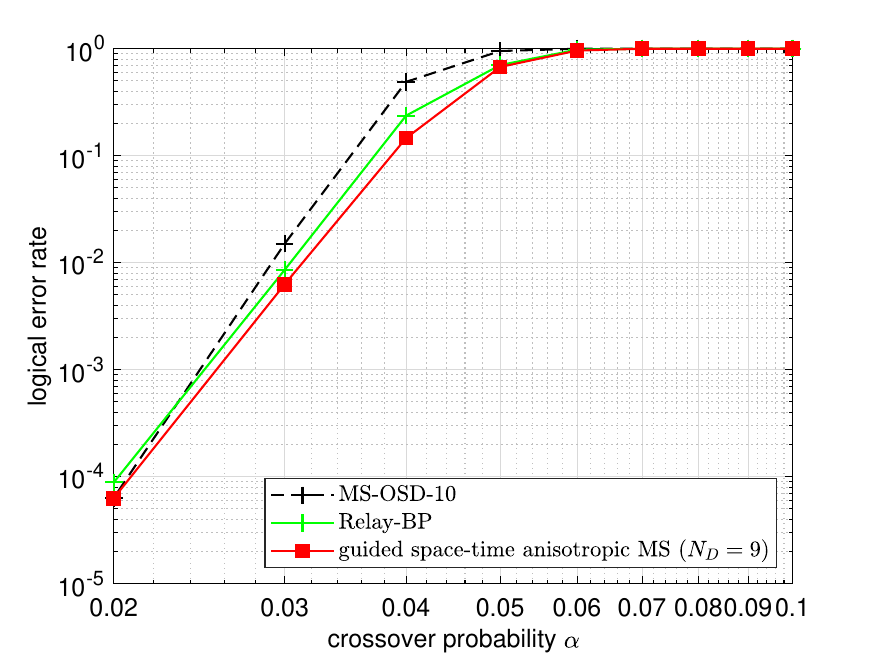}

        \small (b) Five measurement rounds
    \end{minipage}
    \caption{Logical error rates for the $[[288,12,18]]$ BB code under (a) one and (b) four noisy measurement rounds, each followed by an ideal round. Comparison of guided space-time anisotropic MS with $N_D=9$, MS followed by OSD-$10$, and Relay-BP-$1$. Each guided MS realization uses at most $100$ iterations; Relay-BP-$1$ has a total iteration limit of $1000$ across sequential decoding legs.}
    \label{fig:guidedVrelayVosd10}
\end{figure*}
\begin{figure*}[htbp]
    \centering
    \begin{minipage}[t]{0.49\textwidth}
        \centering
        \includegraphics[width=\linewidth]{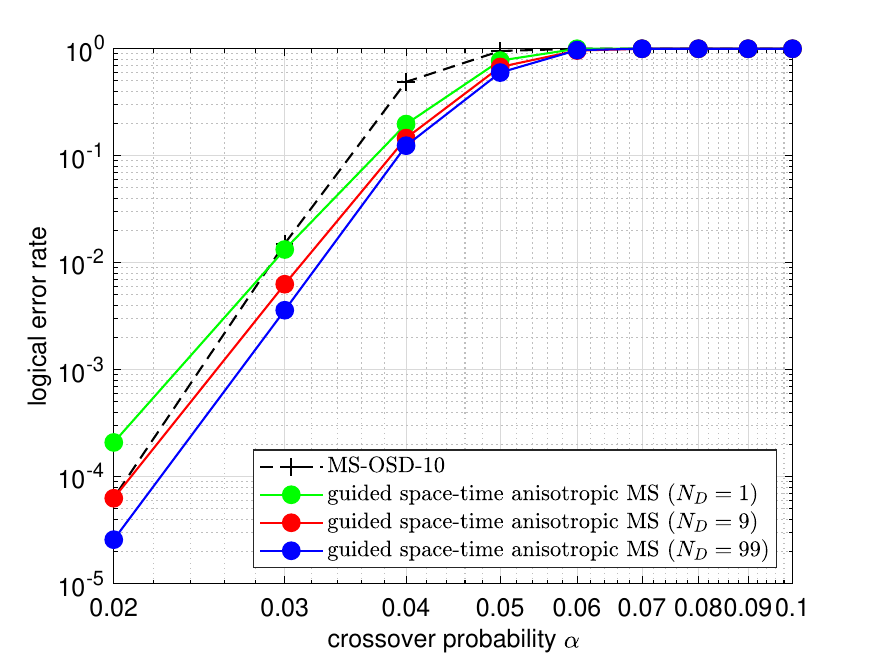}

        \small (a) Guided MS: increasing $N_D$
    \end{minipage}
    \hfill
    \begin{minipage}[t]{0.49\textwidth}
        \centering
        \includegraphics[width=\linewidth]{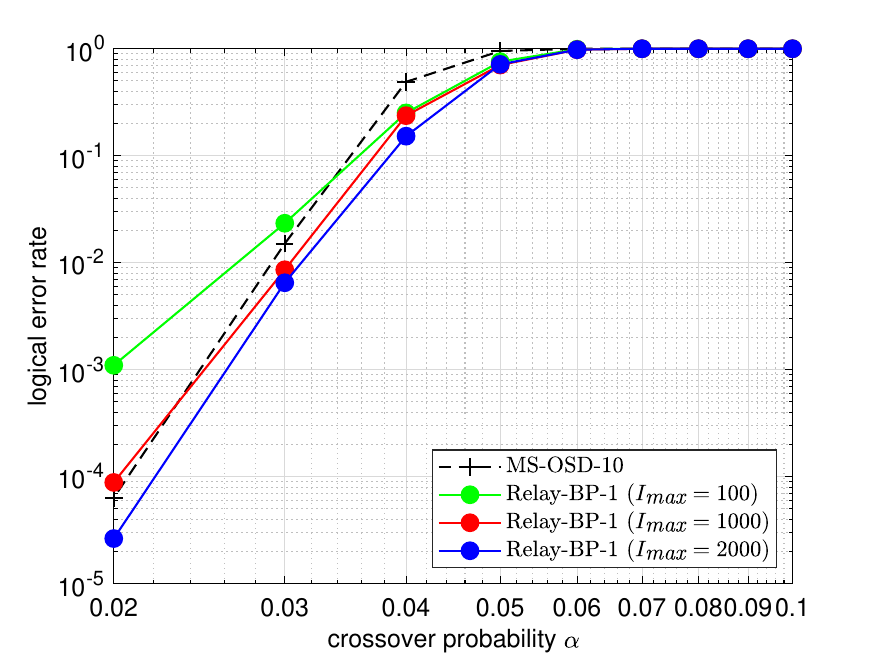}

        \small (b) Relay-BP-$1$: increasing iteration budget
    \end{minipage}
    \caption{Effect of decoding budget on the logical error rate for the $[[288,12,18]]$ BB code under four
noisy measurement rounds followed by an ideal round. (a) Guided space-time anisotropic MS with $N_D=1,9,99$, each realization limited
to $100$ iterations, and MS followed by OSD-$10$ as a reference. (b) Relay-BP-$1$ with maximum total iteration budgets of $100$,
$1000$, and $2000$, and MS followed by OSD-$10$ as a reference.}
    \label{fig:decoderBudget}
\end{figure*}

Having investigated the effect of low-weight degenerate error patterns, we now compare guided space-time anisotropic MS with MS followed by OSD post-processing and Relay-BP~\cite{müller2025improvedbeliefpropagationsufficient}. All comparisons in this subsection use ordinary independent noise sampling.

Fig.~\ref{fig:thresholdProposed} shows the logical error rates for the family of BB codes under one noisy measurement round followed by an ideal round. The guided ensemble uses $N_D=9$ additional decoder realizations, each with a maximum of $100$ iterations. At low crossover probabilities, guided space-time anisotropic MS achieves lower logical error rates than MS followed by OSD-$0$. In particular, for the $[[288,12,18]]$ code, the proposed decoder offers an improvement of one order of magnitude compared to OSD-$0$ post processing at crossover probability $\alpha=0.02$. The behavior across the considered code family is consistent with threshold-like scaling.

We next consider the $[[288,12,18]]$ BB code under one and four noisy measurement rounds, each followed by an ideal round. In addition to MS followed by OSD-$10$, for which we use the combination-sweep strategy
of~\cite{roffe_decoding_2020}, we compare with Relay-BP-$1$, the first-solution variant of Relay-BP. Relay-BP combines node-dependent memory strengths with sequentially connected MS decoding legs. We use the two-gross circuit-noise memory parameters of~\cite{müller2025improvedbeliefpropagationsufficient}: the initial leg uses a uniform memory strength $\gamma=0.125$ and at most $80$ iterations, while subsequent legs use at most $60$ iterations, with memory strengths independently sampled from $\operatorname{Unif}(-0.161,0.815)$ and held fixed throughout each leg. Each leg receives the final marginal beliefs of the preceding leg.

Fig.~\ref{fig:guidedVrelayVosd10} compares the three approaches. The guided ensemble uses $N_D=9$ and at most $100$ iterations per realization, while Relay-BP-$1$ is assigned a total iteration budget of at most $1000$ across its sequential legs. The proposed decoder achieves the lowest logical error rates at low crossover probabilities in both measurement-round settings, although Relay-BP-$1$ attains comparable performance with the larger sequential budget. Unlike the proposed guided ensemble, Relay-BP introduces anisotropy through randomized memory strengths without explicitly using the temporal-degeneracy structures analyzed here. 

 Fig.~\ref{fig:decoderBudget} considers the $[[288,12,18]]$ BB code under four noisy measurement rounds followed by an ideal round. In Fig.~\ref{fig:decoderBudget}(a), we vary the number of additional guided realizations, $N_D\in\{1,9,99\}$, while keeping the per-realization limit at $100$ iterations, and include MS followed by OSD-$10$ as a reference. The results show that additional guided realizations can reduce the logical error rate in the low-noise regime: at $\alpha=0.02$, the rate decreases from $2.09\times10^{-4}$ for $N_D=1$ to $2.58\times10^{-5}$ for $N_D=99$. In Fig.~\ref{fig:decoderBudget}(b), we vary the maximum total iteration budget of Relay-BP-$1$ over $100$, $1000$, and $2000$ iterations, retaining the same per-leg update rules and memory distributions. MS followed by OSD-$10$ is included as a reference.

For fixed $N_D$ and iteration limits, the message-passing complexity of the guided ensemble scales linearly with the number of edges in the space-time Tanner graph. In contrast, OSD involves Gaussian elimination and, at higher orders, the examination of additional candidate corrections. The guided realizations can also be executed in parallel, whereas Relay-BP's successive legs depend on the output of preceding legs and therefore have to be executed in sequence. Thus, similar aggregate iteration budgets do not imply identical sequential decoding depth or computational cost.

These comparisons motivate investigating whether the structural characterization of temporal degeneracy can inform more general anisotropic iterative decoders. In particular, the locations or distributions of the memory strengths in Relay-BP could be chosen using the space-time code structure or the observed syndrome, rather than being sampled independently of those features. In that way, anisotropy could be assigned in a less agnostic way and improve latency. More broadly, the systematic design of node- and edge-dependent update rules, building on~\cite{chytas2026edgebasedanisotropicdecodinggeneralized}, represents a possible direction for iterative decoding under constrained latency.

\section{conclusions}
In this work, we characterize temporal and spatial-temporal degeneracy introduced by repeated noisy syndrome measurements under the phenomenological noise model. The temporal kernel structures are imposed by the measurement model and arise independently of the specific quantum code construction. We identify the temporal codeword generators and derive decoder-independent upper bounds on the number of locally minimum half-weight error patterns supported on minimum even-weight temporal codewords. For $(d_v,d_c)$-regular parity-check matrices with $d_v>2$ and girth greater than four, we identify families of constant-weight error patterns whose multiplicity scales as $\Theta(nr)$ for fixed code degrees. Under symmetric initialization and the isolation assumption, these patterns create a local decoding failure for deterministic isotropic message-passing decoders under a flooding schedule. In particular, degenerate error patterns supported on temporal codewords are harmful for iterative decoding when they are related by graph automorphisms. Consequently, increasing the underlying code distance is not by itself a remedy for these local failures. Motivated by this analysis, we propose space-time anisotropic min-sum decoding with syndrome-guided realizations to mitigate both spatial and temporal symmetries. For fixed iteration and ensemble budgets, decoding complexity is linear in the number of Tanner graph edges. Numerical results on BB codes demonstrate practical improvements and show that the proposed approach can outperform MS followed by OSD post-processing. These results motivate addressing the recurring symmetries introduced by repeated measurements directly within low-complexity message passing. These results also suggest the  study of the joint design of codes, measurement protocols, and decoders. Single-shot error correction~\cite{bombin}, in which each correction cycle uses one round of noisy syndrome measurements, offers an alternative to joint decoding of repeated syndrome measurements. Code properties such as confinement~\cite{Quintavalle_2021} can support this approach by controlling residual data errors caused by measurement errors. However, confinement does not by itself remove the temporal codewords of the repeated-measurement construction considered here. How syndrome redundancy, confinement, and alternative measurement protocols can complement decoder-level symmetry breaking remains an open question. Future work will focus on extending the proposed techniques to the circuit-level noise model.

\appendices
\section{Proofs of the Results in Section~\ref{sec:sectemporal}}
\label{app:sec3proofs}

\subsection{Proof of Lemma~\ref{lemma1}}
\label{app:lemma1}
Each subset of $k$ ones in the first $r-1$ positions can be uniquely partitioned into $b = k-t$ blocks of consecutive ones.  

1.  There are $\binom{k-1}{t}$ ways to select which adjacent ones form consecutive pairs (integer composition).  

2.  The $b=k-t$ blocks should be separated by at least one zero. One zero is placed in each of the $b-1$ gaps, so there are $(r-1-k)-(b-1)=r-k-b$ zeros left. These zeros can be distributed into the $b+1$ gaps, which by stars and bars gives  $\binom{r-k}{k-t}$.

3.  Each block triggers exactly two data bits in the right block, giving weight $2(k-t)$ for the right block.  

4.  The left block  has weight $k$.  

Multiplying the combinatorial counts by $X^k Y^{2(k-t)}$ and summing over all possible $k$ and $t$ enumerates all vectors (bivariate enumeration). 
\subsection{Proof of Lemma~\ref{lemma:minimal_xky2}}
\label{app:lemma:minimal_xky2}
\begin{IEEEproof}
       We base the proof on Lemma 2.1 of~\cite{705584} and~\cite{Vontobel_2005}.
    Let $\mathcal{U} \subset [2r-1]$ be the support of a $\mathbf{c}^{k,2}$ vector of $\mathbf{B}$. We need to show that $\operatorname{rank}(\mathbf{B}(\mathcal{U}))$ equals $k+1$, where $\mathbf{B}(\mathcal{U})$ is the restriction of $\mathbf{B}$ to columns indexed by $\mathcal{U}$.  Since the support consists of $k$ consecutive $1$'s in the measurement part and two $1$'s in the data part,  we have 
     \begin{equation}
     \mathbf{B}(\mathcal{U})=
\left[
\begin{array}{c|c}
\begin{array}{ccccc}
1 \\
1 & 1 \\
& 1 & 1 & & \\
& & \vdots & \ddots & 1 \\
& & & & 1
\end{array}
&
\begin{array}{cc}
1 & 0\\
0 & 0 \\
0 & 0 \\
\vdots & \vdots \\
0 & 1 \\
\end{array}
\end{array}
\right],
\end{equation}
or $\mathbf{B}(\mathcal{U})=[\mathbf{A}_1|\mathbf{A}_2]$. $\mathbf{B}(\mathcal{U})$ has  $k+1$ nonzero rows and $\mathbf{A}_1$ has $k$ columns, while $\mathbf{A}_2$ has $2$ columns. 
We now show that the $k+1$ rows of $\mathbf{B}(\mathcal{U})$ are linearly independent.  
Assume that a binary linear combination of the rows equals zero:
\[
\sum_{i=1}^{k+1}\lambda_i \mathbf{\rho}_i=\mathbf{0},
\qquad \lambda_i\in\mathbb{F}_2,
\]
where $\mathbf{\rho}_i$ denotes the $i$-th row of $\mathbf{B}(\mathcal{U})$.

Looking at the first column of $\mathbf{A}_2$, only the first row has a nonzero entry; hence, $\lambda_1=0$.  
Then, looking at the first column of $\mathbf{A}_1$, only the first two rows have nonzero entries there, and since $\lambda_1=0$, it follows that $\lambda_2=0$.  
Proceeding inductively, for each $j=2,\dots,k$, the $j$-th column of $\mathbf{A}_1$ has nonzero entries only in rows $j$ and $j+1$. Since $\lambda_j=0$, we obtain $\lambda_{j+1}=0$. Thus,
\[
\lambda_1=\lambda_2=\cdots=\lambda_{k+1}=0.
\]
Therefore the rows of $\mathbf{B}(\mathcal{U})$ are linearly independent, and
\[
\operatorname{rank}(\mathbf{B}(\mathcal{U}))=k+1.
\]

By Lemma 2.1 in~\cite{705584}, this implies that $\mathbf{c}^{k,2}$ is a minimal vector of $\mathbf{B}$.
\end{IEEEproof}
\subsection{Proof of Lemma~\ref{lemma:decompose_xky2}}
\label{app:lemma:decompose_xky2}
\begin{IEEEproof}
We have shown that $\mathbf{c}_i^{1,2}\in\ker(\mathbf{B})$ for every
$i\in\{1,\ldots,r-1\}$. To show linear independence, suppose that
\[
\sum_{i=1}^{r-1} a_i\mathbf{c}_i^{1,2}
=
\mathbf{0},
\qquad
a_i\in\mathbb{F}_2.
\]
The first $r-1$ coordinates of $\mathbf{c}_i^{1,2}$ are zero except for a one in position $i$. Restricting the above equality to these coordinates therefore gives $a_i=0$ for every $i$.
Hence, $\{\mathbf{c}_i^{1,2}\}_{i=1}^{r-1}$ is linearly independent.

Moreover, since $\dim(\ker(\mathbf{B}))=r-1$, the set
$\{\mathbf{c}_i^{1,2}\}_{i=1}^{r-1}$ is a linearly independent
subset of $\ker(\mathbf{B})$ whose cardinality equals
$\dim\ker(\mathbf{B})$, and hence it forms a basis.
\end{IEEEproof}
\subsection{Proof of Lemma~\ref{lemma:vectors}}
\label{app:lemma:vectors}
\begin{IEEEproof}
    Let
\[
\mathbf{c}=
\begin{bmatrix}
\mathbf{z}\\
\mathbf{x}
\end{bmatrix}
\in \ker(\mathbf{B}),
\]
so that
\[
\mathbf{B}_L\mathbf{z}+\mathbf{B}_R\mathbf{x}=\mathbf{0}
\]
over $\mathbb{F}_2$.

By definition,
\[
\mathbf{H}_1^{(r)}
=
\big[\mathbf{B}_L\otimes \mathbf{I}_{d_v}\;\big|\;\mathbf{B}_R\otimes \mathbf{1}_{d_v}\big],
\qquad
\widehat{\mathbf{c}}
=
\begin{bmatrix}
\mathbf{z}\otimes \mathbf{1}_{d_v}\\[0.3em]
\mathbf{x}
\end{bmatrix}.
\]
Therefore,
\[
\mathbf{H}_1^{(r)}\widehat{\mathbf{c}}
=
(\mathbf{B}_L\otimes \mathbf{I}_{d_v})(\mathbf{z}\otimes \mathbf{1}_{d_v})
+
(\mathbf{B}_R\otimes \mathbf{1}_{d_v})\mathbf{x}.
\]
Using the mixed-product property of the Kronecker product,
\[
(\mathbf{B}_L\otimes \mathbf{I}_{d_v})(\mathbf{z}\otimes \mathbf{1}_{d_v})
=
(\mathbf{B}_L\mathbf{z})\otimes \mathbf{1}_{d_v},
\]
and
\[
(\mathbf{B}_R\otimes \mathbf{1}_{d_v})\mathbf{x}
=
(\mathbf{B}_R\mathbf{x})\otimes \mathbf{1}_{d_v}.
\]
Hence,
\[
\mathbf{H}_1^{(r)}\widehat{\mathbf{c}}
=
\big(\mathbf{B}_L\mathbf{z}+\mathbf{B}_R\mathbf{x}\big)\otimes \mathbf{1}_{d_v}
=
\mathbf{0}\otimes \mathbf{1}_{d_v}
=
\mathbf{0}.
\]
Thus $\widehat{\mathbf{c}}\in \ker(\mathbf{H}_1^{(r)})$.
\end{IEEEproof}

\subsection{Proof of Corollary~\ref{cor:bijectionH1}}
\label{app:cor:bijectionH1}
\begin{IEEEproof}
Lemma~\ref{lemma:vectors} shows that every vector in $\ker(\mathbf{B})$ lifts to a vector in $\ker(\mathbf{H}_1^{(r)})$.
For the converse, let
\[
\widehat{\mathbf c}=
\bigl(
\mathbf{z}^{(1)},\dots,\mathbf{z}^{(r-1)}
\;\big|\;
\mathbf{x}
\bigr)
\in \ker(\mathbf{H}_1^{(r)}),
\]
where
\[
\mathbf{z}^{(t)} \coloneqq (z_{1,t},\dots,z_{d_v,t}),
\qquad
t=1,\dots,r-1,
\]
and
\[
\mathbf{x}\coloneqq (x_1,\dots,x_r).
\]

From the structure of $\mathbf{H}_1^{(r)}$, for each $a\in[d_v]$ the parity-check equations are
\[
z_{a,1}+x_1=0,
\]
\[
z_{a,t-1}+z_{a,t}+x_t=0,\qquad 2\le t\le r-1,
\]
\[
z_{a,r-1}+x_r=0.
\]
Therefore, for every $a\in[d_v]$,
\[
z_{a,1}=x_1,
z_{a,2}=x_1+x_2,
\dots,
z_{a,t}=x_1+\cdots+x_t.
\]
Hence, the measurement coordinates are independent of $a$.
Define
\[
z_t:=\sum_{j=1}^{t}x_j,\qquad
\mathbf{z}:=(z_1,\ldots,z_{r-1}).
\]
Then
\[
\widehat{\mathbf{c}}
=
\begin{bmatrix}
\mathbf{z}\otimes\mathbf{1}_{d_v}\\
\mathbf{x}
\end{bmatrix}.
\]
The final parity-check equation gives
\[
\sum_{j=1}^{r}x_j=0.
\]
Together with the preceding equations, this shows that
\[
\mathbf{c}
=
\begin{bmatrix}\mathbf{z}\\\mathbf{x}\end{bmatrix}
\in\ker(\mathbf{B}).
\]
Thus $\widehat{\mathbf{c}}$ is the lift of a vector in
$\ker(\mathbf{B})$. This vector is unique because the lift retains
$\mathbf{x}$ and repeats each coordinate of $\mathbf{z}$.
Therefore the lifting map is bijective.
\end{IEEEproof}
\subsection{Proof of Corollary~\ref{cor:evenWeight}}
\label{app:cor:evenWeight}
\begin{IEEEproof}
    The weight enumerator of vectors in $\ker(\mathbf{H}_1^{(r)})$ shows that the weight of the data part of any vector is even. Therefore, for any
\[
\widehat{\mathbf c}=
\bigl(
\mathbf{z}^{(1)},\dots,\mathbf{z}^{(r-1)}
\;\big|\;
\mathbf{x}
\bigr)
\in \ker(\mathbf{H}_1^{(r)}),
\]
where $\mathbf{z}^{(1)},\dots,\mathbf{z}^{(r-1)}\in\mathbb{F}_2^{d_v}$ denote the measurement blocks and
$\mathbf{x}=(x_1,\dots,x_{r})\in\mathbb{F}_2^r$ denotes the data coordinates, it holds that
\[
\sum_{i=1}^{r} x_i = 0 \pmod 2.
\]
Hence every vector in $\ker(\mathbf{H}_1^{(r)})$ is a temporal codeword.
\end{IEEEproof}
\subsection{Detailed Proof of Theorem~\ref{theorem1}}
\label{app:theorem1}

From Corollary~\ref{cor:dv2GeneratesALL}, a weight-$(kd_v+2)$ minimal
vector can be written as
\begin{equation}
\mathbf{c}^{kd_v,2}
=
\mathbf{c}_1\oplus \mathbf{c}_2\oplus \cdots \oplus \mathbf{c}_k,
\label{eq:decompositionApp}
\end{equation}
where each $\mathbf{c}_i$ is a weight-$(d_v+2)$ temporal codeword of
$\mathbf{H}_1^{(r)}$.

Let $\mathcal{V}_i:=\operatorname{supp}(\mathbf{c}_i)$, $i\in[k]$.
Since modulo-$2$ addition of binary vectors corresponds to symmetric
difference of supports, it follows from~\eqref{eq:decompositionApp} that
\[
\operatorname{supp}(\mathbf{c}^{kd_v,2})
=
\mathcal{V}_1\triangle\mathcal{V}_2\triangle\cdots\triangle\mathcal{V}_k.
\]
This can be written as
\begin{equation}
\operatorname{supp}(\mathbf{c}^{kd_v,2})
=
\mathcal{V}
=
\bigsqcup_{i=1}^{k}\mathcal{U}_i,
\label{eq:VdisjointApp}
\end{equation}
where $\bigsqcup$ denotes disjoint union, and
\[
\mathcal{U}_1
=
\mathcal{V}_1\setminus \mathcal{V}_2,
\]
\[
\mathcal{U}_i
=
\mathcal{V}_i\setminus
(\mathcal{V}_{i-1}\cup\mathcal{V}_{i+1}),
\qquad 2\le i\le k-1,
\]
and
\[
\mathcal{U}_k
=
\mathcal{V}_k\setminus \mathcal{V}_{k-1}.
\]
Hence,
\[
|\mathcal{U}_1|=|\mathcal{U}_k|=d_v+1,
\qquad
|\mathcal{U}_i|=d_v,\quad 2\le i\le k-1.
\]

Let us assume an error pattern
$\mathbf{e}$ of weight
$\frac{w(\mathbf{c}^{kd_v,2})}{2}$ with
$\operatorname{supp}(\mathbf{e})
\subset\operatorname{supp}(\mathbf{c}^{kd_v,2})$.
Then, for a temporal degenerate error pattern $\mathbf{f}$ of the same
weight with disjoint support, we have
\begin{equation}
\mathbf{e}\oplus\mathbf{f}
=
\mathbf{c}^{kd_v,2},
\qquad
w(\mathbf{e})+w(\mathbf{f})
=
w(\mathbf{c}^{kd_v,2}).
\end{equation}

Decomposing $\mathbf{c}^{kd_v,2}$ in terms of its constituent
weight-$(d_v+2)$ temporal codewords, we can write
\begin{equation}
\mathbf{e}\oplus\mathbf{f}
=
\sum_{i\in A}\mathbf{c}_i
\;\oplus\;
\sum_{j\in B}\mathbf{c}_j
\pmod{2},
\label{eq:generalDecompCompact}
\end{equation}
where $A,B\subset[k]$ are disjoint and $A\cup B=[k]$, so that
$\{A,B\}$ is a partition of $[k]$. Equivalently,
\[
\mathbf{e}\oplus\mathbf{f}_A=\mathbf{c}_B,
\qquad
\mathbf{e}\oplus\mathbf{f}_B=\mathbf{c}_A,
\]
with
\[
\mathbf{f}_A
:=
\mathbf{f}\oplus\sum_{i\in A}\mathbf{c}_i
\pmod{2},
\]
and
\[
\mathbf{f}_B
:=
\mathbf{f}\oplus\sum_{j\in B}\mathbf{c}_j
\pmod{2}.
\]

\textbf{Remark:}
There are exactly $2^{k-1}-1$ distinct ways to split a minimal vector
$\mathbf{c}^{kd_v,2}$ into sums
$\mathbf{c}_A\oplus\mathbf{c}_B=\mathbf{c}^{kd_v,2}$, since each
nonempty proper subset $A\subset[k]$ has a unique complementary subset
$B=[k]\setminus A$, and exchanging $A$ and $B$ gives the same
partition. The supports of $\mathbf{c}_A$ and $\mathbf{c}_B$ are denoted
by $\mathcal{V}_A$ and $\mathcal{V}_B$, respectively.

For $\mathbf{e}$ to be locally minimum half-weight, it should hold that
\begin{equation}
w(\mathbf{f}_A)\geq w(\mathbf{e}),
\label{eq:fa}
\end{equation}
and
\begin{equation}
w(\mathbf{f}_B)\geq w(\mathbf{e}),
\label{eq:fb}
\end{equation}
for all partitions $\{A,B\}$.

Equation~(\ref{eq:fa}) implies
\begin{equation}
\begin{split}
w(\mathbf{e}\oplus\mathbf{c}_B)\geq w(\mathbf{e})
&\implies
w(\mathbf{e})+w(\mathbf{c}_B)-2|\mathcal{T}_B|
\geq w(\mathbf{e})
\\
&\implies
|\mathcal{T}_B|
\leq
\frac{w(\mathbf{c}_B)}{2},
\end{split}
\end{equation}
where $\mathcal{T}_B$ denotes the support of $\mathbf{e}$ on
$\mathbf{c}_B$, so that
\[
\mathcal{T}_B
\subset
\mathcal{V}\cap\mathcal{V}_B
=
\mathcal{V}_B\setminus\mathcal{V}_A.
\]
Similarly, Eq.~(\ref{eq:fb}) gives
\[
|\mathcal{T}_A|
\leq
\frac{w(\mathbf{c}_A)}{2}.
\]

Since $\mathcal{T}_A$ and $\mathcal{T}_B$ are disjoint,
\begin{equation}
|\mathcal{T}_A|+|\mathcal{T}_B|
=
w(\mathbf{e})
=
\frac{w(\mathbf{c}^{kd_v,2})}{2}.
\label{eq:TATB}
\end{equation}
Combining Eq.~(\ref{eq:TATB}) with the previous bounds gives
\begin{equation}
\frac{w(\mathbf{c}^{kd_v,2})-w(\mathbf{c}_B)}{2}
\leq
|\mathcal{T}_A|
\leq
\frac{w(\mathbf{c}_A)}{2},
\label{eq:TAsimple}
\end{equation}
and
\begin{equation}
\frac{w(\mathbf{c}^{kd_v,2})-w(\mathbf{c}_A)}{2}
\leq
|\mathcal{T}_B|
\leq
\frac{w(\mathbf{c}_B)}{2}.
\label{eq:TBsimple}
\end{equation}

Moreover,
\[
w(\mathbf{c}^{kd_v,2})
=
w(\mathbf{c}_A)+w(\mathbf{c}_B)
-2w(\mathbf{c}_{AB}),
\]
where $w(\mathbf{c}_{AB})$ denotes the weight of the intersection of the
two vectors, equivalently
\[
w(\mathbf{c}_{AB})
=
|\mathcal{V}_A\cap\mathcal{V}_B|.
\]

Let us now denote
\[
w(\mathbf{c}_A)=|A|d_v+l_A,
\qquad
w(\mathbf{c}_B)=|B|d_v+l_B,
\]
where, because of Corollary~\ref{cor:evenWeight}, $l_A$ and $l_B$ are
even. By setting $w(\mathbf{c}_{AB})=l_{AB}$, since
$\mathbf{c}^{kd_v,2}$ is a minimal vector,
\[
l_A+l_B-2l_{AB}=2.
\]
Hence,
\begin{equation}
\frac{kd_v+2-|B|d_v-l_B}{2}
\leq
|\mathcal{T}_A|
\leq
\frac{|A|d_v+l_A}{2},
\label{eq:TA}
\end{equation}
and
\begin{equation}
\frac{kd_v+2-|A|d_v-l_A}{2}
\leq
|\mathcal{T}_B|
\leq
\frac{|B|d_v+l_B}{2}.
\label{eq:TB}
\end{equation}

Recall that the bounds in Eqs.~(\ref{eq:TAsimple}) and~(\ref{eq:TBsimple})
should hold for all partitions $(A,B)$. In particular, they must hold for
the singleton partitions
\[
A_i=\{i\},
\qquad
B_i=[k]\setminus\{i\},
\qquad
i\in[k].
\]
These singleton constraints are only necessary, not sufficient, for
locally minimum half-weight error patterns.

For the singleton partition $A_i=\{i\}$, we have
$\mathbf{c}_{A_i}=\mathbf{c}_i$ and
\[
\mathcal{V}_{A_i}\cap\mathcal{V}
=
\mathcal{U}_i.
\]
Hence,
\[
|\mathcal{T}_{A_i}|
=
|\operatorname{supp}(\mathbf{e})\cap\mathcal{U}_i|.
\]
Define
\[
w_i
:=
|\operatorname{supp}(\mathbf{e})\cap\mathcal{U}_i|,
\qquad i\in[k].
\]
Applying Eq.~(\ref{eq:TA}) to the singleton partition gives
\[
\frac{
w(\mathbf{c}^{kd_v,2})-w(\mathbf{c}_{B_i})
}{2}
\leq
w_i
\leq
\frac{w(\mathbf{c}_i)}{2}.
\]
Since $w(\mathbf{c}_i)=d_v+2$, the upper bound is
\[
w_i
\leq
\frac{d_v+2}{2}.
\]

It remains to evaluate the lower bound. If $i=1$ or $i=k$, then
$\mathbf{c}_{B_i}$ is a contiguous sum of $k-1$ temporal codewords,
and therefore
\[
w(\mathbf{c}_{B_i})
=
(k-1)d_v+2.
\]
Thus,
\[
w_i
\geq
\frac{kd_v+2-((k-1)d_v+2)}{2}
=
\frac{d_v}{2},
\qquad
i\in\{1,k\}.
\]

If $2\leq i\leq k-1$, then removing $\mathbf{c}_i$ splits the complement
into two contiguous components. Hence,
\[
w(\mathbf{c}_{B_i})
=
(k-1)d_v+4,
\]
and therefore
\[
w_i
\geq
\frac{kd_v+2-((k-1)d_v+4)}{2}
=
\frac{d_v-2}{2},
\qquad
2\leq i\leq k-1.
\]

Consequently, every locally minimum half-weight error pattern satisfies the local constraints
\[
\frac{d_v}{2}
\leq
w_1,w_k
\leq
\frac{d_v+2}{2},
\]
and
\[
\frac{d_v-2}{2}
\leq
w_i
\leq
\frac{d_v+2}{2},
\qquad
2\leq i\leq k-1.
\]
Moreover, since $\mathbf{e}$ has half the weight of
$\mathbf{c}^{kd_v,2}$,
\[
\sum_{i=1}^{k}w_i
=
\frac{kd_v+2}{2}.
\]

For compactness, define the set of admissible weight distributions
\begin{equation}
\mathcal{A}_k :=
\left\{
\mathbf{w}\in\mathbb{Z}_{\geq0}^{k}:
\begin{array}{l}
\sum_{i=1}^{k}w_i=(kd_v+2)/2,\\
d_v/2\leq w_1,w_k\leq(d_v+2)/2,\\
(d_v-2)/2\leq w_i\leq(d_v+2)/2,\\
\hfill 2\leq i\leq k-1
\end{array}
\right\}.
\label{eq:admissibleSet}
\end{equation}

We now count all half-weight error patterns satisfying these necessary
singleton constraints. Since the sets $\mathcal{U}_i$ are disjoint, a
weight distribution
$\mathbf{w}=(w_1,\ldots,w_k)\in\mathcal{A}_k$
contributes
\[
\binom{d_v+1}{w_1}
\left(
\prod_{i=2}^{k-1}\binom{d_v}{w_i}
\right)
\binom{d_v+1}{w_k}
\]
choices, because
$|\mathcal{U}_1|=|\mathcal{U}_k|=d_v+1$
and
$|\mathcal{U}_i|=d_v$
for $2\leq i\leq k-1$.
Summing over all admissible weight distributions gives
\begin{equation}
N(k)
\leq
\sum_{\mathbf{x}\in\mathcal{A}_k}
\binom{d_v+1}{w_1}
\left(
\prod_{i=2}^{k-1}\binom{d_v}{w_i}
\right)
\binom{d_v+1}{w_k}.
\label{eq:upperNkproof}
\end{equation}

Finally, the asymptotic scaling follows from the crude bound obtained by
ignoring all local constraints:
\[
N(k)
\leq
\binom{kd_v+2}{(kd_v+2)/2}
\leq
2^{kd_v+2}
=
2^{O(kd_v)}.
\]
For fixed $k$, the contiguous weight-$(kd_v+2)$ vector spans $k$
consecutive rounds, so over $r$ total rounds there are $r-k$ possible
contiguous placements. Hence,
\begin{equation}
N_{\mathrm{total}}(k,r)
\leq
(r-k)N(k),
\end{equation}
and therefore
\begin{equation}
N_{\mathrm{total}}(k,r)
\leq
(r-k)\,2^{O(kd_v)}.
\end{equation}
Thus, for fixed $k$ and $d_v$, the number of such patterns grows at most
linearly with the number of measurement rounds $r$.

\subsection{Proof of Lemma~\ref{lemma:temporalgenerator}}
\label{app:lemma:temporalgenerator}
\begin{IEEEproof}
Write
\[
\mathbf D
=
(\mathbf z^{(1)},\ldots,\mathbf z^{(r-1)}
\mid
\mathbf x^{(1)},\ldots,\mathbf x^{(r)})
\in \ker(\mathbf H'),
\]
where $\mathbf z^{(t)}\in\mathbb F_2^m$ denotes the measurement part at
time $t \in [r-1]$ and $\mathbf x^{(t)}\in\mathbb F_2^n$ denotes the data part at
time $t \in [r]$.

Since $\mathbf D$ is temporal, its total data parity is zero:
\[
\sum_{t=1}^{r}\mathbf x^{(t)}=\mathbf 0 \pmod 2.
\]
Equivalently, for every variable node $v\in[n]$,
\[
\sum_{t=1}^{r} x_v^{(t)}=0 \pmod 2.
\]

Fix $v\in[n]$ and define
\[
\chi_t=x_v^{(t)},\qquad t\in[r].
\]
The above parity condition gives $\chi_1+\cdots+\chi_r=0$. Hence the sequence
$(\chi_1,\ldots,\chi_r)$ determines a unique vector
\[
\mathbf V^{(v)}\in \ker(\mathbf H_1^{(r)}(v)),
\]
whose data part is $(\chi_1,\ldots,\chi_r)$ and whose measurement blocks are
\[
\mathbf z_v^{(t)}
=
(\chi_1+\cdots+\chi_t)\mathbf 1_{d_v},
\qquad 1\le t\le r-1.
\]
The final parity-check equation is satisfied because
$\chi_1+\cdots+\chi_r=0$.

Let
\[
\iota_v:\ker(\mathbf H_1^{(r)}(v))\to \ker(\mathbf H')
\]
denote the natural embedding that places a local vector on the data
coordinates corresponding to the repeated copies of variable node $v$
and on the measurement-variable coordinates associated with the neighboring
checks of $v$ in the full space-time Tanner graph. Then
\[
\mathbf D
=
\sum_{v=1}^{n}\iota_v(\mathbf V^{(v)}).
\]
Indeed, the data parts agree coordinatewise by construction. Moreover,
the measurement part of a kernel vector is uniquely determined by its data
part through the detector recursion:
\[
\mathbf z^{(t)}
=
\mathbf H(\mathbf x^{(1)}+\cdots+\mathbf x^{(t)}),
\qquad 1\le t\le r-1.
\]
The embedded local syndrome contributions therefore sum to the syndrome
blocks of $\mathbf D$.

Thus every temporal codeword lies in the span of the embedded
single-variable kernels $\iota_v(\ker(\mathbf H_1^{(r)}(v)))$. By
Corollary~\ref{cor:dv2GeneratesALL}, each vector in
$\ker(\mathbf H_1^{(r)}(v))$ is a linear combination of
weight-$(d_v+2)$ temporal codeword generators. Hence every temporal
codeword in $\ker(\mathbf H')$ is generated by linear combinations of
embedded weight-$(d_v+2)$ temporal codeword generators.
\end{IEEEproof}

\subsection{Proof of Lemma~\ref{lem:min-even-temporal}}
\label{app:lem:min-even-temporal}
\begin{IEEEproof}
Write
\[
\mathbf D=
\left(
\mathbf z^{(1)},\ldots,\mathbf z^{(r-1)}
\,\middle|\,
\mathbf x^{(1)},\ldots,\mathbf x^{(r)}
\right)
\in\ker(\mathbf H').
\]
Since $\mathbf D$ is a temporal codeword,
\[
    \bigoplus_{j=1}^{r}\mathbf x^{(j)}=\mathbf 0.
\]
For $t\in[r-1]$, define the cumulative data parity
\[
    \mathbf g_t:=
    \bigoplus_{j=1}^{t}\mathbf x^{(j)},
\]
and set $\mathbf g_0=\mathbf g_r=\mathbf 0$.
From the detector equations of $\mathbf H'$, it follows that
\[
    \mathbf z^{(t)}=\mathbf H\mathbf g_t,
    \qquad 1\le t\le r-1,
\]
and
\[
    \mathbf x^{(t)}
    =
    \mathbf g_{t-1}\oplus\mathbf g_t,
    \qquad 1\le t\le r.
\]
Therefore,
\begin{equation}
\label{eq:temporal-weight-decomp}
w(\mathbf D)
=
\sum_{t=1}^{r-1}w(\mathbf H\mathbf g_t)
+
\sum_{t=1}^{r}
w(\mathbf g_{t-1}\oplus\mathbf g_t).
\end{equation}

Let
\[
    q:=\max_{1\le t\le r-1}w(\mathbf g_t),
\]
and choose $t^\star$ such that
$w(\mathbf g_{t^\star})=q$. Since $\mathbf{D}\neq \mathbf{0}$, we have $q\geq 1$.
For every variable node
$v\in \operatorname{supp}(\mathbf g_{t^\star})$, we have
$g_0(v)=g_r(v)=0$ and $g_{t^\star}(v)=1$.
Hence the cumulative parity of the $v$-th data coordinate must
have changed at least once before or at $t^\star$, and must change
at least once more after $t^\star$. Thus, it undergoes at least two
parity transitions in total.
Each such transition corresponds to a nonzero entry in
$\mathbf g_{t-1}\oplus\mathbf g_t$. Since the $q$ variable nodes in
$\operatorname{supp}(\mathbf g_{t^\star})$ are distinct, the data part therefore
satisfies
\begin{equation}
\label{eq:data-part-bound}
    \sum_{t=1}^{r}
    w(\mathbf g_{t-1}\oplus\mathbf g_t)
    \ge 2q.
\end{equation}

We next lower-bound the measurement contribution at time
$t^\star$. Let
\[
    S:=\operatorname{supp}(\mathbf g_{t^\star}),\qquad |S|=q,
\]
and, for every check node $c$, define
\[
    a_c:=|\mathcal N(c)\cap S|.
\]
Since every variable node in $S$ has degree $d_v$,
\[
    \sum_c a_c=qd_v.
\]
Moreover, the $c$-th entry of
$\mathbf H\mathbf g_{t^\star}$ is nonzero exactly when $a_c$ is odd,
and hence
\[
    w(\mathbf H\mathbf g_{t^\star})
    =
    \sum_c \mathbf{1}_{\{a_c\ {\rm odd}\}}.
\]
For every nonnegative integer $a$,
\[
    a
    \le
    2\binom{a}{2}
    +
    \mathbf{1}_{\{a\ {\rm odd}\}},
\]
so that
\[
    qd_v
    \le
    2\sum_c\binom{a_c}{2}
    +
    w(\mathbf H\mathbf g_{t^\star}).
\]
Since the Tanner graph of $\mathbf H$ has girth larger than four,
any two distinct variable nodes share at most one check node.
Therefore,
\[
    \sum_c\binom{a_c}{2}
    \le
    \binom{q}{2},
\]
which gives
\begin{equation}
\label{eq:measurement-part-bound}
    w(\mathbf H\mathbf g_{t^\star})
    \ge
    q(d_v-q+1).
\end{equation}

Suppose first that
\[
    2\le q\le d_v+1.
\]
Using \eqref{eq:temporal-weight-decomp},
\eqref{eq:data-part-bound}, and
\eqref{eq:measurement-part-bound},
\begin{align}
w(\mathbf D)
&\ge
2q+q(d_v-q+1) \nonumber\\
&=
q(d_v-q+3) \nonumber\\
&=
2d_v+2+(q-2)(d_v-q+1) \nonumber\\
&\ge
2d_v+2.
\end{align}
If $q\ge d_v+2$, the data contribution alone gives
\[
    w(\mathbf D)\ge2q\ge2d_v+2.
\]

It remains to consider $q=1$. Let
\[
    L:=
    \left|
    \left\{
    t\in[r-1]:
    \mathbf g_t\neq\mathbf 0
    \right\}
    \right|.
\]
Every nonzero $\mathbf g_t$ then has weight one, and hence
\[
    w(\mathbf H\mathbf g_t)=d_v.
\]
Thus, the measurement part has weight $Ld_v$. The data part has
even weight, since for each variable coordinate the binary sequence
$g_0(v),g_1(v),\ldots,g_r(v)$ begins and ends at zero and therefore
contains an even number of transitions. Since $d_v$ is odd and
$\mathbf D$ has even weight, $L$ must be even. As
$\mathbf D\neq\mathbf 0$, we have $L\ge2$. Moreover, the data part
contains at least two nonzero coordinates. Consequently,
\[
    w(\mathbf D)\ge 2d_v+2.
\]
\end{IEEEproof}

\subsection{Proof of Theorem~\ref{theorem:upper boundsTemporal}}
\label{app:theorem:upper boundsTemporal}
\begin{IEEEproof}
Suppose first that $d_v$ is even. A temporal codeword generator has even weight $d_v+2$. In this case, every
error pattern of weight $(d_v+2)/2$ supported on the generator is locally minimum. There are $n(r-1)$ temporal codeword generators in $\mathbf{H}'$; therefore, the number of incidences equals \begin{equation}
\overline{N}_{\mathrm{even}}=\binom{d_v+2}{(d_v+2)/2}n(r-1).
\label{eq:dvEven}
\end{equation}

Now suppose that $d_v$ is odd. In this case, the temporal codeword generator has odd weight and therefore does not admit a half-weight error pattern. Hence, we need to consider errors in the support of the lowest-weight even temporal codewords.  By Lemma~\ref{lem:min-even-temporal}, every nonzero even-weight temporal codeword has weight at least $2d_v+2$.
We determine which codewords with nonzero measurement part attain this
bound by examining the equality cases in the proof of
Lemma~\ref{lem:min-even-temporal}. In the notation of that proof,
equality for $q=1$ requires exactly two consecutive, identical
singleton vectors $\mathbf g_t$. This gives a modulo-$2$ sum of two
consecutive temporal codeword generators associated with the same data
variable, whose common data coordinate cancels. Equality for $q=2$
requires a single nonzero vector $\mathbf g_t$ supported on two data
variables that share exactly one check. This gives a modulo-$2$ sum of
two temporal codeword generators whose common measurement coordinate
cancels. The remaining possible equality case, $q=d_v+1$, has zero
measurement part and is therefore excluded here. All other cases give
weight strictly larger than $2d_v+2$.
Thus, the minimum-weight even temporal codewords considered here arise
from two temporal codeword generators that share and cancel either
\begin{enumerate}
    \item one data-variable coordinate, or
    \item one measurement-variable coordinate.
\end{enumerate}
Both combinations form a weight-$(2d_v+2)$ temporal codeword $\mathbf D=\mathbf c_1\oplus\mathbf c_2$, where the two generators
share exactly one coordinate, which cancels in $\mathbf D$. Define
\[
\mathcal U_i
:=
\operatorname{supp}(\mathbf c_i)
\cap\operatorname{supp}(\mathbf D),
\qquad i=1,2.
\]
The sets $\mathcal U_1$ and $\mathcal U_2$ are disjoint, each has size $d_v+1$, and together they form $\operatorname{supp}(\mathbf D)$.
Let $\mathbf e$ be an error pattern supported on $\mathbf D$ with $w(\mathbf e)=d_v+1$, and define
\[
w_i
:=
|\operatorname{supp}(\mathbf e)\cap\mathcal U_i|,
\qquad i=1,2.
\]
Then $w_1+w_2=d_v+1$, and
\[
w(\mathbf e\oplus\mathbf c_i)-w(\mathbf e)
=
w(\mathbf c_i)
-2|\operatorname{supp}(\mathbf e)
       \cap\operatorname{supp}(\mathbf c_i)|
=
d_v+2-2w_i.
\]
The span of the two generators is $\mathcal G_D=\{\mathbf 0,\mathbf c_1,\mathbf c_2,\mathbf D\}$. Since $\mathbf e$ occupies half of $\operatorname{supp}(\mathbf D)$, adding $\mathbf D$ preserves its weight. Hence $\mathbf e$ satisfies Definition~\ref{def:locallymin} if and only if
$d_v+2-2w_i\ge0$ for $i=1,2$.

Because $d_v$ is odd and $w_i$ is an integer, these inequalities give $w_i\le(d_v+1)/2$. Together with $w_1+w_2=d_v+1$, this forces
\[
w_1=w_2=\frac{d_v+1}{2}.
\]
Thus each such temporal codeword supports exactly
\[
\binom{d_v+1}{(d_v+1)/2}^{\!2}
\]
locally minimum half-weight error patterns.
In case 1), the resulting temporal codeword belongs to the kernel of $\mathbf{H}_1^{(r)}$ and therefore to the kernel of $\mathbf{H}'$. There are exactly $n(r-2)$ such temporal codewords in $\mathbf{H}'$. 
The number of incidences equals $\binom{d_v+1}{(d_v+1)/2}^2n(r-2)$.
In case 2), the two generators are associated with distinct data variables that share a check in $\mathbf H$. The number of unordered pairs of variables sharing a check is
\[
m\binom{d_c}{2}
=
\frac{1}{2}nd_v(d_c-1),
\]
where $md_c=nd_v$. Because the Tanner graph has girth larger than four, two data variables share at most one check, so no pair is counted through two different checks. Over the $r-1$ measurement intervals,
this gives
\[
\frac{1}{2}nd_v(d_c-1)(r-1)
\]
temporal codewords of this type, and therefore
\[
\frac{1}{2}
\binom{d_v+1}{(d_v+1)/2}^{\!2}
nd_v(d_c-1)(r-1)
\]
incidences.
In summary, for odd $d_v$, the total number of incidences between the
minimum-weight even temporal codewords considered here and their
associated locally minimum half-weight errors is
\begin{equation}
\begin{split}
\overline{N}_{\mathrm{odd}}=
    \binom{d_v+1}{(d_v+1)/2}^2n(r-2)+ \\ \frac{1}{2}\binom{d_v+1}{(d_v+1)/2}^2nd_v(d_c-1)(r-1).
    \label{eq:temporalOdd}
    \end{split}
\end{equation}
\end{IEEEproof}

\subsection{Proof of Lemma~\ref{lem:spatiotemporal_generator}}
\label{app:lem:spatiotemporal_generator}
\begin{IEEEproof}
Let
\[
\mathbf D
=
(\mathbf z^{(1)},\ldots,\mathbf z^{(r-1)}
\mid
\mathbf x^{(1)},\ldots,\mathbf x^{(r)})
\in \ker(\mathbf H').
\]
Define the linear map
\[
\Pi:\ker(\mathbf H')\to \mathbb F_2^n,
\qquad
\Pi(\mathbf D)
=
\sum_{t=1}^{r}\mathbf x^{(t)} \pmod 2 .
\]
Then the kernel of $\Pi$ is precisely the set of temporal codewords,
\[
\ker(\Pi)
=
\left\{
\mathbf D\in\ker(\mathbf H')
\;\middle|\;
\sum_{t=1}^{r}\mathbf x^{(t)}=\mathbf 0
\right\}.
\]
By Lemma~\ref{lemma:temporalgenerator}, $\ker(\Pi)$ is generated by
linear combinations of embedded weight-$(d_v+2)$ temporal codeword
generators.

Now let $\mathbf D\in\ker(\mathbf H')$ satisfy
\[
\Pi(\mathbf D)=\mathbf h,
\]
where $\mathbf h$ is a  stabilizer of the underlying code. Fix a
data round $q\in[r]$ and define
\[
\mathbf S_{\mathbf h}
=
(\mathbf 0,\ldots,\mathbf 0
\mid
\mathbf 0,\ldots,\mathbf 0,\mathbf h,\mathbf 0,\ldots,\mathbf 0),
\]
where $\mathbf h$ is placed in the $q$-th data block and all other blocks
are zero.

Since $\mathbf h$ is a  stabilizer, it produces zero syndrome
under the parity-check matrix used in $\mathbf H'$, i.e.,
\[
\mathbf H\mathbf h=\mathbf 0.
\]
Therefore every detector equation of $\mathbf H'$ is satisfied by
$\mathbf S_{\mathbf h}$, and hence
\[
\mathbf S_{\mathbf h}\in\ker(\mathbf H').
\]
Moreover,
\[
\Pi(\mathbf S_{\mathbf h})=\mathbf h.
\]

Define
\[
\mathbf T:=\mathbf D\oplus \mathbf S_{\mathbf h}.
\]
Since both $\mathbf D$ and $\mathbf S_{\mathbf h}$ belong to
$\ker(\mathbf H')$, we have
\[
\mathbf T\in\ker(\mathbf H').
\]
Furthermore,
\[
\Pi(\mathbf T)
=
\Pi(\mathbf D)\oplus \Pi(\mathbf S_{\mathbf h})
=
\mathbf h\oplus \mathbf h
=
\mathbf 0.
\]
Hence $\mathbf T\in\ker(\Pi)$, so $\mathbf T$ is a temporal codeword.
By Lemma~\ref{lemma:temporalgenerator}, $\mathbf T$ can be expressed as a 
linear combination of embedded weight-$(d_v+2)$ temporal codeword
generators.
\end{IEEEproof}

\section{DETAILED PROOF OF THEOREM 3}
\label{app:isotropicCounting}
\begin{IEEEproof}
For the proof, we consider an error pattern $\mathbf{e}$ with support $\mathcal{V}_e \subset \mathcal{V}$, where $\mathcal{V}$ is the support of a temporal codeword and $|\mathcal{V}_e|=\frac{|\mathcal{V}|}{2}$. We count these error patterns for which there exists a type preserving automorphism that maps them to the degenerate error pattern, i.e., $\psi\in\mathrm{Aut}(G)$ such that $\psi(\mathcal{V}_e)=\mathcal{V}\setminus \mathcal{V}_e$.

We first consider the case where $d_v$ is even and set $w:=\frac{d_v+2}{2}$.
The lowest even-weight temporal codeword is the weight-$(d_v+2)$ temporal codeword generator.
Let $G$ denote the  Tanner graph induced by this codeword,
obtained by subdividing every edge of the complete bipartite graph $K_{2,d_v}$.
Denote by $\mathcal V=\mathcal A\,\sqcup\mathcal B$, the set of variable nodes of the codeword, with $\mathcal A=\{a_1,a_2\}$ and
$\mathcal B=\{b_1,\dots,b_{d_v}\}$.

Let $A_e=\mathcal V_e\cap\mathcal A$ and $B_e=\mathcal V_e\cap\mathcal B$, and denote
$A_f=\mathcal A\setminus A_e$ and $B_f=\mathcal B\setminus B_e$. Any graph automorphism should preserve degrees, so  $|A_e|=|A_f|=1$ and
$|B_e|=|B_f|=d_v/2$. Without loss of generality assume $A_e=\{a_1\}$ and $A_f=\{a_2\}$.

Under isolation, each subdividing check is in bijection with an edge $(a,b)\in\mathcal A\times\mathcal B$,
and its syndrome label depends on its two adjacent variables.
Equivalently, there exists a function $\ell_{\mathcal V_e}:\mathcal A\times\mathcal B\to\{0,1\}$ of the form
\begin{equation}
\ell_{\mathcal V_e}(a,b)=\mathbf 1\{a\in \mathcal V_e\}\oplus \mathbf 1\{b\in \mathcal V_e\},
\end{equation}
so that $\ell_{\mathcal V_e}(a,b)=0$ exactly for the pairs $(A_e,B_e)\cup(A_f,B_f)$ and
$\ell_{\mathcal V_e}(a,b)=1$ exactly for $(A_e,B_f)\cup(A_f,B_e)$.

Define $\psi$ on $\mathcal A$ by swapping $a_1\leftrightarrow a_2$ (i.e., $\psi(A_e)=A_f$),
and choose any bijection $\pi:B_e\to B_f$, extended to a permutation of $\mathcal B$.
Extend $\psi$ to subdividing checks by mapping the check on edge $(a,b)$ to the check on edge
$(\psi(a),\pi(b))$. Since $G$ is the subdivided $K_{2,d_v}$, this defines a type-preserving automorphism of $G$. Clearly, the syndrome labels are preserved, as if $\ell_{\mathcal V_e}(a,b)=0$ then $\ell_{\mathcal V_e}(\psi(a),\pi(b))=0$ because it is connected to $(A_e,B_e)\cup(A_f,B_f)$, similarly for $\ell_{\mathcal V_e}(a,b)=1$.
Therefore, for every such balanced support $\mathcal{V}_e$ there exists a type-preserving labeled automorphism $\psi$
mapping $\mathcal{V}_e$ to its complement. Let $v\in \mathcal{V}_e$ and
$v'=\psi(v)\in (\mathcal A\,\sqcup\mathcal B)\setminus \mathcal{V}_e$.
By Lemma~\ref{lem:aut_equivariance}, the hard decisions at $v$ and $v'$ are identical.
Thus, under the isolation assumption, the decoder's hard decision cannot match the support of either 
$\mathcal{V}_e$, or $(\mathcal A\,\sqcup\mathcal B)\setminus \mathcal{V}_e$.
Finally, the number of such balanced supports in $\mathbf{H}'$ is
\begin{equation}
\underline{N}_{\mathrm{even}}
=2\binom{d_v}{d_v/2}n(r-1).
\label{eq:lowedEven}
\end{equation}

The case of odd $d_v$ follows  by considering the lowest even-weight temporal codewords, which are formed by concatenating two temporal codeword generators. Two temporal codeword generators can either share i) a data variable node, or ii) a measurement variable node.

i) The concatenated temporal codeword is
\[
\mathcal V=\mathcal U_1 \,\sqcup\, \mathcal U_2,
\]
where
\[
\mathcal U_1=\{a_1\}\,\sqcup\,\mathcal B_1,\qquad
\mathcal U_2=\{a_2\}\,\sqcup\,\mathcal B_2,
\]
with $|\mathcal B_1|=|\mathcal B_2|=d_v$ and hence
$|\mathcal U_1|=|\mathcal U_2|=d_v+1$.
Let
\[
w=\frac{d_v+1}{2}.
\]

By the partition inequalities
\eqref{eq:TAsimple}--\eqref{eq:TBsimple} specialized to $k=2$,
any weight-$(d_v+1)$ harmful candidate
$\mathcal V_e\subseteq\mathcal V$ must satisfy
\[
|\mathcal V_e\cap\mathcal U_i|\leq w,
\qquad i\in\{1,2\}.
\]
Since
\[
|\mathcal V_e|=d_v+1=2w
\]
and $\mathcal V=\mathcal U_1\sqcup\mathcal U_2$, it follows that
\begin{equation}
\label{eq:odd_balanced_split_revised}
|\mathcal V_e\cap\mathcal U_1|
=
|\mathcal V_e\cap\mathcal U_2|
=
w.
\end{equation}

We now count those supports $\mathcal V_e$ for which there exists a
type-preserving, syndrome-label-preserving automorphism
$\psi\in\mathrm{Aut}(G)$ such that
\[
\psi(\mathcal V_e)
=
\mathcal V\setminus\mathcal V_e.
\]

The induced isolated Tanner graph has exactly two variable nodes of
degree $d_v$, namely $a_1$ and $a_2$, whereas all nodes in $\mathcal B_1\cup\mathcal B_2$ have degree $2$. Therefore any
type-preserving automorphism must map $\{a_1,a_2\}$ to itself. Since
$\psi(\mathcal V_e)=\mathcal V\setminus\mathcal V_e$, we must have
\[
a_1\in\mathcal V_e
\iff
a_2\notin\mathcal V_e,
\]
and hence
\[
|\mathcal V_e\cap\{a_1,a_2\}|=1.
\]

Assume without loss of generality that
$a_1\in\mathcal V_e$ and $a_2\notin\mathcal V_e$.
Then \eqref{eq:odd_balanced_split_revised} implies
\[
|\mathcal V_e\cap\mathcal B_1|
=
w-1
=
\frac{d_v-1}{2},
\qquad
|\mathcal V_e\cap\mathcal B_2|
=
w
=
\frac{d_v+1}{2}.
\]

Index the nodes in the two sets as
\[
\mathcal B_1=\{b_{1,1},\ldots,b_{1,d_v}\},
\qquad
\mathcal B_2=\{b_{2,1},\ldots,b_{2,d_v}\},
\]
where $b_{1,j}$ and $b_{2,j}$ are connected through the corresponding
middle check node. Define
\[
S_1
:=
\{j:b_{1,j}\in\mathcal V_e\},
\qquad
S_2
:=
\{j:b_{2,j}\in\mathcal V_e\}.
\]
Then
\[
|S_1|=\frac{d_v-1}{2},
\qquad
|S_2|=\frac{d_v+1}{2},
\]
and therefore
\begin{equation}
|S_1|+|S_2|=d_v.
\label{eq:S1S2sum}
\end{equation}
For $\alpha,\beta\in\{0,1\}$, define
\[
C_{\alpha\beta}
:=
\left\{
j:
\mathbf 1\{j\in S_1\}=\alpha,\;
\mathbf 1\{j\in S_2\}=\beta
\right\}.
\]
Thus, for example, $j\in C_{10}$ means that
$b_{1,j}\in\mathcal V_e$ and
$b_{2,j}\notin\mathcal V_e$.

Since the four sets $C_{\alpha\beta}$ partition $[d_v]$,
\[
d_v
=
|C_{00}|+|C_{01}|+|C_{10}|+|C_{11}|.
\]
On the other hand, Eq.~\eqref{eq:S1S2sum} gives
\[
d_v
=
|S_1|+|S_2|
=
|C_{10}|+|C_{01}|+2|C_{11}|.
\]
Comparing the two expressions yields
\begin{equation}
|C_{00}|=|C_{11}|.
\label{eq:C00C11}
\end{equation}
We may therefore choose a permutation $\rho$ of $[d_v]$ such that
\[
\rho(C_{00})=C_{11},
\qquad
\rho(C_{11})=C_{00},
\]
while $\rho$ maps $C_{10}$ and $C_{01}$ onto themselves.  Define $\psi$ on the variable nodes by
\[
\psi(a_1)=a_2,
\qquad
\psi(a_2)=a_1,
\]
and
\[
\psi(b_{1,j})=b_{2,\rho(j)},
\qquad
\psi(b_{2,j})=b_{1,\rho(j)}.
\]
By construction,
\[
b_{1,j}\in\mathcal V_e
\iff
b_{2,\rho(j)}\notin\mathcal V_e,
\]
and
\[
b_{2,j}\in\mathcal V_e
\iff
b_{1,\rho(j)}\notin\mathcal V_e.
\]
Together with
$a_1\in\mathcal V_e$ and $a_2\notin\mathcal V_e$, this gives
\[
\psi(\mathcal V_e)
=
\mathcal V\setminus\mathcal V_e.
\]

It remains to verify that $\psi$ extends to a
syndrome-label-preserving automorphism of the induced Tanner graph.
For each index $j$, the corresponding part of the graph is a path
between $a_1$ and $a_2$ through $b_{1,j}$ and $b_{2,j}$.
The above map sends this path to the path indexed by $\rho(j)$ with its
orientation reversed. In particular, the middle check connecting
$b_{1,j}$ and $b_{2,j}$ is mapped to the middle check connecting
$b_{1,\rho(j)}$ and $b_{2,\rho(j)}$, so adjacency is preserved.

Moreover, since $\psi$ maps nodes in $\mathcal{V}_e$ to nodes in $\mathcal{V}_f$ and vice versa at both ends of every check,
the syndrome labels are unchanged. Thus, every support satisfying the above balanced conditions admits a
type-preserving syndrome-label-preserving automorphism satisfying
\[
\psi(\mathcal V_e)
=
\mathcal V\setminus\mathcal V_e.
\]

For the choice
$a_1\in\mathcal V_e$, $a_2\notin\mathcal V_e$, the number of such
supports is
\[
\binom{d_v}{\frac{d_v-1}{2}}
\binom{d_v}{\frac{d_v+1}{2}}.
\]
The case
$a_2\in\mathcal V_e$, $a_1\notin\mathcal V_e$
is symmetric. Hence the total number of such supports per temporal
codeword is
\[
2
\binom{d_v}{\frac{d_v-1}{2}}
\binom{d_v}{\frac{d_v+1}{2}}
=
2
\binom{d_v}{\frac{d_v-1}{2}}^{2}.
\]

There are $n(r-2)$ temporal codewords of this type in $\mathbf H'$.
Therefore, any syndrome-based isotropic decoder cannot 
estimate at least
\[
2
\binom{d_v}{\frac{d_v-1}{2}}^{2}
n(r-2)
\]
weight-$(d_v+1)$ temporal degenerate error patterns under the isolation
assumption.

ii) Let $d_v$ be odd and set \[
w:=\frac{d_v+1}{2}.
\]
Consider two temporal codeword generators $\mathbf c_1$ and $\mathbf c_2$ that share a measurement variable node.
The support of the resulting
weight-$(2d_v+2)$ temporal codeword decomposes as
\[
\mathcal V=\mathcal U_1\,\sqcup\,\mathcal U_2,
\qquad
|\mathcal U_1|=|\mathcal U_2|=d_v+1,
\]
where
\[
\mathcal U_i
=
\mathcal A_i\,\sqcup\,\mathcal B_i',
\qquad
|\mathcal A_i|=2,
\qquad
|\mathcal B_i'|=d_v-1.
\]
The two nodes in each $\mathcal A_i$ have degree $d_v$ in the
induced graph, whereas the nodes in $\mathcal B_i'$ have degree $2$.

Let $\mathcal V_e\subseteq\mathcal V$ be a weight-$(d_v+1)$
error support, and define
\[
x_i:=|\mathcal V_e\cap\mathcal U_i|,
\qquad i\in\{1,2\}.
\]
Since $\mathbf c_i$ contains $\mathcal U_i$ together with the shared
measurement variable, which does not belong to $\mathcal V_e$, we have
\[
|\mathcal V_e\cap\operatorname{supp}(\mathbf c_i)|=x_i.
\]
For $\mathcal V_e$ not to be reduced in weight by adding either
constituent temporal codeword, it is necessary that
\[
w(\mathbf e\oplus\mathbf c_i)\geq w(\mathbf e),
\qquad i\in\{1,2\}.
\]
Using $w(\mathbf c_i)=d_v+2$, this gives
\[
d_v+2-2x_i\geq0,
\]
and therefore, since $d_v$ is odd and $x_i$ is integer,
\[
x_i\leq\frac{d_v+1}{2}=w.
\]
On the other hand,
\[
x_1+x_2
=
|\mathcal V_e|
=
d_v+1
=
2w.
\]
Hence
\begin{equation}
\label{eq:odd_case2_balanced}
|\mathcal V_e\cap\mathcal U_1|
=
|\mathcal V_e\cap\mathcal U_2|
=
w.
\end{equation}

Define
\[
t_i
:=
|\mathcal V_e\cap\mathcal A_i|
\in\{0,1,2\}.
\]
Equation~\eqref{eq:odd_case2_balanced} then implies
\[
|\mathcal V_e\cap\mathcal B_i'|
=
w-t_i.
\]

We now characterize the supports $\mathcal V_e$ for which there exists
a syndrome-label-preserving automorphism
$\psi\in\mathrm{Aut}(G)$ satisfying
\[
\psi(\mathcal V_e)
=
\mathcal V\setminus\mathcal V_e.
\]

\paragraph{Necessity.}
The four nodes in
$\mathcal A_1\sqcup\mathcal A_2$
are the variable nodes of degree $d_v$ in the induced graph.
Hence every type-preserving automorphism preserves the set
$\mathcal A_1\sqcup\mathcal A_2$.
If
\[
\psi(\mathcal V_e)
=
\mathcal V\setminus\mathcal V_e,
\]
then the number of selected and unselected nodes within this degree
class must be equal. Therefore
\[
|\mathcal V_e\cap(\mathcal A_1\sqcup\mathcal A_2)|
=
|(\mathcal A_1\sqcup\mathcal A_2)\setminus\mathcal V_e|,
\]
which gives
\begin{equation}
\label{eq:t1t2sum2}
t_1+t_2=2.
\end{equation}

\paragraph{Sufficiency.}
Conversely, suppose that
\[
t_1+t_2=2.
\]
Then
\[
|\mathcal A_1\cap\mathcal V_e|
=
|\mathcal A_2\setminus\mathcal V_e|,
\]
and
\[
|\mathcal A_1\setminus\mathcal V_e|
=
|\mathcal A_2\cap\mathcal V_e|.
\]
Moreover, using \eqref{eq:odd_case2_balanced},
\begin{align*}
|\mathcal B_1'\cap\mathcal V_e|
&=
w-t_1,\\
|\mathcal B_2'\setminus\mathcal V_e|
&=
(d_v-1)-(w-t_2)\\
&=
w-2+t_2\\
&=
w-t_1,
\end{align*}
where the final equality follows from $t_1+t_2=2$. Similarly,
\[
|\mathcal B_1'\setminus\mathcal V_e|
=
|\mathcal B_2'\cap\mathcal V_e|.
\]

Hence there exist bijections
\[
\pi_A:
\mathcal A_1\cap\mathcal V_e
\longrightarrow
\mathcal A_2\setminus\mathcal V_e,
\]
\[
\pi_A':
\mathcal A_1\setminus\mathcal V_e
\longrightarrow
\mathcal A_2\cap\mathcal V_e,
\]
and
\[
\pi_B:
\mathcal B_1'\cap\mathcal V_e
\longrightarrow
\mathcal B_2'\setminus\mathcal V_e,
\]
\[
\pi_B':
\mathcal B_1'\setminus\mathcal V_e
\longrightarrow
\mathcal B_2'\cap\mathcal V_e.
\]
Combining $\pi_A$ and $\pi_A'$ defines a bijection
$\pi_{\mathcal A}:\mathcal A_1\to\mathcal A_2$, while combining
$\pi_B$ and $\pi_B'$ defines a bijection
$\pi_{\mathcal B}:\mathcal B_1'\to\mathcal B_2'$.
We define $\psi$ on $\mathcal U_1$ using these bijections and on
$\mathcal U_2$ using their inverses. Therefore,
\[
\psi(\mathcal V_e)
=
\mathcal V\setminus\mathcal V_e.
\]

It remains to show that this variable node permutation extends to an
automorphism of the induced Tanner graph. Within each
$\mathcal U_i$, the graph is obtained by subdividing the edges of
$K_{2,d_v-1}$ between $\mathcal A_i$ and $\mathcal B_i'$.
Thus, a check associated with a pair $(a,b)$,
$a\in\mathcal A_1$ and $b\in\mathcal B_1'$, can be mapped to the
corresponding check associated with
\[
\bigl(\pi_{\mathcal A}(a),\pi_{\mathcal B}(b)\bigr)
\]
in $\mathcal{U}_2$, and conversely using the inverse maps.

In addition, cancellation of the shared measurement variable leaves
two check nodes connecting corresponding nodes of
$\mathcal A_1$ and $\mathcal A_2$. Label
\[
\mathcal A_1=\{a_{1,1},a_{1,2}\},
\qquad
\mathcal A_2=\{a_{2,1},a_{2,2}\},
\]
so that the two shared checks connect $a_{1,j}$ and $a_{2,j}$,
$j\in\{1,2\}$. Any bijection
$\pi_{\mathcal A}:\mathcal A_1\to\mathcal A_2$ is either the identity
or a transposition with respect to these two indices. Since both
permutations are involutions, extending the action from
$\mathcal A_2$ by $\pi_{\mathcal A}^{-1}$ maps the two shared checks
onto one another. Hence adjacency is preserved throughout the induced
graph.

Finally, the syndrome labels are also preserved. Under the isolation
assumption, every check in the induced subgraph has two neighboring
variable nodes, and its syndrome label is determined by the parity of
their membership in $\mathcal V_e$.
Since $\psi$ maps  variables in $\mathcal{V}_e$ to variables in $\mathcal{V}_f$ and vice versa,
$\psi$ preserves every syndrome label.

Thus, condition \eqref{eq:t1t2sum2} is both necessary and sufficient
for the balanced supports considered here to admit a type-preserving,
syndrome-label-preserving automorphism satisfying
\[
\psi(\mathcal V_e)
=
\mathcal V\setminus\mathcal V_e.
\]
By Lemma~\ref{lem:aut_equivariance}, the hard decisions on paired
nodes are identical, and therefore an isotropic decoder cannot estimate any  of the corresponding degenerate error patterns under the isolation
assumption.

\paragraph{Exact count.}
Let
\[
x:=|\mathcal B_i'|=d_v-1.
\]
For each $t\in\{0,1,2\}$ with
$t_1=t$ and $t_2=2-t$, the number of supports satisfying the above
conditions is
\[
\binom{2}{t}
\binom{x}{w-t}
\binom{2}{2-t}
\binom{x}{w-(2-t)}.
\]
Summing over $t=0,1,2$ gives
\begin{align}
&2\binom{x}{w}\binom{x}{w-2}
+
4\binom{x}{w-1}^{2}
\nonumber\\
&\qquad =
2\binom{d_v-1}{\frac{d_v+1}{2}}
\binom{d_v-1}{\frac{d_v-3}{2}}
+
4\binom{d_v-1}{\frac{d_v-1}{2}}^{2}.
\label{eq:exact_odd_case2_revised}
\end{align}

There are
\[
\frac{1}{2}nd_v(d_c-1)(r-1)
\]
temporal codewords of this type in $\mathbf H'$.
Combining this family with case~i), any syndrome-based isotropic
decoder cannot correctly estimate at least
\begin{equation}
\begin{split}
\underline{N}_{\mathrm{odd}} 
={}& 
\left(
2\binom{d_v-1}{\frac{d_v+1}{2}}
\binom{d_v-1}{\frac{d_v-3}{2}}
+
4\binom{d_v-1}{\frac{d_v-1}{2}}^{2}
\right)\\
\frac{1}{2}nd_v(d_c-1)(r-1)
&+
2\binom{d_v}{\frac{d_v-1}{2}}^{2}n(r-2)
\label{eq:lowerBoundOdd}
\end{split}
\end{equation}
weight-$(d_v+1)$ temporal degenerate error patterns under the
isolation assumption.

It remains to verify that the enumeration counts distinct error supports rather than incidences between errors and temporal
codewords. Let $\mathbf c_{v,t}$ denote the elementary temporal
codeword generator associated with data variable $v$ and measurement interval $t$. Its support is
\[
\operatorname{supp}(\mathbf c_{v,t})
=
\left\{
x_v^{(t)},x_v^{(t+1)}
\right\}
\cup
\left\{
z_c^{(t)}:
c\in\mathcal N_{\mathbf H}(v)
\right\}.
\]

Suppose first that $d_v$ is even. Every error support counted above
within $\mathbf c_{v,t}$ contains exactly one of the two data
coordinates
\[
x_v^{(t)},\qquad x_v^{(t+1)},
\]
and $d_v/2\geq2$ measurement coordinates belonging to measurement interval $t$. The data coordinate identifies the underlying variable
$v$, while the measurement coordinates identify the interval $t$. Consequently, the supporting generator $\mathbf c_{v,t}$ is uniquely
determined by the error support.

Now let $d_v$ be odd and consider case~(i), in which
\[
\mathbf D
=
\mathbf c_{v,t}\oplus\mathbf c_{v,t+1}.
\]
Every error support counted in this case contains exactly one data coordinate, belonging to variable $v$, together with nonzero numbers
of measurement coordinates in each of the two consecutive intervals $t$ and $t+1$. The data coordinate therefore identifies $v$, and the
two occupied measurement intervals identify $t$. Hence the supporting temporal codeword $\mathbf D$ is again uniquely determined.

Finally, consider case~(ii), in which
\[
\mathbf D
=
\mathbf c_{u,t}\oplus\mathbf c_{v,t},
\qquad u\neq v,
\]
where $u$ and $v$ share a check $c_0$ in $\mathbf H$. Define
\[
A_u
=
\left\{x_u^{(t)},x_u^{(t+1)}\right\},
\qquad
A_v
=
\left\{x_v^{(t)},x_v^{(t+1)}\right\},
\]
and let
\[
t_u
=
\left|
\operatorname{supp}(\mathbf e)\cap A_u
\right|,
\qquad
t_v
=
\left|
\operatorname{supp}(\mathbf e)\cap A_v
\right|.
\]
The supports counted above satisfy
\[
t_u+t_v=2.
\]
If $t_u=t_v=1$, the two selected data coordinates identify the unordered pair $\{u,v\}$, while the selected measurement coordinates identify the interval $t$. Thus, $\mathbf D$ is uniquely determined.

It remains to consider $\{t_u,t_v\}=\{0,2\}$. Without loss of generality, suppose that $t_u=2$ and $t_v=0$. The two selected data
coordinates identify $u$ and the interval $t$. Let
\[
\mathcal S
=
\left\{
c:
z_c^{(t)}\in\operatorname{supp}(\mathbf e),
\quad
c\notin\mathcal N_{\mathbf H}(u)
\right\}.
\]
The set $\mathcal S$ consists precisely of the selected measurement neighbors of $v$ and satisfies
\[
|\mathcal S|
=
\frac{d_v+1}{2}
\geq2.
\]
If the same error support were associated with another variable $v'\neq v$, then every check in $\mathcal S$ would be adjacent to both
$v$ and $v'$. The variables $v$ and $v'$ would therefore share at least two checks, producing a $4$-cycle in the Tanner graph of
$\mathbf H$. This contradicts the assumption that its girth is larger than four. Hence $v$ is unique. The case $t_u=0$ and $t_v=2$ is
symmetric.

Moreover, the supports counted in cases~(i) and~(ii) cannot coincide: those in case~(i) contain exactly one data coordinate, whereas those
in case~(ii) contain exactly two. Therefore, no error support included in the preceding enumeration is counted under two distinct temporal
codewords. The quantities $\underline N_{\mathrm{even}}$ and $\underline N_{\mathrm{odd}}$ consequently count distinct error supports and provide valid lower bounds.
\end{IEEEproof}

\section*{Acknowledgment}
This work was supported by the National Science Foundation under grant CCF-2420424.

\bibliographystyle{IEEEtran}
\bibliography{bib/totalRefs.bib}

\end{document}